\PassOptionsToPackage{bookmarks, colorlinks=true, plainpages=false, citecolor=darkblue,
  linkcolor=chicago-maroon, anchorcolor=red, urlcolor=chicago-maroon}{hyperref}
\documentclass[Times1COL]{WileyNJDv5} 

\articletype{Research Article}%

\received{Date Month Year}
\revised{Date Month Year}
\accepted{Date Month Year}
\journal{Journal}
\volume{00}
\copyyear{2025}
\startpage{1}

\usepackage{bm}
\usepackage{mathtools}
\usepackage{amssymb}
\usepackage{amsmath}
\usepackage{bbm}  
\usepackage{graphicx}
\usepackage{orcidlink}
\usepackage{threeparttable}
\usepackage{subcaption} 
\usepackage{xcolor} 
\usepackage{float}  
\usepackage[figuresright]{rotating}

\usepackage{enumitem}

\newcommand{\Rev}[1]{{\color{red}{#1}}}

\def\adj{\normalfont\textrm{adj}}
\def\adjcal{\normalfont\textrm{adj,2}\mbox{-}\textrm{cal}}
\def\db{\normalfont\textrm{db}}
\def\unadj{\normalfont\textrm{unadj}}

\def\gob{\normalfont\textrm{gOB}}
\def\gbcal{\normalfont\textrm{gOB\mbox{-}cal}}
\def\cal{\mathrm{cal}}
\def\loo{\normalfont\textrm{loo}}
\def\loocal{\normalfont\textrm{loo}\mbox{-}\textrm{cal}}
\def\loocalout{\normalfont\textrm{loo}\mbox{-}\textrm{cal}\mbox{-}\textrm{out}}

\def\argmin{{\arg\hspace*{-0.5mm}\min}}
\def\JASA{{\normalfont\textrm{JASA}}}
\def\JASACal{{\normalfont\textrm{JASA\mbox{-}cal}}}

\def\OLS{\normalfont\textrm{OLS}}
\def\formbeta{\textbf{formulation~($\bbeta$)}}
\def\formmu{\textbf{formulation~($\bm{\mu}$)}}
\def\bdelta{\bm{\delta}}
\def\bx{\mathbf{x}}
\def\bX{\mathbf{X}}
\def\diff{\mathrm{d}}
\def\bdelta{\bm{\delta}}
\def\bbR{\mathbb{R}}
\def\bbZ{\mathbb{Z}}
\def\bbX{\mathbb{X}}
\def\bbH{\mathbb{H}}
\def\bbP{\mathbb{P}}
\def\bbE{\mathbb{E}}
\def\var{\mathrm{var}}

\def\bZ{\mathbf{Z}}
\def\bSigma{\mathbf{\Sigma}}
\def\bT{\mathbf{T}}
\def\bY{\mathbf{Y}}
\def\bO{\mathbf{O}}

\def\bbeta{\bm{\beta}}
\def\bmX{\bm{X}}

\usepackage{xcolor}
\usepackage[numbers,super,sort&compress]{natbib}
\setcitestyle{comma,numbers,super,open={(},close={)}} 
\makeatletter 
\renewcommand\@biblabel[1]{#1.} 
\makeatother

\definecolor{chicago-maroon}{RGB}{128,0,0}
\definecolor{darkblue}{rgb}{0.0, 0.0, 0.55}

\begin{document}

\title{Adjusting for Many Covariates in Randomized Clinical Trials with GLMs: Bias Reduction by Jackknife and Practical Guidance}


\author[1]{{Sihui Zhao\orcidlink{0009-0008-7867-0991}}}

\author[2,3]{{Xinbo Wang\orcidlink{0009-0008-2917-1563}$^{\dagger}$}}

\author{Yongyong Ren\orcidlink{0000-0001-9217-3483}$^{\textbf{3,4}}$} 

\author{Hongyu Zhao\orcidlink{0000-0003-1195-9607}$^{\textbf{5}}$} 

\author{Hui Lu\orcidlink{0000-0001-8347-0830}$^{\textbf{2,3,6}}$} 

\author[1,3,7]{Lin Liu\orcidlink{0000-0002-9883-7962}$^{\dagger}$}

\authormark{ZHAO \textsc{et al.}} 
\titlemark{Covariate adjustment with GLM using jackknife}

\address[1]{\orgdiv{School of Mathematical Sciences}, \orgname{Shanghai Jiao Tong University}, \orgaddress{\state{Shanghai}, \country{China}}}

\address[2]{\orgdiv{Department of Bioinformatics and Biostatistics, School of Life Sciences and Biotechnology}, \orgname{Shanghai Jiao Tong University}, \orgaddress{\state{Shanghai}, \country{China}}}

\address[3]{\orgdiv{SJTU-Yale Joint Center for Biostatistics and Data Science, Technical Center for Digital Medicine, National Center for Translational Medicine}, \orgname{Shanghai Jiao Tong University}, \orgaddress{\state{Shanghai}, \country{China}}}


\address[4]{\orgdiv{Institute of Bioinformatics}, \orgname{Shanghai Academy of Experimental Medicine}, \orgaddress{\state{Shanghai}, \country{China}}}

\address[5]{\orgdiv{Department of Biostatistics}, \orgname{Yale University}, \orgaddress{\street{300 George Street}, \city{New Haven}, \postcode{CT 06511}, \country{United States}}}

\address[6]{\orgdiv{School of Medicine}, \orgname{Shanghai Children's Hospital, Shanghai Jiao Tong University}, \orgaddress{\state{Shanghai}, \country{China}}}

\address[7]{\orgdiv{Institute of Natural Sciences, MOE-LSC},
\orgname{Shanghai Jiao Tong University}, \orgaddress{\state{Shanghai}, \country{China}}}

\corres{$^{\dagger}$ Co-corresponding authors and equal contribution \\
Xinbo Wang \email{\href{cinbo_w@sjtu.edu.cn}{cinbo\_w@sjtu.edu.cn}}; \\
Lin Liu \email{\href{linliu@sjtu.edu.cn}{linliu@sjtu.edu.cn}}}



\abstract[Abstract]{Adjusting for baseline covariates has become standard practice in analyzing randomized clinical trials. In the low-dimensional setting, it is well understood that covariate adjustment through a parametric working model can sometimes be more efficient than the unadjusted difference-in-mean estimator. However, when the number of adjusted covariates is large relative to the sample size $n$, a na\"{i}ve adjustment may introduce excessive bias, leading to invalid statistical inference. The current literature that tries to resolve this issue is either limited to linear working models or relies on sample splitting, which may raise concerns about the replicability of RCT analyses. In this paper, we devise a novel jackknife-based approach to covariate adjustment through generalized linear models (GLMs), which we term as \underline{JA}ckknife \underline{S}core-based \underline{A}djustment (\JASA), together with its calibrated version \JASACal. By employing a nuanced jackknife strategy, \JASA{} and \JASACal{} avoid sample splitting and make full use of the data, while ensuring that the bias of \JASA{} or \JASACal{} is still negligible even when the number of adjusted covariates is large compared to $n$. \JASA{} also encompasses state-of-the-art adjusted estimators through linear working models as a special case. Through extensive simulation experiments and a real data analysis, we demonstrate that \JASA{} or \JASACal{} can adjust for a much greater number of covariates than existing benchmarks. These empirical results also shed some new light on practical guidance for covariate adjustment with GLMs. Both \JASA{} and \JASACal{} have been incorporated into our R package \href{https://cran.r-project.org/web/packages/HOIFCar/index.html}{\texttt{HOIFCar}} available from CRAN. The package \href{https://cran.r-project.org/web/packages/HOIFCar/index.html}{\texttt{HOIFCar}} is developed to serve as a user-friendly option for covariate adjustment in RCTs, in particular when practitioners hope to adjust for a large number of covariates.}

\keywords{Randomized Clinical Trials, Covariate Adjustment, Generalized Linear Models, Jackknife, Calibration}

\maketitle

\renewcommand\thefootnote{}
\footnotetext{Data used in preparation of this article were obtained from the University of California, San Diego Alzheimer's Disease Cooperative Study. Consequently, ADCS Core Directors contributed to the original ADCS studies and/or provided data, but did not participate in the analyses or the writing of this manuscript/report. 

The authors are grateful to the two anonymous reviewers for their highly constructive comments that have helped us significantly improve the paper. The authors also thank \href{https://maweiruc.github.io/}{Wei Ma} and Xin Zhang (Pfizer) for helpful discussions and thank \href{https://research.ugent.be/web/person/muluneh-alene-addis-0/en}{Muluneh Alene Addis}, \href{https://research.ugent.be/web/person/kelly-van-lancker-0/en}{Kelly Van Lancker}, and \href{https://users.ugent.be/~svsteela/}{Stijn Vansteelandt} for very encouraging feedback on the practical performance of our proposed method.}

\allowdisplaybreaks

\section{Introduction}
\label{sec:intro}

Covariate adjustment methods for randomized clinical trials (RCTs) have been a very popular research topic in biostatistics and econometrics in the past decade \citep{leon2003semiparametric, freedman2008regressiona, freedman2008regressionb, zhang2008improving, moore2009covariate, lin2013agnostic, wu2018loop, lei2021regression, ma2022regression, guo2023generalized, ye2023toward, su2023decorrelation, cohen2024no, van2024covariate, van2026automated, zhao2024hoif, chang2024exact, lu2025debiased, jiang2025adjustments, bannick2025general, abadie2025unbiased}, due to (1) the unique impact of evidence from experimental data on policy or decision making and (2) a growing number of trials collecting multidimensional (baseline) covariates. As a testament to the progress made to date, the U.S. Food and Drug Administration (FDA) has recently issued practical guidelines on covariate adjustment in RCTs \citep{FDA2023}. Apparently, there seems to be a clear path for practitioners to follow when estimating treatment effects from RCT data in which covariates are  collected, in addition to the treatment assignments and outcomes of interest.

Nevertheless, as already pointed out in the FDA guidelines, it is clear how to adjust for covariates in RCTs only when the number of adjusted covariates, denoted by $p$, is small relative to the sample size $n$. When $p$ is relatively large, the guidelines \citep{FDA2023} advise that
\begin{quote}
\emph{``... sponsors should discuss their proposal with the relevant review division if the number of covariates is large relative to the sample size or if proposing to adjust for a covariate with many levels (e.g., study site in a trial with many sites).''}
\end{quote}

To address this gap, several recent works \citep{chang2024exact, lu2025debiased, zhao2024hoif, gu2025assumption} developed novel adjusted estimators for the average treatment effect (ATE), which are based on $U$-statistics and avoid sample splitting \citep{chernozhukov2018double}. To avoid confusion, throughout this paper, we always refer to ``sample splitting'' as splitting the sample into $M$ folds, with $M$ bounded (e.g. $M = 2, 3, 4$ or $5$). In particular, we do not interpret $U$-statistics or related concepts such as leave-one-out (LOO)/jackknife as sample splitting here. Importantly, these estimators are never less efficient and sometimes more efficient than the unadjusted estimator, without any structural assumptions (such as low complexity) on the outcome regression (OR) model, even when $p$ diverges with $n$ up to $p = o (n)$. More recently, interpreting the above $U$-statistic estimators as a form of LOO or jackknife ordinary least squares (OLS) regression-adjusted estimators \citep{zhao2024hoif, chiang2026regression}, \citet{abadie2025unbiased} proposed to replace the LOO-OLS regression adjustment by LOO-ridge regression, taking advantage of the numerical stability of ridge regression. In the implementation of our estimator in \citet{zhao2024hoif}, we actually use the pseudo-inverse of the sample Gram matrix also for improved numerical stability in finite samples. In particular, the pseudo-inverse of a matrix is equivalent to its ridgeless inverse \citep{hastie2022surprises}. Similar to \citet{chang2024exact}, \citet{abadie2025unbiased} focused on the setting with $p$ fixed; when $p = o (n)$, by choosing the ridge penalty $\lambda$ appropriately, theoretical guarantees similar to the LOO-OLS estimators are also attainable. Although these methods impose no structural assumptions on the OR, a common theme is their use of a linear working model to approximate the OR, and efficiency improvement can be attributed to a linear projection argument by least squares. We mention in passing that, in the econometrics literature, \citet{cattaneo2018inference, cattaneo2019two} have also considered using jackknife procedures to debias $M$/$Z$-estimators in the presence of high-dimensional nuisance parameters, with $p$ up to $O (\sqrt{n})$.

When using a linear working model, the amount of efficiency gain hinges on the strength of the linear association between the outcome $Y$ and the baseline covariates $X$ \citep{zhao2024hoif, lu2025debiased}. When the outcome $Y$ takes values in a constrained subset of $\bbR$, say $\{0, 1\}$ or the natural numbers $\bbZ$, the linear association may be weak, and the linear working model is also not natural for practitioners. When this is the case, practitioners can resort to generalized linear models (GLMs) \citep{nelder1972generalized} to approximate the OR, hoping to capture certain nonlinear associations to improve efficiency. 
When the number of covariates $p$ is small relative to $n$, \citet{guo2023generalized} developed the so-called generalized Oaxaca-Blinder (gOB) estimator for ATE, using canonical GLMs as the working model. \citet{cohen2024no} added an extra calibration step after fitting the GLM for the gOB estimator, again leveraging the linear projection structure of least squares to improve efficiency. When the dimension $p$ is moderately large, for instance $p \gtrsim \sqrt{n}$, estimation of a $p$-dimensional nuisance parameter can induce a bias of order $n^{-1/2}$ or greater in the estimation of the ATE, thereby invalidating standard inferential procedures such as Wald-type confidence intervals. To get a sense of how severe the problem could become for large $p$ when fitting GLMs using existing methods \citep{ye2023toward, guo2023generalized, cohen2024no}, we point to Figure~\ref{fig:hist_binomial} showing the results of a simulation study, in which the outcome $Y$ is binary, treatment effect is homogeneous, $n = 400$ and $p$ increases from $5$ to $60$. Here, the gOB calibration estimator\citep{guo2023generalized, cohen2024no} (denoted as $\hat{\tau}_{\gbcal}$) has a small bias when $p$ is small ($p = 5$, $n = 400$), but it already starts to fail to control the bias as $p$ increases further, even if the working model is correctly specified. However, our new estimator (denoted as $\hat{\tau}_{\JASACal}$) to be defined later consistently exhibits nearly zero bias across all values of $p$ examined in this experiment. Further details of this simulation study can be found in Section~\ref{sec:sim binary}.

A commonly proposed remedy to correct the above bias is sample splitting and cross-fitting \citep{bannick2025general, van2026automated}, which aims to mitigate bias by decoupling nuisance estimation from target parameter estimation. However, sample splitting presents notable challenges in the context of RCTs. First, the sample sizes available in many RCTs are often too limited to support effective sample splitting without a substantial loss of precision. More importantly, in finite samples, inference can be highly sensitive to the particular splitting scheme employed: different random splits may yield qualitatively different estimates and conclusions. This instability raises serious concerns about the replicability and interpretability of RCT analysis. As a result, we opt for developing procedures that (1) completely bypass sample splitting and cross-fitting, and (2) use as much information in the data as possible without introducing a significant amount of bias \citep{huo2025unified}.

\begin{figure}[htbp]
    \centering
    \includegraphics[width=0.7\linewidth]{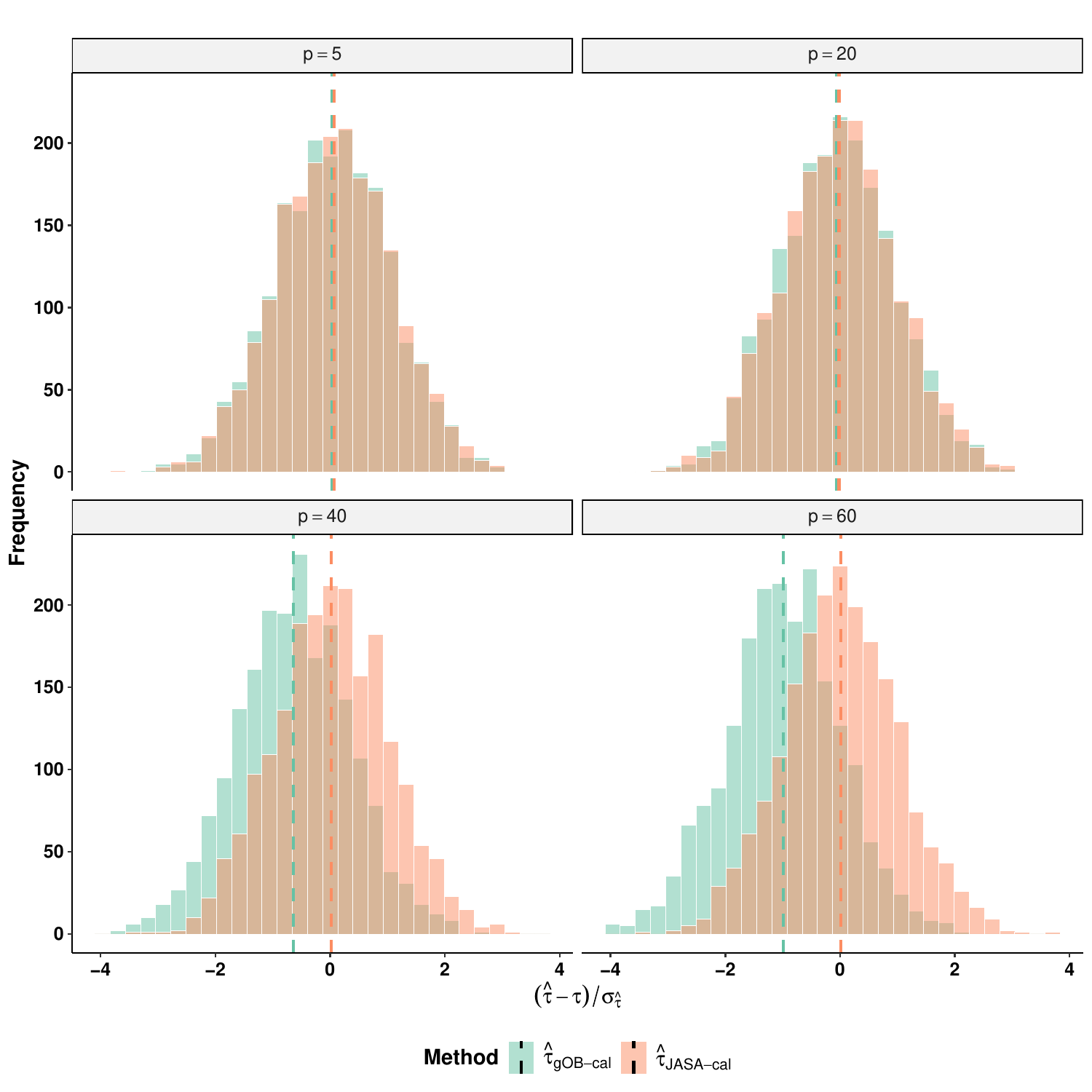}
    \caption{Distributions of scaled bias $(\hat{\tau} - \tau)/\sigma_{\hat{\tau}}$ for binary outcomes generated under a logistic regression model and homogeneous treatment effects, comparing the generalized Oaxaca‑Blinder calibration estimator $\hat{\tau}_{\gbcal}$ with the proposed $\hat{\tau}_{\JASACal}$. Results are shown for sample size $n=400$ and varying numbers of adjustment covariates $p \in \{5, 20, 40, 60\}$, corresponding to dimension‑to‑sample ratios $\alpha \coloneqq p/n \in \{0.0125, 0.05, 0.1, 0.15\}$. Each histogram summarizes 2000 Monte Carlo replications, with vertical dotted lines indicating the mean scaled bias for each estimator. The data‑generating process is detailed in Section~\ref{sec:sim binary}.}
    \label{fig:hist_binomial}
\end{figure}

\subsection*{Main contributions}

In this paper, to address the potentially large bias of the existing adjusted estimators (as shown in Figure~\ref{fig:hist_binomial} for the gOB calibration estimator $\hat{\tau}_{\gbcal}$), we develop a novel methodology to estimate ATE in RCTs. In our new method, 
we adjust for baseline covariates by fitting, within each arm, a working GLM through a jackknife \citep{efron1982jackknife} or leave-one-out (LOO)  bias-reduction scheme of the canonical score equation (see Section~\ref{sec:method} for details).
To our knowledge, the strategy of jackknifing score equations can be traced back to \citet{reeds1978jackknifing}. Importantly, deviating from recent related work \citep{su2023decorrelation, lu2025Cconditional}, our proposed procedure aims to achieve several desiderata:
\begin{itemize}
\item[(1)] The estimator has a bias of order $o (n^{-1 / 2})$.
\item[(2)] The estimator avoids sample splitting and cross-fitting \citep{schick1986asymptotically, ayyagari2010applications, robins2013new, chernozhukov2018double, newey2018cross} for the reproducibility of the results.
\item[(3)] The estimator places data efficiency at the center in the sense that the procedure uses as much information from the data as possible, without incurring a substantial bias.
\end{itemize}
As we will see, our procedure bypasses sample splitting and cross-fitting when fitting GLMs. Furthermore, it differs from standard jackknife-based adjustment methods \citep{wager2016high, cohen2024no} by using more information from the data, in the sense that we perform jackknife/LOO bias reduction only when it is needed; see the comments immediately after \eqref{loo score} later in Section~\ref{sec:ee_jasa}. While achieving all of the above desiderata, our procedure still has negligible bias (of order $o (n^{-1 / 2})$) even when $p$ is large compared to $n$. 

We refer to our new method as \underline{JA}ckknife \underline{S}core-based \underline{A}djustment, or \JASA{} for short. Specifically, due to the similarity between $U$-statistics and jackknife/LOO, \JASA{} in fact includes the aforementioned state-of-the-art adjusted estimators with OR fit by linear working models using $U$-statistics as special cases \citep{chang2024exact, zhao2024hoif, lu2025debiased, abadie2025unbiased}. The adjusted estimators based on \JASA{} are always unbiased for Bernoulli randomization experiments (BRE) and have negligible bias for completely randomized experiments (CRE) compared to the sampling variability when $n$ is sufficiently large.

Furthermore, as in \citet{cohen2024no}, to obtain more efficiency improvement when the working GLM may not be close to the true OR, one can combine \JASA{} with an additional calibration step, by treating the fitted OR as new covariates to be adjusted for in a linear working model. We coin the calibrated version of \JASA{} as \JASACal{}. We establish the $\sqrt{n}$-CAN properties of \JASA{} and \JASACal{} under relatively weak assumptions on the data generating process and the covariate dimension $p$. However, we want to emphasize that the theoretical contribution is not the main focus of this paper and we conjecture that the current theoretical results can be improved. Our main contribution is methodological: The actual implementation of \JASA{} or \JASACal{} takes into account various issues encountered in applications to strive for an accessible and stable method in practice. One potential disadvantage of our method, as detailed in Remark~\ref{rem:computation}, is the relatively long runtime for binary outcomes. It is still an open question how to accelerate the computation, but we argue that runtime should be a secondary concern in high-stakes applications such as RCTs.

Finally, given the now relatively long list of adjusted estimators at our disposal, we carry out extensive simulation experiments to investigate their finite sample performance. These empirical results shed some new light on the practical guidance regarding in what scenarios certain adjusted estimators should be recommended in practice. We also consider various practical issues, including but not limited to: 
\begin{enumerate}[label = \roman*.]
\item In what scenario is it  important to use \JASA{} or \JASACal{} or $U$-statistic-based estimators proposed recently \citep{chang2024exact, zhao2024hoif, lu2025debiased, gu2025assumption} to ensure a negligible bias when adjusting for a large number of covariates?

\item When is fitting the OR with GLMs  better than with linear working models, as the latter is guaranteed to improve efficiency even without calibration?

\item How to solve the score equations in practice and how to perform calibration in our new framework?
\end{enumerate}
In this paper, our overarching goal is to provide a practical method with stable finite-sample performance, for safely adjusting for covariates whose dimension may be large relative to the sample size without incurring substantial bias that could be a major threat to valid inference. As we will see later in the paper, our theoretical guarantees concerning the asymptotic variance of the new adjusted estimators do not entirely cover the actual practical implementation, but we provide comprehensive empirical evidence for its competitive finite-sample performance compared to existing benchmarks. A more complete theoretical justification is an important open problem that we are still working on in an ongoing work.

\begin{remark}
\label{rem:car}
One could also choose to balance covariates in the design stage, through stratified or general covariate-adaptive randomization procedures. However, most such procedures in common use incorporate only a limited number of discrete stratification factors and therefore do not balance the full set of collected baseline covariates \citep{ma2022regression, ye2023toward}. By contrast, covariate
adjustment in the analysis stage can incorporate continuous covariates together with other prognostic baseline covariates in a more flexible manner; this practice is also encouraged by regulatory guidance \citep{FDA2023}. Before our work, it is not very clear what to do if a trial collects many covariates beyond those used in stratification, and we provide a solution that possibly fills a gap in the current guideline.
\end{remark}

\subsection*{Notation}
Before proceeding, we introduce some common notations used throughout the paper. We reserve $g (\cdot)$ for a monotonically increasing and three-times differentiable (inverse) link function of the working model for OR. We assume that $g( \cdot)$ is known throughout the paper.

We let $\bm{X} = (X_{1}, \ldots, X_{p})^{\top} \in \bbR^{p}$ denote the random vector of baseline covariates, and let $\bm{Z} = (1, \bm{X}^{\top})^{\top}$ denote the augmented covariates vector including an intercept. Let $[n]$ denote the index set $\{1,\ldots, n\}$. When concatenating over all the samples of size $n$, let $\bbX \coloneqq (\bm{X}_{1}, \ldots, \bm{X}_{n})^{\top} \in \bbR^{n \times p}$, where $\bbX$ is assumed to be centered. Let $\bbZ \coloneqq (\mathbf{1}_{n}, \bbX) \in \bbR^{n \times (p + 1)}$  be the design matrix, where $\mathbf{1}_{n} \coloneqq (1, \ldots, 1)^{\top} \in \bbR^{n}$ is the vector of ones. Let $\bSigma = \bbE[\bm{Z}\bm{Z}^{\top}]$ be the population Gram matrix of $\bm{Z}$. Define the sample Gram matrices $\hat{\bSigma}\coloneqq n^{-1}\bbZ^\top \bbZ$, $\hat{\bSigma}^\star \coloneqq n^{-1} \bbX^{\top} \bbX$, $\hat{\bSigma}^{\star}\coloneqq n^{-1}\bbX^{\top}\bbX$, and let $\bbH \coloneqq \bbZ \left( \bbZ^\top \bbZ \right)^{-1} \bbZ^\top $ and $\bbH^{\star}\coloneqq\bbX(\bbX^{\top}\bbX)^{-1}\bbX^{\top}$ denote the corresponding hat matrices of $\bbZ$ and $\bbX$, respectively. Let $\bT \coloneqq (T_{1}, \ldots, T_{n})^{\top}$ be the vector of treatment assignments, where $T_{i} \in \{0, 1\}$, and define $n_t=\sum_{i=1}^{n}\mathbbm{1}\{T_i=t\}$ as the number of units assigned to the treatment group $t\in \{0, 1\}$. 
For each $t$, let $\bSigma_{t} \coloneqq \bbE[\bm Z \bm Z^{\top} \mid T = t]$ denote the population Gram matrix of $\bm Z$ within treatment arm $t$, and let $\hat{\bSigma}_{t} \coloneqq n_{t}^{-1} \sum_{i:T_i=t}\bm{Z}_i\bm{Z}_{i}^\top$ denote its sample analogue. We use $\bbH_{t} = \bbZ (\sum_{i:T_{i}=t} \bm Z_{i} \bm Z_{i}^{\top} )^{-1} \bbZ^{\top}$ to denote the hat matrix corresponding to the treatment-arm-specific design matrix. 
Finally, for any square matrix $\mathbf{A}$, we let $\lambda_{\min} (\mathbf{A})$ and $\lambda_{\max} (\mathbf{A})$, respectively, denote the smallest and largest eigenvalues of $\mathbf{A}$. 

\subsection*{Organization}
The remainder of the paper is organized as follows. Section~\ref{sec:setup} sets the stage by introducing the mathematical setup and reviewing existing related methodologies. We then introduce our new estimators in the most general form in Section~\ref{sec:method}, with more specialized forms given in Section~\ref{sec:examples} for linear models with continuous and unconstrained outcomes, logistic linear models with binary outcomes, and Poisson log-linear models with count outcomes. In Section~\ref{sec:sim} we conduct comprehensive synthetic experiments to evaluate the performance of our new estimators, followed by an analysis of a real RCT in Section~\ref{sec:real}. Section~\ref{sec:conclusion} concludes the paper.

We have implemented \JASA{} and \JASACal{} in our R package \href{https://cran.r-project.org/web/packages/HOIFCar/index.html}{\texttt{HOIFCar}}, which is publicly available on CRAN. One of the main goals of our work is to make the software numerically stable and user-friendly to facilitate practical implementation.

\section{Setup and A Review of Some Existing Approaches}
\label{sec:setup}

In this paper, we adopt the super-population framework and consider the simple Bernoulli randomization experiments, thereby implying the following assumptions on the observed sample:
\begin{equation}
\label{iid}
\{\bm{O}_{i} \coloneqq \left( \bm{X}_{i}, T_{i}, Y_{i} \right)\}_{i = 1}^{n} \overset{\rm i.i.d.}{\sim} \bbP,
\end{equation}
and particularly
\begin{equation}
\label{random}
T_{i} \overset{\rm i.i.d.}{\sim} \text{Bernoulli} (\pi_{1}),
\end{equation}
where the probability of assigning to the treatment group, $\pi_{1}$, is strictly bounded away from $0$ and $1$ and we let $\pi_{0} \coloneqq 1 - \pi_{1}$. 
We let $\hat{\pi}_{t} \coloneqq n_t/n$ be the empirical proportion of units assigned to the treatment group $t$ for $t \in \{0, 1\}$. We also impose the following regularity assumptions on the observed data distribution, which are commonly assumed in the literature \citep{lin2013agnostic}.

\begin{assumption}
\label{as:regularity}
\leavevmode
\begin{enumerate}[label = (\roman*)]
\item Each dimension of $\bm{X}$ is strictly bounded away from $-\infty$ and $+\infty$;

\item $Y$ has bounded fourth moment;

\item There exist absolute constants $0<b_{1}\leq b_{2}<\infty$ such that  $b_{1} \leq \lambda_{\min} \{\bSigma_{t}\} \leq \lambda_{\max} \{\bSigma_{t}\} \leq b_{2}$ for $t \in \{0, 1\}$.
\end{enumerate}
\end{assumption}

We are interested in constructing point and interval estimators for the ATE $\tau \coloneqq \bbE [Y (1)] - \bbE [Y (0)]$, where we adopt the potential outcome notation to define ATE and we make the standard consistency assumption to connect the observed outcome and the potential outcomes: $Y = \sum_{t \in \{0, 1\}} \mathbbm{1} \{T = t\} Y (t)$. We denote the true OR as $\mu_{t} (\bm{x}) \coloneqq \bbE [Y (t) \mid \bm{X} = \bm{x}, T = t]$. We are now ready to review several popular estimators of $\tau$. In this review, we only focus on treatment effect estimators without selecting the adjusted covariates from the data \citep{van2026automated, liu2026coadvise}. Below we use the estimated treatment assignment probability $\hat{\pi}_{1}$. In the theoretical analysis, we instead consider the estimators using the true treatment assignment probability $\pi_{1}$ to simplify the proof.

\begin{itemize}
\item The unadjusted estimator:
\begin{align*}
\hat{\tau}_{\unadj} \coloneqq \bar{Y}_{1} - \bar{Y}_{0} = \frac{1}{n} \sum_{i = 1}^{n} \frac{T_{i} Y_{i}}{\hat{\pi}_1} - \frac{1}{n} \sum_{i = 1}^{n} \frac{(1 - T_{i}) Y_{i}}{1-\hat{\pi}_1},
\end{align*}
which is essentially the inverse probability weighting (IPW) estimator of $\tau$. $\hat{\tau}_{\unadj}$ is $\sqrt{n}$-CAN without imposing any assumption on the OR or the dimension $p$, but does not take advantage of potentially rich information in the covariates $\bbX$.

\item The adjusted estimator based on fitting a linear working model by OLS \citep{ma2022regression, ye2023toward}:
\begin{align*}
\hat{\tau}_{\OLS} \coloneqq & \ \Big\{ \frac{1}{n_{1}} \sum_{i = 1}^{n} T_{i} (Y_{i} - \hat{\bbeta}_{1}^{\top} \bm{Z}_{i}) + \frac{1}{n} \sum_{i=1}^{n} \hat{\bbeta}_{1}^{\top} \bm{Z}_{i} \Big\} - \Big\{ \frac{1}{n_{0}} \sum_{i = 1}^{n} ( 1 - T_{i} ) (Y_{i} - \hat{\bbeta}_{0}^{\top} \bm{Z}_{i}) + \frac{1}{n} \sum_{i=1}^{n} \hat{\bbeta}_{0}^{\top} \bm{Z}_{i} \Big\} \\
= & \ \frac{1}{n}\sum_{i=1}^{n}\Big\{\frac{T_iY_i}{\hat{\pi}_1} + \sum_{j=1}^{n}\Big(1 - H_{1,ij}\Big)\frac{T_j Y_j}{\hat{\pi}_1}\Big\} - \frac{1}{n}\sum_{i=1}^{n}\Big\{\frac{(1-T_i)Y_i}{1-\hat{\pi}_1} + \sum_{j=1}^{n}\Big(1 - H_{0,ij}\Big)\frac{(1-T_j) Y_j}{1-\hat{\pi}_1}\Big\},
\end{align*} 
where $\hat{\bbeta}_{t}={\hat{\bSigma}_{t}}^{-1}\frac{1}{n}\sum_{i=1}^{n}\bm{Z}_i\frac{\mathbbm{1} \{T_{i} = t\} Y_{i}}{\hat{\pi}_t}$. $\hat{\tau}_{\OLS}$ can be interpreted as the augmented inverse probability weighting (AIPW) estimator of $\tau$ \citep{robins1994estimation}, with the OR $\mu_{t}(\bm{x})$ estimated by a linear working model within the treatment group $t$ using OLS. \citet{jiang2025adjustments} proved that $\hat{\tau}_{\OLS}$ is $\sqrt{n}$-CAN and is guaranteed to be never less efficient than $\hat{\tau}_{\unadj}$ without any assumption on the true OR $\mu_{t}(\bm{x})$, if $p = o (\sqrt{n})$. When $\sqrt{n} \lesssim p \ll n$, although the asymptotic variance of $\hat{\tau}_{\OLS}$ is still smaller than that of $\hat{\tau}_{\unadj}$, it may have non-negligible bias and fail to be $\sqrt{n}$-CAN. An exception is when the linear working model happens to be the true OR model \citep{jiang2025adjustments}.

\item The bias-corrected estimator based on $U$-statistics \citep{chang2024exact, zhao2024hoif, lu2025debiased, gu2025assumption, liu2017semiparametric}: $\hat{\tau}_{\adj, 2}$ and $\hat{\tau}_{\adj, 2}^{\dag}$, where $\hat{\tau}_{\adj, 2}^{\dag}$ differs from $\hat{\tau}_{\OLS}$ in that $\hat{\bbeta}_{t}$ is replaced by 
\begin{equation}
\label{loo-beta}
\hat{\bbeta}_{t}^{(-i)} \coloneqq \hat{\bSigma}^{-1} \frac{1}{n}\sum_{j \neq i} \bm{Z}_j \frac{\mathbbm{1} \{T_{j} = t\}( Y_{j} - \bar{Y}_{t} )}{\hat{\pi}_{t}}
\end{equation}
and $\hat{\tau}_{\adj, 2}$ simply does not center $Y_{i}$ by $\bar{Y}_{t}$. Both $\hat{\tau}_{\adj, 2}$ and $\hat{\tau}_{\adj, 2}^{\dag}$ are $\sqrt{n}$-CAN and are guaranteed to be never less efficient than $\hat{\tau}_{\unadj}$ without any assumption on the true OR $\mu_{t}(\bm{x})$, if $p = o (n)$.

\item The generalized Oaxaca-Blinder (gOB) estimator \citep{guo2023generalized}: 
\begin{align*}
\hat{\tau}_{\gob} \coloneqq \frac{1}{n} \sum_{i=1}^{n} \left( \hat{Y}_{i} (1) - \hat{Y}_i (0) \right),
\end{align*}
where, for $t \in \{0, 1\}$, $\hat{Y}_{i} (t) \coloneqq \mathbbm{1} \{T_{i} = t\} Y_{i} + \mathbbm{1} \{T_{i} \neq t\} \hat{\mu}_{t}(\bm{Z}_i)$ and $\hat{\mu}_{t}$ is an estimator of $\mu_{t}(\cdot)$ obtained by fitting the MLE of a working canonical GLM in group $t$. Although not immediately obvious, the gOB estimator $\hat{\tau}_{\gob}$ is essentially the AIPW estimator of $\tau$:
\begin{align*}
\hat{\tau}_{\gob} \equiv \frac{1}{n}\sum_{i=1}^{n}\Big\{\frac{T_i\{Y_i - \hat{\mu}_{1}(\bm{Z}_i)\}}{\hat{\pi}_1} + \hat{\mu}_{1}(\bm{Z}_i)\Big\} - \frac{1}{n}\sum_{i=1}^{n}\Big\{\frac{(1-T_i)\{Y_i - \hat{\mu}_{0}(\bm{Z}_i)\}}{1-\hat{\pi}_1} + \hat{\mu}_{0}(\bm{Z}_i)\Big\},
\end{align*}
due to the prediction unbiasedness of the MLE of canonical GLMs \citep{robins2007comment, guo2023generalized}. 

\item The calibrated gOB estimator: \citet{cohen2024no} proposed the calibrated version of the gOB estimator
\begin{align*}
    \hat{\tau}_{\cal}\coloneqq \frac{1}{n}\sum_{i=1}^{n}\left(\tilde{Y}_{i}(1)-\tilde{Y}_i(0)\right),
\end{align*}
where for $t\in \{0,1\}, \tilde{Y}_{i}(t)\coloneqq \mathbbm{1} \{T_i = t\}Y_i + \mathbbm{1}\{T_i \neq t\}\tilde{\mu}_{t}(\bm{Z}_i)$, but $\tilde{\mu}_{t}$ is then obtained by a multivariate ``OLS'' calibration in the treated and control groups: 
\begin{align*}
    \tilde{\mu}_t(\bm{Z}_i) &\coloneqq\hat{\gamma}_{t} + \hat{\mu}_{0}(\bm{Z}_i)\hat{\gamma}_{t,0}  + \hat{\mu}_{1}(\bm{Z}_i)\hat{\gamma}_{t,1},\\        (\hat{\gamma}_{t},\hat{\gamma}_{t,0},\hat{\gamma}_{t,1}) & = \underset{(\gamma_{t},\gamma_{t,0},\gamma_{t,1})}{\argmin} \sum_{i:T_i=t} \left(Y_i - \gamma_{t} - \hat{\mu}_{0}\left(\bm{Z}_i\right)\gamma_{t,0} - \hat{\mu}_{1}\left(\bm{Z}_i\right)\gamma_{t,1}\right)^2.
\end{align*}
By the prediction unbiasedness of OLS, $\hat{\tau}_{\cal}$ can also be represented in the AIPW form.

\item \citet{cohen2024no} further considered the following LOO estimator \cite{wager2016high}:
\begin{align*}
\hat{\tau}_{\loo} & \coloneqq \frac{1}{n}\sum_{i=1}^{n}\Big\{\frac{T_i\{Y_i - \hat{\mu}_{1}^{(-i)}(\bm{Z}_i)\}}{\hat{\pi}_1} + \hat{\mu}_{1}^{(-i)}(\bm{Z}_i)\Big\} - \frac{1}{n}\sum_{i=1}^{n}\Big\{\frac{(1-T_i)\{Y_i - \hat{\mu}_{0}^{(-i)}(\bm{Z}_i)\}}{1-\hat{\pi}_1} + \hat{\mu}_{0}^{(-i)}(\bm{Z}_i)\Big\},
\end{align*}
where $\hat{\mu}_{t}^{(-i)}(\cdot)$ is fitted within treatment arm $t$ by leaving out the $i$-th observation $\bO_i$. The calibrated version of $\hat{\tau}_{\loo}$ is then defined as
\begin{align*}
\hat{\tau}_{\loocal} \coloneqq & \ \frac{1}{n}\sum_{i=1}^{n}\Big\{\frac{T_i\{Y_i - \tilde{\mu}_{1}^{(-i)}(\bm{Z}_i)\}}{\hat{\pi}_1} + \tilde{\mu}_{1}^{(-i)}(\bm{Z}_i)\Big\} - \frac{1}{n}\sum_{i=1}^{n}\Big\{\frac{(1-T_i)\{Y_i - \tilde{\mu}_{0}^{(-i)}(\bm{Z}_i)\}}{1-\hat{\pi}_1} + \tilde{\mu}_{0}^{(-i)}(\bm{Z}_i)\Big\},
\end{align*}
where for $t\in \{0,1\}$ and $i\in[n]$, $\tilde{\mu}_{t}^{(-i)}(\cdot)$ is fitted by solving the following least squares problem:
\begin{align}
    \tilde{\mu}_{t}^{(-i)}(\bm{Z}_i) & \coloneqq \hat{\gamma}_t^{(-i)} + \hat{\mu}_{0}^{(-i)}(\bm{Z}_i)\hat{\gamma}_{t,0}^{(-i)} + \hat{\mu}_{1}^{(-i)}(\bm{Z}_i)\hat{\gamma}_{t, 1}^{(-i)}, \\
    (\hat{\gamma}_t^{(-i)},\hat{\gamma}_{t,0}^{(-i)},\hat{\gamma}_{t, 1}^{(-i)}) &= \underset{({\gamma}_t^{(-i)},{\gamma}_{t,0}^{(-i)},{\gamma}_{t, 1}^{(-i)})}{\argmin} \sum_{\substack{j:T_j=t \\ j\neq i}}\left(Y_j - \gamma_{t}^{(-i)} - \hat{\mu}_{0}^{(-i)}\left(\bm{Z}_j\right)\gamma_{t,0}^{(-i)} - \hat{\mu}_{1}^{(-i)}\left(\bm{Z}_j\right)\gamma_{t,1}^{(-i)}\right)^2. \label{equ: loo cal in}
\end{align} 
\end{itemize}

Before moving forward, we note that, for the simple randomization considered here, when the ORs are fitted within the corresponding arm using GLMs with canonical links, the estimator proposed in \citet{bannick2025general} is equivalent to $\hat{\tau}_{\gob}$. This particular estimator, together with its calibrated version, has been implemented in the popular R package \href{https://cran.r-project.org/web/packages/RobinCar/index.html}{\texttt{RobinCar}}~\cite{bannick2026robincarfamilyrtools}. It should be noted that \href{https://cran.r-project.org/web/packages/RobinCar/index.html}{\texttt{RobinCar}}  can additionally handle black-box machine learning methods. When the ORs are fitted using OLS with linear working models, the adjusted estimator is equivalent to the ANHECOVA estimator proposed in \citet{ye2023toward}, which has also been implemented in \href{https://cran.r-project.org/web/packages/RobinCar/index.html}{\texttt{RobinCar}}. This paper, together with our previous work \citep{zhao2024hoif, gu2025assumption, zhang2026bias}, aims to provide complementary approaches for the settings in which the number of adjusted covariates is large relative to $n$ and current methods implemented in \href{https://cran.r-project.org/web/packages/RobinCar/index.html}{\texttt{RobinCar}} may not perform well. 

\section{The New Jackknife-Based Methodology}
\label{sec:method}

In this section, we introduce our new method \JASA{} and its calibrated version \JASACal{}. Our motivation is to develop an estimator that can adjust for a large number of covariates (relative to $n$) using GLM working models without incurring excessive bias.

\subsection{Jackknife score equations of GLMs}
\label{sec:ee_jasa}

As explained in \citet{zhao2024hoif}, for $\hat{\tau}_{\OLS}$, there are two main sources of bias: (i) bias due to the use of the same sample to estimate the parameter of interest $\hat{\tau}_{\OLS}$ and the nuisance estimate $\hat{\bbeta}_{t}$, in the form of $V$-statistics; and (ii) bias due to the use of treatment assignments in estimating $\bSigma$ in the OLS nuisance estimate $\hat{\bbeta}_{t}$. The bias-corrected estimators based on $U$-statistics\citep{chang2024exact, zhao2024hoif, lu2025debiased, gu2025assumption}, such as $\hat{\tau}_{\adj, 2}$ and $\hat{\tau}_{\adj, 2}^{\dag}$, resolve these two issues, essentially by replacing $\hat{\bbeta}_{t}$ with $\hat{\bbeta}_{t}^{(-i)}$ defined in \eqref{loo-beta}, in which $\bSigma$ is estimated by $\hat{\bSigma}$ computed from the entire sample and the inner product between the outcome and the covariates is computed by leaving the $i$-th sample out.

Similar reasons for $\hat{\tau}_{\gob}$ to be biased apply. When we fit the OR by a working GLM instead of a linear model, in general, we cannot find a closed-form solution for the regression coefficients. Nevertheless, just as the OLS estimator solves the normal equation, the MLE of a working GLM solves the corresponding score equation. Specifically, the population score equation of a canonical GLM within treatment arm $t \in \{0, 1\}$ takes the following form:
\begin{equation}
\label{population score}
\bbE \Big[ \bm{Z} \frac{\mathbbm{1} \{T = t\}}{\pi_{t}} \{Y - \mu (\bm{Z}; \bbeta_{t})\} \Big] = \mathbf{0},
\end{equation}
where $\mu (\bm{Z}; \bbeta_{t}) = g (\bm{Z}^{\top} \bbeta_{t})$ for some known (inverse) link function $g(\cdot)$. When $Y$ is binary, we can choose $g (\cdot)$ to be the expit function $g (u) = 1 / \{1 + \exp (-u)\}$, corresponding to logistic regression; when $Y$ is count data, we can choose $g(\cdot)$ to be the exponential function $g (u) = e^{u}$, corresponding to log-linear Poisson regression. We denote $\hat{\bbeta}_{t}$ as the MLE of $\bbeta_{t}$, which solves the empirical score equation, i.e., the empirical version of \eqref{population score}. 

\begin{remark}   
\label{rem:estimate ps}
In this section, we define all treatment effect estimators using the true probabilities of treatment assignments $\pi_{t}$ for $t \in \{0, 1\}$. In the actual implementation, we follow most of the literature by replacing $\pi_{t}$ with $\hat{\pi}_{t}$, the empirical proportion of units in the treatment group $t$, which also corresponds to the true probabilities of treatment assignments in CRE. The bias incurred by using $\hat{\pi}_{t}$ instead of $\pi_{t}$ is $o (n^{-1 / 2})$, which is negligible compared to the sampling variability. In practice, using $\hat{\pi}_{t}$ could result in estimators of $\tau$ with smaller asymptotic variances.
\end{remark}

In the theoretical results, we need to impose some additional regularity conditions on the data generating distribution.
\begin{assumption}
\label{as:link}
Let 
$\bm{\Xi}_{t} (\bbeta_{t}) \coloneqq \bbE [\frac{\mathbbm{1} \{T = t\}}{\pi_t} \bm{Z} \bm{Z}^{\top} g' (\bm{Z}^{\top} \bbeta_{t})]$ 
be the population Jacobian matrix of the score equation \eqref{population score}. Note that when $g(\cdot)$ is the identity link, $\bm{\Xi}_{t} (\bbeta_{t}) \equiv \bSigma_{t}$. We assume the following: for $t \in \{0, 1\}$,
\begin{enumerate}[label = (\roman*)]
\item $g(\cdot)$ is monotonically increasing and is at least three-times continuously differentiable, and $c_{g} \leq g'(\cdot) \leq C_{g} $ for some absolute constants $0< c_{g} \leq C_{g} < \infty$;

\item There exists an absolute constant $B > 0$ such that $\max \{\Vert \bbeta_{0} \Vert_{2}, \Vert \bbeta_{1} \Vert_{2}\} \leq B$;

\item The MLE $\hat{\bbeta}_{t}$ satisfies $\Vert \hat{\bbeta}_{t} - \bbeta_{t} \Vert_{2} = O_{\bbP} \left( \sqrt{p \log n / n} \right)$;

\item There exist absolute constants $0 < c_{1} \leq c_{2} < \infty$ such that $c_{1} \leq \lambda_{\min} \{\bm{\Xi}_{t} (\bbeta_{t})\} \leq \lambda_{\max} \{\bm{\Xi}_{t} (\bbeta_{t})\} \leq c_{2}$.
\end{enumerate}
\end{assumption}

\begin{remark}
\label{rem:link}
We now comment on the conditions imposed in Assumption~\ref{as:link}. The monotonicity and smoothness requirements in Assumption~\ref{as:link}(a) hold for commonly encountered GLMs, including probit regression, logistic regression, Poisson log-linear model, among many others; as for the boundedness condition, the identity link trivially satisfies it, while the logistic and Poisson inverse links satisfy it whenever \(\bm Z^\top\bbeta_t\) is uniformly bounded. Assumption~\ref{as:link}(b) essentially requires that the population version of the working model $\mu_{t} (\bm{Z}; \bbeta_{t})$ is bounded, under Assumption~\ref{as:regularity}(a). Assumption~\ref{as:link}(c) is also as expected based on the standard $M$/$Z$-estimation theory \citep{van2023weak}.  Finally, Assumption~\ref{as:link}(d) is a direct consequence of Assumption~\ref{as:regularity}(c) and Assumption~\ref{as:link}(a). 
\end{remark}


Inspired by the construction of $\hat{\tau}_{\adj, 2}$ and $\hat{\tau}_{\adj, 2}^{\dag}$, we propose the following jackknife empirical version of the population score equation \eqref{population score} (recall that $\mu (\bm{Z}; \bbeta_{t}) = g (\bm{Z}^{\top} \bbeta_{t})$):
\begin{equation}
\label{loo score}
\frac{1}{n - 1} \sum_{j \neq i} \bm{Z}_{j} \frac{\mathbbm{1} \{T_{j} = t \}}{\pi_{t}} Y_{j} - \frac{1}{n} \sum_{l = 1}^{n} \bm{Z}_{l} g (\bm{Z}_{l}^{\top}\hat{\bbeta}_{t}^{(-i)}) = \mathbf{0}, \, \text{ for } i \in [n].
\end{equation}
The above proposal is novel and unique in the following aspects:
\begin{enumerate}[label = \roman*.]
\item The new jackknife score equations do not rely on sample splitting or cross-fitting, which may be undesirable in practice due to information loss and undermined reproducibility.

\item The new jackknife score equation corresponding to the $i$-th unit differs from the traditional jackknife-based adjustment \citep{cohen2024no} in that it only removes the $i$-th treatment $T_{i}$ and the $i$-th outcome $Y_{i}$ but keeps the $i$-th covariates $\bm{Z}_{i}$ through the second term in \eqref{loo score}. It uses more information in the data and can adjust for more covariates than standard jackknife/LOO methods, as demonstrated in Section~\ref{sec:sim}. The data efficiency of \eqref{loo score} is desirable in RCTs, where it is costly to recruit experimental subjects and collect data.
\end{enumerate}

\begin{remark}
\label{rem:loo}
\citet{wager2016high} and \citet{cohen2024no} considered the following standard jackknife version of the population score equation \eqref{population score}:
\begin{align}
\label{loo score standard}
\frac{1}{n - 1} \sum_{j \neq i} \bm{Z}_j \frac{\mathbbm{1} \{T_j = t\}}{\pi_t} \left\{ Y_j - g (\bm{Z}_j^{\top} \bbeta_{t}^{(-i)}) \right\} = \mathbf{0}, \, \text{ for } i \in [n],
\end{align}
where the second term on the left side not involving $Y$ is also estimated by jackknife.
\end{remark}

We denote the solution to \eqref{loo score} by $\hat{\bbeta}_{t}^{(-i)}$ for $i \in [n]$ and $t \in \{0, 1\}$. 
We then define the JAckknife Score-based Adjustment (\JASA) treatment effect estimator of ATE $\tau$ based on $\{\hat{\bbeta}_{t}^{(-i)}\}_{i = 1}^{n}$ as:
\begin{equation}
\label{equ:JASA}
\begin{split}
\hat{\tau}_{\JASA} \coloneqq &  \hat{\tau}_{1, \JASA} - \hat{\tau}_{0, \JASA}, \\
\text{ where} \ \  \hat{\tau}_{t, \JASA} \coloneqq  \frac{1}{n} \sum_{i = 1}^{n} \Big\{ \frac{\mathbbm{1} \{T_{i} = t\}}{\pi_{t}} & Y_{i} +  \Big( 1 - \frac{\mathbbm{1} \{T_{i} = t\}}{\pi_{t}} \Big) g (\bm{Z}_{i}^{\top} \hat{\bbeta}_{t}^{(-i)}) \Big\}, \ \  \text{for} \ \ t \in \{0, 1\}.
\end{split}
\end{equation}
We have the following theoretical results for $\hat{\tau}_{t, \JASA}$ and $\hat{\tau}_{\JASA}$. The following result benefits from recent theoretical advances in \citet{lin2024worthwhile} on the jackknife $M$/$Z$-estimation. The proofs of all the theoretical results in this section are deferred to Appendix~\ref{app:var}. We want to emphasize that our main contribution is on the methodological side. The theoretical results established here mainly serve as proof of concept, as already indicated in the Introduction.

\begin{proposition}
\label{prop:bias}
When \eqref{iid} and \eqref{random} hold: for $t \in \{0, 1\}$,
\begin{enumerate}[label = (\roman*)]
\item $\hat{\tau}_{t, \JASA}$ and $\hat{\tau}_{\JASA}$ are both exactly unbiased regardless of the values of $n$ and $p$: $\bbE [\hat{\tau}_{t, \JASA} - \tau_{t}] = 0$ and $\bbE [\hat{\tau}_{\JASA} - \tau] = 0$.

\item If Assumptions~\ref{as:regularity}--\ref{as:link} also hold and $p = o (n^{2 / 3})$, then $\hat{\tau}_{t, \JASA}$ and $\hat{\tau}_{\JASA}$ are, respectively, $\sqrt{n}$-CAN estimators of $\tau_{t}$ and $\tau$:
\begin{align*}
\sqrt{n} (\hat{\tau}_{t, \JASA} - \tau_{t}) \overset{\rm d}{\to} \mathcal{N} \left( 0, \sigma_{t}^{2} \right) \text{ and } \sqrt{n} (\hat{\tau}_{\JASA} - \tau) \overset{\rm d}{\to} \mathcal{N} \left( 0, \sigma^{2} \right),
\end{align*}
where
\begin{align*}
& \sigma_{t}^{2} = \var \Big\{ \frac{\mathbbm{1} \{T = t\}}{\pi_{t}} Y + \Big( 1 - \frac{\mathbbm{1} \{T = t\}}{\pi_{t}} \Big) g (\bm{Z}^{\top} \bbeta_{t}) \Big\}, \  \text{for} \ t \in \{0, 1\}, \\
& \sigma^{2} = \var \Big\{ \frac{\mathbbm{1} \{T = 1\}}{\pi_{1}} Y + \Big( 1 - \frac{\mathbbm{1} \{T = 1\}}{\pi_{1}} \Big) g (\bm{Z}^{\top} \bbeta_{1}) - \frac{\mathbbm{1} \{T = 0\}}{\pi_{0}} Y - \Big( 1 - \frac{\mathbbm{1} \{T = 0\}}{\pi_{0}} \Big) g (\bm{Z}^{\top} \bbeta_{0}) \Big\}.
\end{align*}
Here, $\bbeta_{t}$ is the solution to \eqref{population score} for $t \in \{0, 1\}$.
\end{enumerate}
\end{proposition}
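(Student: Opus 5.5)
For part (i), the key observation is that $\hat{\bbeta}_{t}^{(-i)}$, as a solution of \eqref{loo score}, depends only on $\{\bm{Z}_{l}\}_{l=1}^{n}$ and $\{(T_{j}, Y_{j})\}_{j \neq i}$. It does not depend on $(T_{i}, Y_{i})$.

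Under \eqref{iid} and \eqref{random}, $T_{i}$ is independent of $(\bm{X}_{i}, Y_{i}(0), Y_{i}(1))$ and of all other units. Hence $T_{i}$ is independent of $(\bm{Z}_{i}, \hat{\bbeta}_{t}^{(-i)})$, and
\[
\bbE\Big[\Big(1-\tfrac{\mathbbm{1}\{T_i=t\}}{\pi_t}\Big) g(\bm{Z}_i^\top\hat{\bbeta}_t^{(-i)})\Big]
= \bbE\Big[1-\tfrac{\mathbbm{1}\{T_i=t\}}{\pi_t}\Big]\,
\bbE\big[g(\bm{Z}_i^\top\hat{\bbeta}_t^{(-i)})\big] = 0 .
\]
This requires integrability of $g(\bm{Z}_{i}^{\top}\hat{\bbeta}_{t}^{(-i)})$, which I would assume, for instance via a well-defined, measurable solution with finite mean. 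Likewise $\bbE[\mathbbm{1}\{T_{i}=t\} Y_{i}/\pi_{t}] = \bbE[Y(t)] = \tau_{t}$ by consistency. Summing over $i$ gives $\bbE[\hat{\tau}_{t,\JASA}] = \tau_{t}$, and taking the difference over $t$ gives the result for $\hat{\tau}_{\JASA}$. No condition on $n$ or $p$ enters.

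For part (ii), I would decompose
\[
\hat{\tau}_{t,\JASA} - \tau_{t} = \frac{1}{n}\sum_{i}\{\phi_{t}(\bm{O}_{i}) - \tau_{t}\} + R_{n},
\qquad
R_{n} = \frac{1}{n}\sum_{i}\Big(1-\tfrac{\mathbbm{1}\{T_i=t\}}{\pi_t}\Big)\{g(\bm{Z}_i^\top\hat{\bbeta}_t^{(-i)}) - g(\bm{Z}_i^\top\bbeta_t)\},
\]
where $\phi_{t}$ is the influence function appearing in $\sigma_{t}^{2}$. The first term is an i.i.d. average with finite variance by Assumption~\ref{as:regularity} and the boundedness of $g(\bm{Z}^{\top}\bbeta_{t})$ from Assumption~\ref{as:link}(b). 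The CLT then gives the stated limits, jointly in $t$ for $\hat{\tau}_{\JASA}$. It remains to show $R_{n} = o_{\bbP}(n^{-1/2})$.

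Part (i) already gives $\bbE[R_{n}] = 0$, so I would bound $\bbE[R_{n}^{2}]$. Write $\Delta_{i} = g(\bm{Z}_{i}^{\top}\hat{\bbeta}_{t}^{(-i)}) - g(\bm{Z}_{i}^{\top}\bbeta_{t})$ and $\epsilon_{i} = 1 - \mathbbm{1}\{T_{i}=t\}/\pi_{t}$. The diagonal terms contribute $n^{-2}\sum_{i}\bbE[\epsilon_{i}^{2}\Delta_{i}^{2}] \lesssim n^{-1}\bbE[\Delta_{i}^{2}]$. This is $o(n^{-1})$ because $|\Delta_{i}| \le C_{g}|\bm{Z}_{i}^{\top}(\hat{\bbeta}_{t}^{(-i)}-\bbeta_{t})|$, and $\hat{\bbeta}_{t}^{(-i)}$ is consistent at rate $\sqrt{p\log n/n}$ in the relevant projected sense. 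Here I would transfer Assumption~\ref{as:link}(c) from $\hat{\bbeta}_{t}$ to $\hat{\bbeta}_{t}^{(-i)}$ through a leave-one-out perturbation bound showing $\|\hat{\bbeta}_{t}^{(-i)} - \hat{\bbeta}_{t}\| = O_{\bbP}(\sqrt{p}/n)$ up to logs.

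The cross terms $\bbE[\epsilon_{i}\epsilon_{j}\Delta_{i}\Delta_{j}]$ for $i \neq j$ are the main obstacle. They do not vanish, because $\Delta_{i}$ depends on $T_{j}$ and $\Delta_{j}$ depends on $T_{i}$. To handle them I would introduce the leave-two-out solution $\hat{\bbeta}_{t}^{(-i,-j)}$ of the analogous equation omitting both $(T_{i},Y_{i})$ and $(T_{j},Y_{j})$, and write $\Delta_{i} = \Delta_{i}^{(-j)} + (\Delta_{i} - \Delta_{i}^{(-j)})$.

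The term with $\Delta_{i}^{(-j)}\Delta_{j}^{(-i)}$ has zero mean by independence of $T_{i}$ and $T_{j}$. The remaining terms require a sharp expansion of $\hat{\bbeta}_{t}^{(-i)} - \hat{\bbeta}_{t}^{(-i,-j)}$. I would obtain it by linearizing \eqref{loo score} using the Jacobian $\frac1n\sum_{l}\bm{Z}_{l}\bm{Z}_{l}^{\top}g'(\cdot)$, which is invertible with high probability by Assumption~\ref{as:link}(d) and matrix concentration, and by controlling the second-order Taylor remainder using the third derivative of $g$. This yields
\[
\hat{\bbeta}_t^{(-i)}-\hat{\bbeta}_t^{(-i,-j)} \approx \frac{1}{n}\,\hat{\bm{\Xi}}^{-1}\bm{Z}_j \frac{\mathbbm{1}\{T_j=t\}}{\pi_t}Y_j
\]
plus a quadratic remainder. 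Each cross term is then of order $\bbE[(\bm{Z}_{i}^{\top}\hat{\bm{\Xi}}^{-1}\bm{Z}_{j})^{2}]/n^{2} \sim p/n^{2}$ after summing, together with remainder contributions of order $p^{3/2}/n^{2}$. The latter is $o(n^{-1})$ precisely when $p = o(n^{2/3})$, which is where the dimension condition enters. For this step I would follow the jackknife $Z$-estimation expansions of \citet{lin2024worthwhile}. The delicate part is keeping the quadratic remainder uniform over the pairs $(i,j)$, which I would manage with high-probability events on $\max_{i}\|\bm{Z}_{i}\|_{2} \lesssim \sqrt{p}$ and on the eigenvalues of the Jacobian.
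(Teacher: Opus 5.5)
Your part (i) is the paper's argument. For part (ii) you take a genuinely different route. The paper does not expand around $\bbeta_t$. Instead it:
\begin{itemize}
\item takes the asymptotic variance of the non-jackknifed estimator $\hat{\tau}_{\gob}$ from Theorem~1 of \citet{lin2024worthwhile}, which is where $p=o(n^{2/3})$ enters there;
\item proves the one-step bound $\|\hat{\bbeta}^{(-i)}-\hat{\bbeta}\|_2=O_{\bbP}(\sqrt{p}/n)$ via the integrated Jacobian;
\item asserts $\sqrt{n}(\hat{\tau}_{\JASA}-\hat{\tau}_{\gob})=o_{\bbP}(1)$ by a mean-value expansion and Cauchy--Schwarz.
\end{itemize}
That route is short and outsources the high-dimensional work. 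However, the comparison step only gives $|\bm{Z}_i^{\top}(\hat{\bbeta}-\hat{\bbeta}^{(-i)})|\lesssim \sqrt{p}\cdot\sqrt{p}/n$, hence $O_{\bbP}(p/\sqrt{n})$ after the $\sqrt{n}$ scaling. This is not an artifact. The leading term $n^{-3/2}\sum_i(1-T_i/\pi_1)(T_iY_i/\pi_1)\,g'\,\bm{Z}_i^{\top}\hat{\bm{\Xi}}^{-1}\bm{Z}_i$ in general has mean of order $p/\sqrt{n}$, because $(1-T_i/\pi_1)T_i=-\pi_0T_i/\pi_1$ does not average out. That is exactly the $O(p/n)$ bias of the non-jackknifed fit, so the reduction to $\hat{\tau}_{\gob}$ by itself only covers $p=o(\sqrt{n})$. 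Your decomposition around the population $\bbeta_t$, with $\bbE[R_n]=0$ from part (i) and a second-moment bound via leave-two-out solutions, never compares with a biased estimator. It uses the jackknife's independence structure directly, which is what makes $\sqrt{n}\ll p\ll n^{2/3}$ reachable. The cost is that you must control the leave-two-out perturbations yourself.

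One step of your plan must be sharpened, or it collapses back to $p=o(\sqrt{n})$. Write $\Delta_i=\Delta_i^{(-j)}+D_{ij}$ with $D_{ij}\coloneqq g(\bm{Z}_i^{\top}\hat{\bbeta}_t^{(-i)})-g(\bm{Z}_i^{\top}\hat{\bbeta}_t^{(-i,-j)})$.
\begin{itemize}
\item The mixed terms $\bbE[\epsilon_i\epsilon_j\Delta_i^{(-j)}D_{ji}]$ are \emph{exactly zero}, not merely small. $D_{ji}$ involves $(T_i,Y_i)$ but not $(T_j,Y_j)$, so $\epsilon_j$ is independent of the rest of the product.
\item Bounding them in absolute value would cost about $\sqrt{p/n}\cdot\sqrt{p}/n$ per pair, i.e.\ $p/n^{3/2}$ in $\bbE[R_n^2]$, which forces $p=o(\sqrt{n})$.
\item So only $\bbE[\epsilon_i\epsilon_jD_{ij}D_{ji}]$ should go through your expansion. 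Your accounting ``each cross term is of order $p/n^2$'' is right only for that piece.
\end{itemize}

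Three smaller points.
\begin{itemize}
\item Assumption~\ref{as:link}(iii) concerns the arm-specific MLE, which solves a different equation from the non-jackknifed version of \eqref{loo score}. So $\hat{\bbeta}_t^{(-i)}$ and $\hat{\bbeta}_t$ differ at order $\sqrt{p/n}$ in general, not $\sqrt{p}/n$. The rate you need comes more cleanly from strong monotonicity: since $g'\geq c_g$, the map $b\mapsto n^{-1}\sum_l\bm{Z}_l g(\bm{Z}_l^{\top}b)$ is strongly monotone with modulus $c_g\lambda_{\min}(\hat{\bSigma})$. This gives existence, uniqueness, $\|\hat{\bbeta}_t^{(-i)}-\bbeta_t\|_2=O_{\bbP}(\sqrt{p/n})$, and all the perturbation bounds.
\item Because $\hat{\bSigma}$ depends only on covariates, you can restrict every second-moment computation to $\{\lambda_{\min}(\hat{\bSigma})\geq b_1/2\}$ without disturbing the independence of the $T_i$'s. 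This settles integrability.
\item For the diagonal terms, use a version of $\hat{\bbeta}_t^{(-i)}$ that also drops $\bm{Z}_i$ from the second sum. That gives $\bm{Z}_i^{\top}(\hat{\bbeta}_t^{(-i)}-\bbeta_t)=O_{\bbP}(\sqrt{p/n}+p/n)$, instead of the Cauchy--Schwarz bound $O_{\bbP}(p/\sqrt{n})$.
\end{itemize}
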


\begin{remark}
\label{rem:proof}
The proof of Proposition~\ref{prop:bias} is based on recent advances in $M$/$Z$-estimation theory involving nuisance parameters with dimensions diverging with the sample size $n$ \citep{lin2024worthwhile}; also see \citet{cattaneo2019two} for related important results. We essentially follow the proof roadmap of \citet{lin2024worthwhile} by showing that the difference in the asymptotic variances of $\hat{\tau}_{\JASA}$ and $\hat{\tau}_{\gob}$ is negligible, with the latter reaching $\sigma^{2}$ as soon as $p = o (n^{2 / 3})$. Theoretical results such as Proposition~\ref{prop:bias} are not the main contribution of our paper and mainly serve as a proof of concept. We conjecture that the restriction on $p$ is not sharp, but it may require the development of new proof strategies, which is beyond the scope of this paper. 
\end{remark}

It is worth highlighting that $\hat{\tau}_{\JASA}$ is always unbiased regardless of the scaling between $p$ and $n$, but excessively large $p$ can be detrimental to the variance. As suggested in \citet{cohen2024no}, we can further improve $\hat{\tau}_{\JASA}$ by adding a calibration step, which is achieved by fitting the following linear regression by OLS within each treatment group $t \in \{0, 1\}$:
\begin{align}
\label{model:linear_cal}
(\hat{\gamma}_{t}, \hat{\gamma}_{t,0}, \hat{\gamma}_{t,1}) \coloneqq \underset{\gamma_{t}, \gamma_{t,0}, \gamma_{t,1}}{\argmin} \sum_{i: T_i = t} \left( Y_i - \gamma_{t} - \gamma_{t,0} \cdot \hat{\mu}_{0,i}^{(-i)} - \gamma_{t,1} \cdot \hat{\mu}_{1,i}^{(-i)} \right)^2. 
\end{align}
For each observation $i\in[n]$, we compute the calibrated OR estimate $\tilde{\mu}_{t, i} \coloneqq \hat{\gamma}_{t} +  \hat{\gamma}_{t,0} \cdot \hat{\mu}_{0,i}^{(-i)} + \hat{\gamma}_{t,1} \cdot  \hat{\mu}_{1,i}^{(-i)}$, where $\hat{\mu}_{t,i}^{(-i)} \coloneqq g (\bm{Z}_{i}^{\top} \hat{\bbeta}_{t}^{(-i)})$ for $t\in \{0, 1\}$. The calibrated ATE estimator (\JASACal{}) is then defined as:
\begin{align}
\hat{\tau}_{\JASACal} \coloneqq \frac{1}{n} \sum_{i = 1}^{n} \left[ \left\{ \frac{\mathbbm{1} \{T_i = 1\}}{\pi_1} Y_i + \left( 1 - \frac{\mathbbm{1} \{T_i = 1\}}{\pi_1} \right) \tilde{\mu}_{1,i} \right\} - \left\{ \frac{\mathbbm{1} \{T_i = 0\}}{\pi_0} Y_i + \left(1 - \frac{\mathbbm{1} \{T_i = 0\}}{\pi_0} \right) \tilde{\mu}_{0, i} \right\} \right].
\label{equ:JASA-cal}
\end{align}
At the end of the main text, we present the pseudocode of \JASA{} and \JASACal{} in Algorithm~\ref{alg:jasa}. As a consequence of Proposition~\ref{prop:bias} and the results in \citet{cohen2024no}, the following theoretical results hold for $\hat{\tau}_{t, \JASACal}$ and $\hat{\tau}_{\JASACal}$. The proof is a direct consequence of the observation that $\hat{\tau}_{\JASACal}$ is simply the OLS adjusted estimator with the pairs $(\hat{\mu}_{0, i}^{(-i)}, \hat{\mu}_{1, i}^{(-i)})$, $i \in [n]$, as the new two-dimensional covariates. As long as $p = o (n^{1 / 2})$, the OLS adjusted estimator without jackknife is $\sqrt{n}$-CAN \citep{ma2022regression, ye2023toward, jiang2025adjustments}.

\begin{proposition}
\label{prop:cal}
Under the same assumptions as in Proposition~\ref{prop:bias}(b), for $t \in \{0, 1\}$, $\hat{\tau}_{t, \JASACal}$ and $\hat{\tau}_{\JASACal}$ are, respectively, $\sqrt{n}$-CAN estimators of $\tau_{t}$ and $\tau$:
\begin{align*}
\sqrt{n} (\hat{\tau}_{t, \JASACal} - \tau_{t}) \overset{d}{\to} \mathcal{N} (0, \sigma_{t, \cal}^{2}) \text{ and } \sqrt{n} (\hat{\tau}_{\JASACal} - \tau) \overset{d}{\to} \mathcal{N} (0, \sigma_{\cal}^{2}),
\end{align*}
where $\sigma_{t, \cal}^{2}$ and $\sigma_{\cal}^{2}$, respectively, take the same forms as $\sigma_{t}^{2}$ and $\sigma^{2}$, except that $g (\bm{Z}^{\top} \bbeta_{t})$ shall be replaced by the probability limit of $\tilde{\mu}_{t}(\cdot)$, which exists under the stated assumptions.
\end{proposition}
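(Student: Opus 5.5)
The plan is to treat $\hat{\tau}_{\JASACal}$ as an OLS-adjusted estimator with the two-dimensional generated covariates $\bm{W}_i \coloneqq (\hat{\mu}_{0,i}^{(-i)}, \hat{\mu}_{1,i}^{(-i)})^{\top}$. We then compare it with the oracle OLS-adjusted estimator that uses the fixed covariates $\bm{W}_i^{\circ} \coloneqq (g(\bm{Z}_i^{\top}\bbeta_0), g(\bm{Z}_i^{\top}\bbeta_1))^{\top}$.

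\textbf{Step 1: the oracle estimator.} The oracle estimator has a bounded number of covariates, namely $3$ including the intercept. Under Assumptions~\ref{as:regularity}--\ref{as:link} these covariates are bounded, since $\Vert\bbeta_t\Vert_2\le B$, $\bm{X}$ is bounded, and $g$ is continuous. So the classical low-dimensional theory applies \citep{ma2022regression, ye2023toward, jiang2025adjustments, cohen2024no}. It gives $\sqrt{n}$-CAN with asymptotic variance $\sigma_{t,\cal}^2$ (resp.\ $\sigma_{\cal}^2$). This variance is exactly the $\sigma_t^2$ form with $g(\bm{Z}^{\top}\bbeta_t)$ replaced by the population calibration $\gamma_t^\ast+\gamma_{t,0}^\ast g(\bm{Z}^{\top}\bbeta_0)+\gamma_{t,1}^\ast g(\bm{Z}^{\top}\bbeta_1)$, where $(\gamma_t^\ast,\gamma_{t,0}^\ast,\gamma_{t,1}^\ast)$ is the arm-$t$ population least-squares projection of $Y$ on $(1,\bm{W}^{\circ})$. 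That projection exists provided the Gram matrix of $(1,\bm{W}^{\circ})$ in arm $t$ is nonsingular. If it is singular, for instance when $\bbeta_0=\bbeta_1$, we replace it by the minimal-norm projection; the fitted values, and hence the limit of $\tilde\mu_t$, are unaffected.

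\textbf{Step 2: generated covariates are uniformly close to the oracle ones.} The key quantity to control is $\max_i|\hat{\mu}_{t,i}^{(-i)}-g(\bm{Z}_i^{\top}\bbeta_t)|$, or at least $n^{-1}\sum_i(\hat{\mu}_{t,i}^{(-i)}-g(\bm{Z}_i^{\top}\bbeta_t))^2$. We obtain this from the analysis behind Proposition~\ref{prop:bias}(b), which controls $\hat{\bbeta}_t^{(-i)}$ relative to $\hat{\bbeta}_t$ and $\bbeta_t$. A mean-value expansion with $g'\le C_g$ reduces it to the sizes of $\Vert\hat{\bbeta}_t^{(-i)}-\bbeta_t\Vert$ and $\bm{Z}_i^{\top}(\hat{\bbeta}_t^{(-i)}-\bbeta_t)$. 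This gives the bound $o_{\bbP}(1)$. It implies that the calibration coefficients $(\hat\gamma_t,\hat\gamma_{t,0},\hat\gamma_{t,1})$ converge to $(\gamma_t^\ast,\gamma_{t,0}^\ast,\gamma_{t,1}^\ast)$, which identifies the probability limit of $\tilde\mu_t$.

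\textbf{Step 3: the difference between the two estimators is $o_{\bbP}(n^{-1/2})$.} Because the calibration is OLS with an intercept in each arm, prediction unbiasedness gives the AIPW form. Writing $\hat{\bm{\gamma}}_t$ for the fitted coefficients, the difference is
\[
\frac1n\sum_i\Big(1-\frac{\mathbbm{1}\{T_i=t\}}{\pi_t}\Big)\big\{\hat{\bm\gamma}_t^{\top}(1,\bm{W}_i^{\top})^{\top}-\bm\gamma_t^{\ast\top}(1,\bm{W}_i^{\circ\top})^{\top}\big\}.
\]
We split it into two pieces:
\begin{enumerate}[label = (\alph*)]
\item the term in $(\hat{\bm\gamma}_t-\bm\gamma_t^\ast)$ multiplying a mean-zero average, which is $o_{\bbP}(1)\cdot O_{\bbP}(n^{-1/2})$;
\item the term $\bm\gamma_t^{\ast\top}$ times $\frac1n\sum_i(1-\mathbbm{1}\{T_i=t\}/\pi_t)(\bm{W}_i-\bm{W}_i^{\circ})$.
\end{enumerate}

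\textbf{Main obstacle: term (b).} This is where $\bm{W}_i$ depends on $T_i$ and $Y_i$ only through the jackknife. We have $\bbE[(1-\mathbbm{1}\{T_i=t\}/\pi_t)\mid \bm{X}_i,\text{rest}]=0$, and $\hat{\bbeta}_{s}^{(-i)}$ does not involve $(T_i,Y_i)$. Its second term does involve $\bm{Z}_i$, but that is independent of $T_i$ under \eqref{random}. So each summand has conditional mean zero given $\{\bm{O}_j\}_{j\neq i}$ and $\bm{X}_i$, which is exactly the unbiasedness mechanism of Proposition~\ref{prop:bias}(a). The second moment of the sum is then the diagonal part, $O(n^{-1}\bbE\Vert\bm{W}_i-\bm{W}_i^{\circ}\Vert^2)=o(n^{-1})$, plus cross terms $i\neq k$. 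Each cross term is controlled by the leave-two-out sensitivity of $\hat{\bbeta}^{(-i)}$ to $(T_k,Y_k)$, which is of order $\sqrt{p}/n$ times the estimation error. Here we reuse the stability bounds from the proof of Proposition~\ref{prop:bias}(b), following \citet{lin2024worthwhile}, together with $p=o(n^{2/3})$ to make the total $o(n^{-1})$. Combining Steps 1--3 via Slutsky gives the claim for $\hat{\tau}_{t,\JASACal}$ and, by differencing the two arms, for $\hat{\tau}_{\JASACal}$.
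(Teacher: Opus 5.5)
Your route differs from the paper's. The paper's proof is only the reduction you start from: it treats $\hat{\tau}_{\JASACal}$ as an OLS-adjusted estimator whose covariates are the two fitted ORs, and appeals to bounded-dimension OLS-adjustment theory plus Proposition~\ref{prop:bias}. It does not separately handle the fact that $\bm{W}_i=(\hat{\mu}_{0,i}^{(-i)},\hat{\mu}_{1,i}^{(-i)})^{\top}$ depends on the whole sample. Your oracle comparison confronts exactly this, and the conditional-mean-zero property of the summands of term (b) is right. The gap is in closing the cross terms.

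Write those summands as $\epsilon_i a_i$, with $\epsilon_i\coloneqq 1-\mathbbm{1}\{T_i=t\}/\pi_t$. Split $a_i=a_i^{(-k)}+\Delta_{ik}$, where $a_i^{(-k)}$ uses leave-two-out fits $\hat{\bbeta}_s^{(-i,-k)}$ that also omit $(T_k,Y_k)$. Conditioning on everything except $(T_i,Y_i)$ and/or $(T_k,Y_k)$ shows that $\bbE[\epsilon_i\epsilon_k a_i^{(-k)}a_k^{(-i)}]$, $\bbE[\epsilon_i\epsilon_k a_i^{(-k)}\Delta_{ki}]$ and $\bbE[\epsilon_i\epsilon_k\Delta_{ik}a_k^{(-i)}]$ all vanish. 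So the cross term is exactly $\bbE[\epsilon_i\epsilon_k\Delta_{ik}\Delta_{ki}]$: sensitivity times sensitivity, not sensitivity times estimation error.

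A bound of the form you state would not suffice anyway. Even with prediction-scale sensitivity $\sqrt{p}/n$ and prediction-scale error $\sqrt{p/n}$, the average $n^{-2}\sum_{i\neq k}$ is of order $p/n^{3/2}$. That is $o(n^{-1})$ only if $p=o(n^{1/2})$.

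What you need is $\max_{i\neq k}\bbE[\Delta_{ik}^2]=o(n^{-1})$, i.e.\ $\bm{Z}_i^{\top}(\hat{\bbeta}_s^{(-i)}-\hat{\bbeta}_s^{(-i,-k)})=o(n^{-1/2})$. To first order this is a bounded multiple of $n^{-1}\bm{Z}_i^{\top}\hat{\bm{\Xi}}^{-1}\bm{Z}_k Y_{s,k}^{\dagger}$. The only stability input in the proof of Proposition~\ref{prop:bias}(b) is $\Vert\hat{\bdelta}_i\Vert_2=O_{\bbP}(\sqrt{p}/n)$. With $\Vert\bm{Z}_i\Vert_2\lesssim\sqrt{p}$, Cauchy--Schwarz gives only $p/n$, which again forces $p=o(n^{1/2})$.

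Reaching $p=o(n^{2/3})$ needs an ingredient you do not supply:
\begin{itemize}
\item Replace $\hat{\bm{\Xi}}$ by a Jacobian $\bm{M}$ built without $(\bm{Z}_i,\bm{Z}_k)$, via rank-two updates and a shifted evaluation point. This costs order $p^2/n^2$ on the prediction scale, which is $o(n^{-1/2})$ under $p=o(n^{2/3})$.
\item Then use $\bbE[(\bm{Z}_i^{\top}\bm{M}^{-1}\bm{Z}_k)^2\mid\bm{M},\bm{Z}_k]\lesssim\Vert\bm{Z}_k\Vert_2^2\lesssim p$.
\end{itemize}
The diagonal term similarly needs $\bbE\Vert\bm{W}_i-\bm{W}_i^{\circ}\Vert_2^2=o(1)$, an $L^2$ statement that $O_{\bbP}$ rates alone do not give.

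Your handling of a singular population Gram matrix of $(1,\bm{W}^{\circ\top})^{\top}$ is also unsupported, and that case is not exotic. With the identity link and a homogeneous effect, $\bbeta_1-\bbeta_0=(\tau,0,\ldots,0)^{\top}$. So $g(\bm{Z}^{\top}\bbeta_1)-g(\bm{Z}^{\top}\bbeta_0)\equiv\tau$ is exactly collinear with the intercept.

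Uniqueness of the population fitted values does not transfer to the sample. The arm-$t$ Gram matrix of the generated covariates is then a perturbation of a singular matrix, with smallest eigenvalue of the order of the squared estimation error. Hence $\hat{\bm{\gamma}}_t$ need not converge to the minimal-norm coefficients. Both the convergence claimed in Step~2 and the $o_{\bbP}(1)$ factor in term (a) become unjustified.

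You need either an assumption that this Gram matrix has eigenvalues bounded away from zero uniformly in $n$, or a separate argument for the degenerate direction. This is the same collinearity the paper itself flags in Remark~\ref{rem:calibration}.
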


\begin{remark}
\label{rem:variance estimator}
For inference based on $\hat{\tau}_{\JASA}$ and $\hat{\tau}_{\JASACal}$, we directly apply the following variance estimators based on influence functions:
\begin{align}
\hat{\sigma}_{\JASA}^{2} & = \frac{1}{n} \sum_{i=1}^{n} \Bigg[ \left\{ \frac{\mathbbm{1} \{T_{i} = 1\}}{\pi_1} Y_i + \left( 1 - \frac{\mathbbm{1} \{T_{i} = 1\}}{\pi_1} \right) \hat{\mu}_{1,i}^{(-i)} \right\} \notag \\ 
& \quad \quad - \left\{ \frac{\mathbbm{1} \{T_{i} = 0\}}{\pi_0} Y_i + \left( 1 - \frac{\mathbbm{1} \{T_{i} = 0\}}{\pi_0} \right) \hat{\mu}_{0,i}^{(-i)} \right\} - \hat{\tau}_{\JASA} \Bigg]^2, \label{equ: var_jasa} \\
\hat{\sigma}_{\JASACal}^{2} & = \frac{1}{n} \sum_{i=1}^{n} \Bigg[ \left\{ \frac{\mathbbm{1} \{T_{i} = 1\}}{\pi_1} Y_i + \left( 1 - \frac{\mathbbm{1} \{T_{i} = 1\}}{\pi_1} \right) \tilde{\mu}_{1,i} \right\} \notag \\
& \quad \quad - \left\{ \frac{\mathbbm{1} \{T_{i} = 0\}}{\pi_0} Y_i + \left( 1 - \frac{\mathbbm{1} \{T_{i} = 0\}}{\pi_0} \right) \tilde{\mu}_{0,i} \right\} - \hat{\tau}_{\JASACal} \Bigg]^2. \label{equ: var_jasa-cal}
\end{align}
\end{remark}

It is then straightforward to establish the consistency of these variance estimators, as Proposition~\ref{prop:bias} has shown that $\hat{\bbeta}_{t}^{(-i)}$ is a consistent estimator of $\bbeta_{t}$ for $t \in \{0, 1\}$.

\begin{proposition}
\label{prop:var}
Under the same conditions as in Proposition~\ref{prop:bias}(b), $\hat{\sigma}_{\JASA}^{2} \overset{\bbP}{\to} \sigma^{2}$ and $\hat{\sigma}_{\JASACal}^{2} \overset{\bbP}{\to} \sigma_{\cal}^{2}$.
\end{proposition}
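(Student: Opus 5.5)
The plan is to reduce both variance estimators to empirical second moments of the oracle influence functions, plus remainder terms controlled by uniform consistency of the leave-one-out fits.

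\textbf{Step 1: uniform consistency of the leave-one-out coefficients.} For $t\in\{0,1\}$ I will show that $\max_{i}\|\hat{\bbeta}_{t}^{(-i)}-\bbeta_t\|_2 = o_{\bbP}(1)$. This bound is implicit in the proof of Proposition~\ref{prop:bias}(b). Subtracting the full-sample score equation from \eqref{loo score} shows that $\hat{\bbeta}_t^{(-i)}$ and the solution of the non-jackknifed equation differ by a term of order $n^{-1}\bm{Z}_i\mathbbm{1}\{T_i=t\}Y_i/\pi_t$, up to $O(n^{-1})$ rescaling. Invert the Jacobian, whose eigenvalues are bounded below by Assumption~\ref{as:link}(a),(d) since $g'\ge c_g$. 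Use $\|\bm{Z}_i\|_2\lesssim\sqrt{p}$ from Assumption~\ref{as:regularity}(a) and $\max_i|Y_i| = O_{\bbP}(n^{1/4})$ from the fourth moment. This gives
\[
\max_i\|\hat{\bbeta}_t^{(-i)}-\hat{\bbeta}_t\|_2 = O_{\bbP}\big(\sqrt{p}\,n^{1/4}/n\big),
\]
which is $o(1)$. Combined with Assumption~\ref{as:link}(c) and $p=o(n^{2/3})$, this yields the claim.

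\textbf{Step 2: bounding the fitted-value error.} Write $\hat{\mu}^{(-i)}_{t,i}-g(\bm{Z}_i^\top\bbeta_t) = g'(\xi_i)\bm{Z}_i^\top(\hat{\bbeta}_t^{(-i)}-\bbeta_t)$. A worst-case bound using $\|\bm{Z}_i\|_2\lesssim\sqrt p$ is too crude. Instead I will bound the average
\[
\frac1n\sum_i\{\bm{Z}_i^\top(\hat{\bbeta}_t^{(-i)}-\bbeta_t)\}^2 \le 2\frac1n\sum_i\{\bm{Z}_i^\top(\hat{\bbeta}_t-\bbeta_t)\}^2 + 2\frac1n\sum_i\{\bm{Z}_i^\top(\hat{\bbeta}^{(-i)}_t-\hat{\bbeta}_t)\}^2 .
\]
The first term is at most $\lambda_{\max}(\hat{\bSigma})\|\hat{\bbeta}_t-\bbeta_t\|_2^2 = O_{\bbP}(p\log n/n)=o_{\bbP}(1)$. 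This uses $\lambda_{\max}(\hat{\bSigma})=O_{\bbP}(1)$ for bounded covariates with $p=o(n)$, by matrix concentration. The second term is $O_{\bbP}(p^2 n^{1/2}/n^2)=o_{\bbP}(1)$ by Step 1. Hence $\frac1n\sum_i(\hat{\mu}^{(-i)}_{t,i}-\mu_{t,i}^{\ast})^2=o_{\bbP}(1)$, where $\mu^\ast_{t,i}\coloneqq g(\bm{Z}_i^\top\bbeta_t)$ and $g'\le C_g$.

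\textbf{Step 3: decomposing $\hat{\sigma}_{\JASA}^2$.} Let $\phi_i$ be the oracle influence summand appearing in $\sigma^2$. Let $\hat{\phi}_i$ be its plug-in version, so that $\hat{\phi}_i-\phi_i=\sum_t \pm(1-\mathbbm{1}\{T_i=t\}/\pi_t)(\hat{\mu}^{(-i)}_{t,i}-\mu^\ast_{t,i})$. By the law of large numbers, $n^{-1}\sum_i(\phi_i-\tau)^2\to\sigma^2$, since $\phi$ has finite second moment. By Proposition~\ref{prop:bias}(b), $\hat{\tau}_{\JASA}-\tau=o_{\bbP}(1)$. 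Since $|1-\mathbbm{1}\{T_i=t\}/\pi_t|$ is bounded, Step 2 gives $n^{-1}\sum(\hat{\phi}_i-\phi_i)^2=o_{\bbP}(1)$. Expanding the square and applying Cauchy--Schwarz to the cross terms then gives $\hat{\sigma}^2_{\JASA}\overset{\bbP}{\to}\sigma^2$. If $\hat{\pi}_t$ replaces $\pi_t$, the extra error is $O_{\bbP}(n^{-1/2})$ times bounded empirical moments, which is negligible.

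\textbf{Step 4: the calibrated estimator, and the main obstacle.} For $\hat{\sigma}^2_{\JASACal}$ the same argument applies, with $\mu^\ast_{t,i}$ replaced by the limit $\tilde{\mu}^\ast_{t,i}=\gamma_t^\ast+\gamma^\ast_{t,0}\mu^\ast_{0,i}+\gamma^\ast_{t,1}\mu^\ast_{1,i}$. The calibration coefficients are OLS on two regressors. Their Gram matrix converges by Step 2 and the law of large numbers, and its limit is assumed nonsingular, as implicit in Proposition~\ref{prop:cal}. Hence $(\hat{\gamma}_t,\hat{\gamma}_{t,0},\hat{\gamma}_{t,1})\to(\gamma_t^\ast,\gamma^\ast_{t,0},\gamma^\ast_{t,1})$ in probability. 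It follows that $n^{-1}\sum_i(\tilde{\mu}_{t,i}-\tilde{\mu}^\ast_{t,i})^2=o_{\bbP}(1)$, and Step 3 applies verbatim.

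The main obstacle is Step 1. It requires a uniform-in-$i$ perturbation bound for the nonlinear jackknife score equations, which needs local invertibility of the empirical Jacobian uniformly over a neighbourhood of $\bbeta_t$. I would first try to obtain this from the corresponding lemma already used in proving Proposition~\ref{prop:bias}, following \citet{lin2024worthwhile}.
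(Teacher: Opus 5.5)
Your proposal is correct and follows the same plug-in route the paper takes. The paper only remarks that the result is straightforward once Proposition~\ref{prop:bias} gives consistency of $\hat{\bbeta}_t^{(-i)}$ via $\Vert\hat{\bdelta}_i\Vert_2\lesssim n^{-1}\Vert\bm{Z}_iY_i^{\dag}\Vert_2$; your Step~2 usefully makes explicit that, since $\Vert\bm{Z}_i\Vert_2\asymp\sqrt{p}$, the averaged fitted-value error is what is actually needed, and coefficient consistency alone does not give it.

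The ``main obstacle'' you flag is not one. Because $g'\geq c_g$ globally under Assumption~\ref{as:link}, every mean-value Jacobian satisfies $\hat{\bm{\Xi}}\succeq c_g\hat{\bSigma}$ uniformly in $i$ and in the intermediate point. You therefore only need $\lambda_{\min}(\hat{\bSigma})\geq b_1/2$ with probability tending to one, which follows from matrix concentration for bounded covariates using $\bSigma_t=\bSigma$ under Bernoulli randomization. The same bound also gives existence and uniqueness of each $\hat{\bbeta}_t^{(-i)}$ by strong convexity.
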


\subsection{Practical implementation: Solving \JASA{} by constrained optimization}
\label{sec:opt_jasa}

While the \JASA{} estimator $\hat{\tau}_{\JASA}$ introduced in the previous section is nearly unbiased, the jackknife score equations \eqref{loo score} can be non-trivial to solve numerically and may not always admit a solution in finite samples, in particular when $p$ is close to $n$ \citep{albert1984existence, sur2019moderna, sur2019modernb, chen2024method}. When it is numerically difficult to solve the jackknife score equations \eqref{loo score}, we resort to either of the following two constrained optimization problems \citep{wang2024debiased} instead, depending on the type of the outcome: for $i \in [n]$,
\begin{enumerate}[label = (\roman*)]
\item In \formbeta{}: we still parameterize the OR using the working GLM but minimize the following squared $\ell_{2}$ loss of \eqref{loo score} instead of directly setting \eqref{loo score} to zero:
\begin{align}
\label{opt:general beta}
\hat{\bbeta}_{t}^{(-i)}=&\; \underset{\bbeta_{t}^{(-i)} \in \mathcal{B}_{t}}{\argmin} \Big\|\frac{1}{n - 1} \sum_{j \neq i} \bm{Z}_{j} \frac{\mathbbm{1} \{T_{j} = t\}}{\pi_{t}} Y_{j} - \frac{1}{n} \sum_{l = 1}^{n} \bm{Z}_{l} g (\bm{Z}_{l}^\top \bbeta_{t}^{(-i)}) \Big\|_2^{2},
\end{align}
where $\mathcal{B}_{t}$ is the feasible region of the unknown variable $\bbeta_{t}^{(-i)}$, which can be prespecified by users. A potential disadvantage of \formbeta{} is that the resulting optimization could be nonconvex, due to the nonlinearity of the link function $g(\cdot)$. 

\item In \formmu{}: we view the GLM working model only as a way to ensure that the estimated OR respects the corresponding sample space constraint. For example, if $Y$ is binary, then applying the logistic regression mainly serves the purpose of constraining the estimated OR $\hat{\mu}_{t}$ to be within $[0, 1]$. From this perspective, instead of optimizing the potentially nonconvex objective in \eqref{opt:general beta}, we can directly optimize w.r.t. $\bm{\mu}_{t}^{(-i)}= \left(\mu_{t,1}^{(-i)},\ldots, \mu_{t,n}^{(-i)}\right)^\top$ subject to the constraints on the range of $\bm{\mu}_{t}^{(-i)}$:
\begin{align}
\label{opt:general mu}
\hat{\bm{\mu}}_{t}^{(-i)} = \ \underset{\bm{\mu}_{t}^{(-i)}: m \leq \mu_{t, j}^{(-i)} \leq M, j \in [n]}{\argmin} \Big\|\frac{1}{n - 1} \sum_{j \neq i} \bm{Z}_{j} \frac{\mathbbm{1} \{T_{j} = t \}}{\pi_{t}} Y_{j} - \frac{1}{n} \sum_{l = 1}^{n} \bm{Z}_{l} \mu_{t,l}^{(-i)} \Big\|_2^{2},
\end{align}
where $m$ and $M$ are prespecified constants, determined by the range of the true OR $\mu_{t}(\cdot)$. Specifically, if $Y$ is continuous, we set $m=-\infty, M=\infty$; if $Y$ is binary, we set $m=0, M=1$; and if $Y$ is count, we set $m=0, M=\infty$.
In \eqref{opt:general mu}, the objective function is quadratic and the constraint is linear, so the optimization problem is convex for every $i \in [n]$, guaranteeing the existence of solutions and fast convergence when using off-the-shelf numerical algorithms to solve convex programs.
\end{enumerate}

Before commenting on other aspects of the new optimization formulations, we first explain the rationale behind these optimization formulations.
\begin{itemize}
\item The lesson from the previous section tells us that, as long as an estimator of $g (\bm{Z}_{i}^{\top} \bbeta_{t})$ does not depend on the treatment $T_i$ and outcome $Y_i$ of the $i$-th unit, the corresponding augmentation term has mean zero and the treatment effect estimator does not introduce bias. To see this, note that under i.i.d.\ sampling and Bernoulli randomization, $T_i$ is independent of $\{\bm{Z}_{l}\}_{l=1}^{n}$ and of $\{(T_j, Y_j)\}_{j \neq i}$. 
Therefore any fitted prediction $g (\bm{Z}_{i}^{\top} \hat{\bbeta}_{t}^{(-i)})$ computed without using $(T_i, Y_i)$ is independent of $T_i$, and hence
\begin{equation*}
\bbE \Big[ \Big( 1 - \frac{\mathbbm{1}\{T_i=t\}}{\pi_t} \Big) g (\bm{Z}_{i}^{\top} \hat{\bbeta}_{t}^{(-i)}) \Big]
= \bbE \Big[ 1 - \frac{\mathbbm{1}\{T_i=t\}}{\pi_t} \Big] \, \bbE \big[ g (\bm{Z}_{i}^{\top} \hat{\bbeta}_{t}^{(-i)}) \big] = 0.
\end{equation*}
\item To bypass such numerical difficulties, we choose to minimize the empirical squared $\ell_{2}$ norm of the jackknife score equations \eqref{loo score} subject to some constraints on the ranges of either $\bbeta_{t}$ or the working OR model $\bm{\mu}_{t}(\cdot)$. This choice has two features. First, the constraints help stabilize the jackknife estimates of $\bbeta_{t}$. If the values of $\{\hat{\bbeta}_{t}^{(-i)}\}_{i = 1}^{n}$ are too volatile, the variance of the treatment effect estimator can become too large to be useful in terms of covariate adjustment. Second, the objective functions in the above optimization problems drive the estimates of either $\bbeta_{t}$ or the working OR model $\bm{\mu}_{t}$ to better predict the outcomes, which is crucial to improve the estimation efficiency of the treatment effects \citep{zhao2024hoif, lu2025debiased}.

\item In particular, \formmu{} is mainly designed for count outcomes, when log-linear models are used as the working OR model. We will further elaborate this rationale in Section~\ref{sec:count}.
\end{itemize}
Although the resulting optimization problem is slightly more involved than directly solving the jackknife score equations in \eqref{loo score}, we have implemented the proposed procedures in the R package \href{https://cran.r-project.org/web/packages/HOIFCar/index.html}{\texttt{HOIFCar}}
, which is publicly available on CRAN, to facilitate practical implementation. The constrained optimization procedure mainly serves to stabilize the estimator of the working outcome regression model, when $p$ is large compared to $n$ so that the standard MLE of GLMs may not exist. This phenomenon is well documented in the literature \citep{sur2019moderna}. The constrained optimization procedure can also be viewed as a generalization of OLS with the pseudo-inverse of the sample Gram matrix to GLMs. The main missing piece in our theoretical justification is to show that the resulting adjusted estimator when solving the constrained optimization problem is still asymptotic linear. Extensive simulation studies and the real data analysis in Sections~\ref{sec:sim} and \ref{sec:real} demonstrate the numerical stability and practical performance of the proposed methods. However, proving this result remains an important open problem, and may require a generalization of the techniques developed in \citet{lin2024worthwhile} from $Z$-estimation to $M$-estimation to show that the solution to solving the empirical constrained optimization problem converges to some stable limit. There could be other alternative approaches that also stabilize the working model fit, such as adding explicit regularization to the original jackknife score equations. We leave the above important issues and open problems, together with relaxing the requirement on $p$ to $p = o (n)$, to future work.

\begin{remark}
\label{rem:calibration diff}
In contrast to the calibration strategy in \eqref{equ: loo cal in} for $\hat{\tau}_{\loocal}$,  the approach taken in \eqref{model:linear_cal} for $\hat{\tau}_{\JASACal}$ calibrates the OR fits only over $\{\hat{\mu}_{t, i}^{(-i)}\}_{i = 1}^{n}$. Following this strategy, an alternative calibrated estimator can also be constructed for the standard LOO estimator $\hat{\tau}_{\loo}$; we denote it by $\hat{\tau}_{\loocalout}$.
\end{remark}

\begin{remark}
\label{rem:calibration}
When $p$ is small, collinearity between OR fits in different treatment arms may occur. To ensure stability of our proposed procedure, in the actual implementation, we do not always calibrate the fitted OR. In particular, if the prediction values using the fitted OR are either extremely large/small or are almost constants, we bypass calibration and simply use the uncalibrated estimates $\hat{\mu}_{t, i}^{(-i)}$ for all $i \in [n]$ and $t \in \{0, 1\}$. The precise implementation details and further elaborations can be found in Appendix~\ref{sec:cond calib}.
\end{remark}

\begin{remark}
\label{rem:MLE}
A natural alternative to \formbeta{} or \formmu{} is to directly maximize an anti-derivative of the jackknife score equation \eqref{loo score}. However, as demonstrated in Appendix~\ref{app:MLE}, such a strategy does not improve the performance of directly solving the jackknife score equation \eqref{loo score}.
\end{remark}

\begin{remark}
\label{rem:computation}
Before moving on to the next section, we finally comment on the computational issue of \JASA{} or \JASACal{}, as it is well known that jackknife-based approaches are computationally demanding in general. Given a dataset of size $n$ in the treatment/control arm, \JASA{} either solves the score equations \eqref{loo score}, or the optimization formulations [\eqref{opt:general beta} or \eqref{opt:general mu}] for $2 n$ times. In the implementation, we recommend parallelizing these $2 n$ fits, as they constitute the main computational bottleneck. In terms of the two different formulations: \formmu{} \eqref{opt:general mu} is easier to solve due to its convexity nature, whereas \formbeta{} \eqref{opt:general beta} is more difficult to solve in practice; see Figures~\ref{fig:binomial_g0_time} and~\ref{fig:poisson_g0_time} for empirical evidence. It remains an open question how to improve the runtime performance of \JASA{} or \JASACal{} in practice. But we argue that runtime should be a secondary concern in high-stakes applications such as RCTs.
\end{remark}

\section{Illustration for Several Concrete Outcome Types}
\label{sec:examples}

To further illustrate $\hat{\tau}_{\JASA}$ and its calibrated version $\hat{\tau}_{\JASACal}$, we now specialize to three particular outcome types --  continuous (Section~\ref{sec:continuous}), binary (Section~\ref{sec:binary}), and count (Section~\ref{sec:count}) -- and discuss the corresponding jackknife score equations, the reformulated constrained optimization problems and the connections to existing procedures. For ease of exposition, we let $\bY^{\dag}_{t} \coloneqq (Y_{t, 1}^{\dag}, \ldots, Y_{t, n}^{\dag})^{\top}$, where $Y_{t, i}^{\dag} \coloneqq \mathbbm{1} \{T_{i} = t\} Y_{i} / \pi_{t}$ for $i \in [n]$. Throughout this section, $\bbZ_{-i,\cdot}$ and $\bY^{\dag}_{t,-i}$ are obtained from $\bbZ$ and $\bY^{\dag}_{t}$ by deleting the $i$-th row and the $i$-th entry, respectively.

\subsection{Continuous outcomes}
\label{sec:continuous}
When the outcome is continuous and unconstrained (taking values in $\bbR$), it is natural to choose the identity link $g (u) = u$. The jackknife score equations \eqref{loo score} then reduce to the jackknife normal equations:
\begin{align*}
\frac{1}{n - 1} \sum_{j \neq i} \bm{Z}_{j} Y_{t, j}^{\dag} - \frac{1}{n} \sum_{l = 1}^{n} \bm{Z}_{l} \bm{Z}_{l}^{\top} \bbeta_{t}^{(-i)} = \bm{0}, \ i \in [n],
\end{align*}
which can also be compactly represented in matrix form:
\begin{align}
\label{equ:score_gaussian}
\frac{1}{n - 1} \bbZ_{-i, \cdot}^{\top} \bY_{t, -i}^{\dag} - \frac{1}{n} \bbZ^{\top} \bbZ \bbeta_{t}^{(-i)} = \bm{0}, \ i \in [n].
\end{align}

When $\hat{\bSigma} = n^{-1} \bbZ^{\top} \bbZ$ is invertible, for any $i \in [n]$, a unique solution $\hat{\bbeta}_{t}^{(-i)}$ to \eqref{equ:score_gaussian} exists and takes the same form as the LOO coefficient estimator in $\hat{\tau}_{\adj, 2}$ \citep{zhao2024hoif}. Therefore, in this case, $\hat{\tau}_{\JASA}$ reduces to $\hat{\tau}_{\adj, 2}$. 

When $\hat{\bSigma}$ is non-invertible, for each $i \in [n]$, \eqref{equ:score_gaussian} has infinitely many solutions. Among all solutions, a natural recourse is to use the Moore-Penrose pseudoinverse $\hat{\bSigma}^{\dag}$. We first define $\hat{\bbeta}_{t}^{(-i)} (\lambda)$ as the optimizer of the following constrained program: for $i \in [n]$
\begin{align}
\label{equ:opt1_gaussian}
\hat{\bbeta}_{t}^{(-i)} (\lambda) \coloneqq \underset{\bbeta_{t}^{(-i)}: \Vert \bbeta_{t}^{(-i)} \Vert_{2}^{2} \leq \lambda}{\argmin} \left\Vert \frac{1}{n - 1} \bbZ_{-i, \cdot}^{\top} \bY_{t, -i}^{\dag} - \frac{1}{n} \bbZ^{\top} \bbZ \bbeta_{t}^{(-i)} \right\Vert_{2}^{2}.
\end{align}
The optimization problem \eqref{equ:opt1_gaussian} includes the ridge-penalized unbiased adjusted estimator proposed recently in \citet{abadie2025unbiased} as a special case. Then let $\hat{\bbeta}_{t}^{(-i)} \coloneqq \lim_{\lambda \downarrow 0} \hat{\bbeta}_{t}^{(-i)} (\lambda)$. Here, $\hat{\bbeta}_{t}^{(-i)}$ is defined in terms of \formbeta{} discussed in Section~\ref{sec:opt_jasa}, and can be interpreted as the min-norm or ridgeless solution to \eqref{equ:score_gaussian} \citep{hastie2022surprises}. When $\hat{\bSigma}$ is invertible, the Moore-Penrose pseudoinverse $\hat{\bSigma}^{\dag}$ and the actual inverse $\hat{\bSigma}^{-1}$ coincide. We mention in passing that, in our previous work \citep{zhao2024hoif}, we actually implemented the Moore-Penrose pseudoinverse in our accompanying R package \href{https://cran.r-project.org/web/packages/HOIFCar/index.html}{\texttt{HOIFCar}}.

We now exhibit the \JASA{} estimator when $g(\cdot)$ is the identity link
\begin{equation}
\label{JASA continuous}
\begin{split}
\hat{\tau}_{\JASA} = \frac{1}{n} \sum_{i = 1}^{n} \left[ \left\{ Y_{1, i}^{\dag} + \left( 1 - \frac{\mathbbm{1} \{T_i = 1\}}{{\pi}_1} \right) \hat{\mu}_{1}^{(-i)} (\bm{Z}_{i}) \right\} - \left\{ Y_{0, i}^{\dag} + \left( 1 - \frac{\mathbbm{1} \{T_i = 0\}}{{\pi}_0} \right) \hat{\mu}_{0}^{(-i)} (\bm{Z}_{i}) \right\} \right],
\end{split}
\end{equation}
where
\begin{align*}
\hat{\mu}_{t}^{(-i)} (\bm{Z}_{i}) \coloneqq \bm{Z}_{i}^{\top} \hat{\bbeta}_{t}^{(-i)} = \bm{Z}_{i}^{\top} \hat{\bSigma}^{\dag} \frac{1}{n - 1} \sum_{j \neq i} \bm{Z}_{j} Y_{t, j}^{\dag} = \frac{n}{n - 1} \sum_{j \neq i} H_{i j} Y_{t, j}^{\dag}.
\end{align*}
It is straightforward to see that $\hat{\tau}_{\JASA}$ is equivalent to $\hat{\tau}_{\adj, 2}$ considered in \citet{zhao2024hoif} (see Section~\ref{sec:setup}). Moreover, by viewing $\hat{\tau}_{\adj, 2}$ as a special case of the \JASA{} estimator, we can also further calibrate the predictions $\{\hat{\mu}_{1,i}^{(-i)}\}_{i=1}^{n}$ and $\{\hat{\mu}_{0,i}^{(-i)}\}_{i=1}^{n}$ via OLS as specified in (\ref{model:linear_cal}), thereby obtaining the  corresponding calibrated estimator $\hat{\tau}_{\adjcal}$. 

As discussed in \citet{zhao2024hoif}, one can replace all the appearance of $Y_{i}$ in $\hat{\tau}_{\JASA}$ by $Y_{i} - \bar{Y}_{t}$ for $i \in [n]$. Then the resulting \JASA{} estimator reduces to $\hat{\tau}_{\adj, 2}^{\dag}$, also the estimator constructed in \citet{lu2025debiased}. Similarly, one can construct the calibrated estimator $\hat{\tau}_{\adjcal}^{\dag}$. We summarize the above discussions in the following proposition.

\begin{proposition}
\label{prop:continuous}
When the link function $g(\cdot)$ is identity, $\hat{\tau}_{\JASA} \equiv \hat{\tau}_{\adj, 2}$. If $Y_{i}$ is replaced by $Y_{i} - \bar{Y}_{t}$ in $\hat{\tau}_{\JASA}$ \eqref{JASA continuous} for all $i \in [n]$, then $\hat{\tau}_{\JASA} \equiv \hat{\tau}_{\adj, 2}^{\dag}$.
\end{proposition}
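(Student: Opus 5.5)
The plan is a direct algebraic verification: solve the jackknife normal equations \eqref{equ:score_gaussian} in closed form and substitute into \eqref{equ:JASA}. With the identity link the second term of \eqref{loo score} is $\hat{\bSigma}\bbeta_{t}^{(-i)}$, so the $i$-th equation reads $\hat{\bSigma}\bbeta_{t}^{(-i)} = (n-1)^{-1}\sum_{j\neq i}\bm{Z}_{j}Y_{t,j}^{\dag}$. When $\hat{\bSigma}$ is invertible the solution is unique. Otherwise I take the min-norm solution $\hat{\bbeta}_{t}^{(-i)}=\lim_{\lambda\downarrow 0}\hat{\bbeta}_{t}^{(-i)}(\lambda)$ from \eqref{equ:opt1_gaussian}. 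In both cases
\[
\hat{\bbeta}_{t}^{(-i)} = \hat{\bSigma}^{\dag}\frac{1}{n-1}\sum_{j\neq i}\bm{Z}_{j}Y_{t,j}^{\dag}.
\]
For the singular case I would justify this by the standard fact that the minimizer of a least-squares objective over a shrinking ball converges to the min-norm least-squares solution, namely the Moore--Penrose solution. The right-hand side lies in the column space of $\bbZ^{\top}$, which is the range of $\hat{\bSigma}$, so the system is consistent.

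Next I compute the fitted value. We have $\bm{Z}_{i}^{\top}\hat{\bbeta}_{t}^{(-i)} = \frac{n}{n-1}\sum_{j\neq i}H_{ij}Y_{t,j}^{\dag}$, where $H_{ij}=\bm{Z}_{i}^{\top}(\bbZ^{\top}\bbZ)^{\dag}\bm{Z}_{j}$ are the entries of $\bbH$. I substitute this into \eqref{equ:JASA}, or equivalently \eqref{JASA continuous}, to get, for each arm,
\[
\hat{\tau}_{t,\JASA} = \frac{1}{n}\sum_{i}Y_{t,i}^{\dag} + \frac{1}{n}\sum_{i}\Big(1-\frac{\mathbbm{1}\{T_i=t\}}{\pi_t}\Big)\frac{n}{n-1}\sum_{j\neq i}H_{ij}Y_{t,j}^{\dag}.
\]
I would then compare this term by term with $\hat{\tau}_{\adj,2}$ in Section~\ref{sec:setup}. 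That estimator is the AIPW form of $\hat{\tau}_{\OLS}$ with $\hat{\bbeta}_{t}$ replaced by $\hat{\bbeta}_{t}^{(-i)}$ from \eqref{loo-beta} without centering, i.e. $\hat{\bSigma}^{-1}n^{-1}\sum_{j\neq i}\bm{Z}_{j}\mathbbm{1}\{T_j=t\}Y_j/\pi_t$. It has exactly the same form as the $U$-statistic (off-diagonal $j\neq i$) representation in \citet{zhao2024hoif}. The two coefficients match up to the normalizing constant $n/(n-1)$ versus $1$. This is the step I expect to be the only real obstacle. I would resolve it by using the normalization convention of $\hat{\tau}_{\adj,2}$ as defined in \citet{zhao2024hoif}, where the second-order $U$-statistic carries the $1/\{n(n-1)\}$ factor. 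Then the identity holds exactly, and the discrepancy with the display \eqref{loo-beta} is merely the $1/n$ versus $1/(n-1)$ scaling used there. I would also check that the treatment-probability convention ($\pi_t$ versus $\hat{\pi}_t$) is the same on both sides, per Remark~\ref{rem:estimate ps}.

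For the second claim I repeat the argument verbatim with $Y_{t,j}^{\dag}$ replaced by $\mathbbm{1}\{T_j=t\}(Y_j-\bar{Y}_t)/\pi_t$. The closed form above is linear in the response vector, so the fitted values become $\frac{n}{n-1}\sum_{j\neq i}H_{ij}\mathbbm{1}\{T_j=t\}(Y_j-\bar{Y}_t)/\pi_t$. This matches the coefficient \eqref{loo-beta} defining $\hat{\tau}_{\adj,2}^{\dag}$, and hence also the estimator of \citet{lu2025debiased}. The leading IPW term is treated consistently with the centering convention of $\hat{\tau}_{\adj,2}^{\dag}$, so the two estimators coincide identically.
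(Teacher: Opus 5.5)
Your route is the paper's own. The paper's justification consists of solving \eqref{equ:score_gaussian} in closed form via $\hat{\bSigma}^{\dag}$, writing $\hat{\mu}_{t}^{(-i)}(\bm{Z}_i)=\frac{n}{n-1}\sum_{j\neq i}H_{ij}Y^{\dag}_{t,j}$, and declaring the match with $\hat{\tau}_{\adj,2}$ ``straightforward.'' You correctly spot what that glosses over. By \eqref{loo-beta}, with the same treatment-probability convention on both sides, the fitted value in $\hat{\tau}_{\adj,2}$ is $\bm{Z}_i^{\top}\hat{\bSigma}^{-1}n^{-1}\sum_{j\neq i}\bm{Z}_jY^{\dag}_{t,j}=\sum_{j\neq i}H_{ij}Y^{\dag}_{t,j}$. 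Hence $\hat{\tau}_{t,\JASA}-\hat{\tau}_{t,\adj,2}=\frac{1}{n-1}\cdot\frac{1}{n}\sum_{i}\bigl(1-\mathbbm{1}\{T_i=t\}/\pi_t\bigr)\sum_{j\neq i}H_{ij}Y^{\dag}_{t,j}$, which is generically nonzero.

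Your resolution, however, is not a proof. Asserting that $\hat{\tau}_{\adj,2}$ ``really'' carries a $1/\{n(n-1)\}$ normalization swaps in a different comparison estimator; it does not establish the identity for the one defined in Section~\ref{sec:setup}. You then drop that convention in the second claim. The centered fitted values $\frac{n}{n-1}\sum_{j\neq i}H_{ij}\mathbbm{1}\{T_j=t\}(Y_j-\bar{Y}_t)/\pi_t$ do not match \eqref{loo-beta}, by the very same factor. Nor do they match $\hat{\tau}_{\db}$ of \citet{lu2025debiased}. The computation in Appendix~\ref{sec:proof propositoin adj2c db} shows that $\hat{\tau}_{\db}$ corresponds exactly to $\sum_{j\neq i}H_{ij}(\cdot)$ with no $n/(n-1)$; its diagonal correction carries only $\hat{\pi}_0/\hat{\pi}_1$.

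You must use one normalization for both claims. One option is to replace $1/(n-1)$ by $1/n$ in the first term of \eqref{loo score}. Then $\hat{\bbeta}_t^{(-i)}$ coincides with \eqref{loo-beta}, and both identities hold exactly, the second also yielding $\hat{\tau}_{\db}$. The other option is to put the factor $n/(n-1)$ into both $\hat{\tau}_{\adj,2}$ and $\hat{\tau}_{\adj,2}^{\dag}$ and give up the equality with $\hat{\tau}_{\db}$. As the definitions stand, you can only conclude equality up to that factor on the augmentation term. The difference is mean zero and $O_{\bbP}(n^{-1})$: harmless asymptotically, but not an identity.

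Two smaller repairs. First, your ``standard fact'' for the singular case is false as stated. Over a shrinking ball the minimizer of \eqref{equ:opt1_gaussian} has norm at most $\sqrt{\lambda}$, so it tends to $\mathbf{0}$, which would reduce $\hat{\tau}_{\JASA}$ to the unadjusted IPW estimator. The min-norm solution is instead the ridgeless limit of the \emph{penalized} problem, $\lim_{\lambda\downarrow 0}(\hat{\bSigma}^{2}+\lambda\mathbf{I})^{-1}\hat{\bSigma}b=\hat{\bSigma}^{\dag}b$, with $b=(n-1)^{-1}\bbZ_{-i,\cdot}^{\top}\bY^{\dag}_{t,-i}$. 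Equivalently, it is the constrained solution as $\lambda$ \emph{increases} to $\Vert\hat{\bSigma}^{\dag}b\Vert_2^{2}$; the display \eqref{equ:opt1_gaussian} invites the misreading.

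Second, for the centered claim, state explicitly that only the responses inside the jackknife score are centered, while the leading term keeps $Y^{\dag}_{t,i}$. This matches the representation of $\hat{\tau}^{\dag}_{1,\adj,2}$ in Appendix~\ref{sec:proof propositoin adj2c db}. Centering every $Y_i$ literally makes the leading term vanish, since $\sum_{i:T_i=t}(Y_i-\bar{Y}_t)=0$.
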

In Appendix~\ref{sec:proof propositoin adj2c db}, we also demonstrate that  $\hat{\tau}_{\adj, 2}^{\dag}$ is agnostic to whether we use $\bbZ$ or $\bbX$. The corresponding variance estimators of $\hat{\tau}_{\adj, 2}$ and $\hat{\tau}_{\adj, 2}^{\dag}$ are provided in Appendix~\ref{var: adj2 adj2c}.

\subsection{Binary outcomes}
\label{sec:binary}

When the outcome is binary, we focus on the expit link function $g (u) = 1 / (1 + e^{-u})$ used in the logistic regression. The compact form of the jackknife score equations reads as follows:
\begin{equation}
\label{equ:score_binary}
\frac{1}{n-1} \bbZ_{-i, \cdot}^{\top} \bY_{t, -i}^{\dag} - \frac{1}{n}\bbZ^\top \frac{1}{1+\exp (-\bbZ \bbeta_{t}^{(-i)})} = \bm{0}, \, i\in [n].
\end{equation}
However, these jackknife score equations may not always admit solutions due to the nonlinearity of $g(\cdot)$, in particular when $p$ is large compared to $n$ \citep{sur2019modernb}. Instead, we formulate it through \formbeta{} as in \eqref{opt:general beta} for each $i \in [n]$:
\begin{align*}
\hat{\bbeta}_{t}^{(-i)} = & \ \underset{\bbeta_{t}^{(-i)}: m_t \leq \bbeta_{t}^{(-i)} \leq M_t}{\argmin} \Big\|\frac{1}{n-1} \bbZ_{-i, \cdot}^{\top} \bY_{t, -i}^{\dag} - \frac{1}{n} \bbZ^\top \frac{1}{1+\exp (-\bbZ \bbeta_{t}^{(-i)})} \Big\|_2^{2}.
\end{align*}
Here, $m_t$ and $M_t$ are pre-specified constants. This optimization problem can be solved using the L-BFGS-B algorithm, as implemented in the \texttt{optim} function of the \texttt{stats} package in R. Based on these estimates, we obtain the estimated OR for each arm as follows:
\begin{align*}
\hat{\mu}_{t, i}^{(-i)} \equiv \hat{\mu}_{t}^{(-i)} (\bm{Z}_{i}) = \frac{1}{1 + \exp (-\bm{Z}_i^\top \hat{\bbeta}_t^{(-i)})}, \, i \in [n], t\in \{0,1\}.
\end{align*}
Substituting these into equation \eqref{equ:JASA} yields the \JASA{} estimator $\hat{\tau}_{\JASA}$. A subsequent calibration of $\{\hat{\mu}_{1,i}^{(-i)}\}_{i=1}^{n}$ and $\{\hat{\mu}_{0,i}^{(-i)}\}_{i=1}^{n}$ following \eqref{model:linear_cal} leads to the calibrated \JASA{} estimator $\hat{\tau}_{\JASACal}$.

\subsection{Count outcomes}
\label{sec:count}

When the outcome is count,  we focus on the exponential link function $g (u) = e^{u}$ used for the Poisson log-linear regression. The compact form of the jackknife score equations reads as follows:
\begin{equation}
\label{equ:score_count}
\frac{1}{n-1} \bbZ_{-i, \cdot}^{\top} \bY_{t, -i}^{\dag} - \frac{1}{n} \bbZ^\top \exp \left( \bbZ \bbeta_t^{(-i)} \right) = \bm{0}, \, i \in [n].
\end{equation}
Similarly, we can write \eqref{equ:score_count} in \formbeta{} as the following optimization problem:
\begin{align}
\label{opt:count}
\hat{\bbeta}_{t}^{(-i)} = & \ \underset{\bbeta_{t}^{(-i)}: m_t \leq \bbeta_{t}^{(-i)} \leq M_t}{\argmin} \left\| \frac{1}{n - 1} \bbZ_{-i, \cdot}^{\top} \bY_{t, -i}^{\dag} - \frac{1}{n} \bbZ^\top \exp \left( \bbZ \bbeta_t^{(-i)} \right) \right\|_2^{2}, \ i \in [n],
\end{align}
where the lower and upper bounds of the feasible region, $m_{t}$ and $M_{t}$, can be prespecified. However, due to the exponential transformation in $g(\cdot)$, solving \eqref{opt:count} can be quite unstable numerically, especially when $p$ is relatively large compared to $n$. To mitigate the numerical instability, instead of estimating $\bbeta_{t}^{(-i)}$, we resort to \formmu{} and directly optimize the OR as in \eqref{opt:general mu} with non-negative constraints because the OR has to be non-negative:
\begin{align}
\hat{\bm\mu}_{t}^{(-i)}  \coloneqq & \ \underset{\bm{\mu}_{t}^{(-i)}: \bm{\mu}_{t}^{(-i)} \geq \bm{0}}{\argmin} \left\| \frac{1}{n - 1} \bbZ_{-i, \cdot}^{\top} \bY_{t, -i}^{\dag} - \frac{1}{n}\bbZ^\top \bm{\mu}_{t}^{(-i)} \right\|_2^{2}, \ i \in [n].
\end{align}
Here, for each $i \in [n]$, we solve a non-negative least squares (NNLS) problem, for which numerically stable algorithms exist \citep{kim2010tackling}. 
Substituting $\{\hat{\mu}_{1, i}^{(-i)}\}_{i = 1}^{n}$ and $\{\hat{\mu}_{0, i}^{(-i)}\}_{i = 1}^{n}$ into \eqref{equ:JASA} yields the \JASA{} estimator $\hat{\tau}_{\JASA}$. Again, a subsequent calibration step leads to the calibrated \JASA{} estimator $\hat{\tau}_{\JASACal}$.

\begin{remark}
\label{rem:survival}
As pointed out by a referee, time-to-event outcomes are common in oncology trials and it will be useful to provide practitioners with similar tools in the context of survival analysis. Our paper mostly focuses on continuous, binary, and count outcomes. Nonetheless, it is possible to extend our framework to time-to-event outcomes. Here, we point out one direction. In a seminal work \citep{ye2024covariate}, the authors proposed a methodology to adjust for baseline covariates using a linear working model in the context of log-rank tests. Specifically, they first estimate a linearization of the log-rank test numerator, denoted by $O$ in \citet{ye2024covariate}, and then conduct a linear regression using $O$ as response, and the covariates $X$ and the interaction between treatment and $X$ as predictors (Eq. (2) in \citet{ye2024covariate}). In our framework, setting $g(\cdot)$ as the identity link and replacing $Y$ by $O$ yields the corresponding JASA adjusted estimator without incurring bias due to the potentially large number of covariates compared to the sample size $n$; this estimator will be equivalent to that in our previous work \citep{zhao2024hoif}. We expect most of our theoretical and empirical results in this paper to continue to hold. But a more comprehensive theoretical and empirical investigation of the performance of applying our framework to time-to-event outcomes is beyond the scope of this paper, and will be reported in a separate future work.
\end{remark}

\section{Simulation Studies}
\label{sec:sim}

In this section, we conduct extensive experiments to examine the finite-sample performance of the proposed \JASA{} and \JASACal{} estimators for three different types of outcome: continuous, binary, and count. We focus on simple randomization, $T_{i} \sim \text{Bernoulli} (\pi_{1})$ with $\pi_{1} = 1 / 3$ for $i \in [n]$. Our target parameter is the ATE $\tau$. We compute the true ATE under various data generating mechanisms by averaging over a large sample of size $10^7$. As noted in Remark~\ref{rem:estimate ps}, in the implementation, we use the estimated probability of treatment assignments $\hat{\pi}_{t}$ instead of $\pi_{t}$, for $t \in \{0, 1\}$. The replication code for our simulation studies can be found in the accompanying \href{https://github.com/Cinbo-Wang/Simu_JASA/tree/main}{GitHub page}. We also note that in the numerical experiments below, we consider settings that may slightly violate the regularity conditions (Assumptions~\ref{as:regularity}--\ref{as:link}). Thus, our empirical results also serve as a robustness check of the performance of our proposed method when violating the technical assumptions, which we believe can be relaxed at the expense of a more complicated proof.

We compare \JASA{} and \JASACal{} with two groups of estimators, in addition to the unadjusted estimator $\hat{\tau}_{\unadj}$:
\begin{itemize}
\item Group 1 (working GLMs): $\hat{\tau}_{\gob}$ and its calibrated version $\hat{\tau}_{\gbcal}$ \citep{guo2023generalized, bannick2025general}; standard LOO estimators including $\hat{\tau}_{\loo}$ \citep{wu2018loop, cohen2024no}, and its two calibrated versions $\hat{\tau}_{\loocal}$ \citep{cohen2024no} and $\hat{\tau}_{\loocalout}$.

\item Group 2 (linear working models): $\hat{\tau}_{\OLS}$ \citep{ye2023toward, ma2022regression}, $\hat{\tau}_{\adj, 2}$ \citep{zhao2024hoif}, $\hat{\tau}_{\adj, 2}^{\dag}$ \citep{lu2025debiased}, and their calibrated versions $\hat{\tau}_{\adjcal}$ and $\hat{\tau}_{\adjcal}^{\dag}$.
\end{itemize}

In our implementation, for $\hat{\tau}_{\gob}$, $\hat{\tau}_{\gbcal}$, and $\hat{\tau}_{\OLS}$, we directly apply the state-of-the-art R package \href{https://cran.r-project.org/web/packages/RobinCar/index.html}{\texttt{RobinCar}}. For continuous outcomes, estimators $\hat{\tau}_{\gob}$, $\hat{\tau}_{\gbcal}$, and $\hat{\tau}_{\OLS}$ are equivalent, when an identity link function is specified in the GLM. We generate $K=2000$ Monte Carlo replications to calculate the summary statistics for evaluation. We fix the sample size $n = 400$ and vary the covariate dimension $p$ in the cases of continuous and binary outcomes. 
The evaluation metrics considered include the following:
\begin{itemize}
    \item The absolute mean bias $|\bbE [\hat{\tau} - \tau]|$.
    
    \item The ratio of Monte Carlo standard deviations (SD) between the estimator under comparison and $\hat{\tau}_{\unadj}$ (abbreviated as the SD ratio).
    
    \item The ratio between the mean of the estimated standard error and the Monte Carlo standard deviation (SE/SD).
    
    \item Empirical coverage probabilities (CPs) of the nominal 95\% Wald confidence intervals (CIs) associated with all estimators under comparison, including $\hat{\tau}_{\unadj}$. 

    \item In Appendix~\ref{app:qqplots}, qqplots of $\hat{\tau}_{\JASA}$ and $\hat{\tau}_{\JASACal}$ are shown to demonstrate that their distributions are close to normal.
\end{itemize}

\subsection{Continuous outcomes}
\label{sec:sim continuous}

We first examine the finite-sample performance of $\hat{\tau}_{\JASA}$ and $\hat{\tau}_{\JASACal}$ with various benchmarks just mentioned in the case of continuous outcomes. 

We first generate baseline covariates $\bbX_\star \in \bbR^{n\times p_\star}$, with each row $\bmX_{\star,i} \sim t_3 (0,\bSigma)$, where the true covariance matrix, $\bSigma$, has the Toeplitz structure: $\bSigma_{k,l}=0.1^{|k-l|},k,l \in[p_\star] $ and $p_\star = 60$. We generate the following coefficient vector: $\beta_j = (-1)^j j^{-1/4},j\in[p_\star]$ and rescale $\bbeta$ so that $\|\bbeta\|_2 = 1$. Then we vary the true OR in the control group ($t = 0$), the true conditional ATE (CATE) and the condition number $\alpha \coloneqq p / n$, in the data generating process:
\begin{itemize}
    \item \textbf{The true OR in the control group ($t = 0$)}:
        \begin{itemize}
        \item Linear: $\mu_0 (\bm{x}_{\star}) = \bm{x}_\star^\top \bbeta$,
        
        \item Nonlinear: $\mu_{0} (\bm{x}_{\star}) = \frac{1}{2} \mathrm{sign} (\bm{x}_\star^{\top} \bbeta) \cdot |\bm{x}_\star^{\top} \bbeta|^{\frac{1}{3}} + \cos (\bm{x}_\star^{\top} \bbeta) + \bm{x}_\star^{\top} \bbeta$.
        \end{itemize}
    \item \textbf{The true CATE}: 
        \begin{itemize}
        \item Homogeneous: $\mu_{1} (\bm{x}_{\star}) - \mu_{0} (\bm{x}_{\star}) \equiv 1$,
    
        \item Heterogeneous: $\mu_{1} (\bm{x}_\star) - \mu_{0} (\bm{x}_{\star}) = 1 - \frac{1}{2} \min\{x_{\star, 1}^{2}, 5\} + \frac{1}{4} \sum_{j=1}^{10} x_{\star,j} \beta_j$.
        \end{itemize}
    We then generate the potential outcomes $Y_i (t) \mid \bm{X}_{\star, i} = \bm{x}_{\star} \sim \mathrm{N} (\mu_{t} (\bm{x}_{\star}), 1)$, for $t = 0, 1$.
    \item \textbf{Condition number $\alpha \coloneqq p / n$}: 
    We set the number of adjusted covariates to be $p \coloneqq \lceil n \cdot \alpha \rceil$, where $\alpha$ ranges over $\{0.005, 0.0125, 0.025, 0.05, 0.1,$ $ \mathbf{0.15}, 0.2, 0.25, 0.3\}$ covering low-, moderate-, and relatively high-dimensional regimes. As we observe the performance deterioration of conventional regression adjustment methods such as $\hat{\tau}_{\gob}$ and $\hat{\tau}_{\gbcal}$ when $\alpha$ is as small as 0.025, we include this range to examine how the estimators behave as the covariate dimension becomes increasingly non-negligible relative to the sample size. In our setting, the value $\alpha=0.15$ corresponds to $p=p_\star=60$, at which all outcome-relevant covariates are included. Thus, when $\alpha < 0.15$, we omit certain outcome-relevant covariates; when $\alpha = 0.15$, we adjust for all outcome-relevant covariates; and when $\alpha > 0.15$,  we augment this set with additional non-prognostic covariates. Specifically, we generate additional irrelevant covariates, independently sampled from a $t (3)$ distribution. This design allows us to distinguish the consequences of omitting outcome-relevant covariates from those of adjusting for additional irrelevant covariates.
\end{itemize}

As demonstrated in Figure \ref{fig:gaussian_g0}, when $Y$ is continuous, \formbeta{} and \formmu{} yield nearly identical results when computing $\hat{\tau}_{\JASA}$ and $\hat{\tau}_{\JASACal}$. Therefore, we only report the simulation results based on \formbeta{} (Figures \ref{fig:gaussian_g1}--\ref{fig:gaussian_g2}).
\begin{itemize}
    \item Under model misspecification, $\hat{\tau}_{\gob},\hat{\tau}_{\gbcal}$ and $\hat{\tau}_{\OLS}$ exhibit substantial bias as soon as $\alpha \geq 0.025$ (or equivalently $p = 10$), whereas standard LOO estimators ($\hat{\tau}_{\loo}, \hat{\tau}_{\loocal}$, and $\hat{\tau}_{\loocalout}$), methods based on $U$-statistics ($\hat{\tau}_{\adj,2}, \hat{\tau}_{\adj,2}^{\dag}, \hat{\tau}_{\adjcal}, \hat{\tau}_{\adjcal}^{\dag}$), and our newly proposed estimators $\hat{\tau}_{\JASA},\hat{\tau}_{\JASACal}$ maintain negligible bias, comparable to that of the unadjusted estimator.
  
    \item For any estimator among $\hat{\tau}_{\adj,2}, \hat{\tau}_{\adj,2}^{\dag}, \hat{\tau}_{\adjcal}, \hat{\tau}_{\adjcal}^{\dag}$, $\hat{\tau}_{\JASA}$, and $\hat{\tau}_{\JASACal}$, the SD ratio decreases as $\alpha$ increases to 0.15, then gradually increases, while still retaining efficiency gains over the unadjusted estimator $\hat{\tau}_{\unadj}$. Here, $\alpha=0.15$ corresponds to the point at which all outcome-relevant covariates have been included; for larger values of $\alpha$, the additional covariates are non-prognostic. The mild increase in SD ratios beyond $\alpha=0.15$ reflects the variance cost of adjusting for additional irrelevant covariates. However, standard LOO estimators exhibit faster increases in the SD ratio when $\alpha > 0.15$ relative to $U$-statistic based estimators and our newly proposed estimators $\hat{\tau}_{\JASA}$ and $\hat{\tau}_{\JASACal}$.
  
    \item Calibration often further reduces the asymptotic variance, when comparing between $\hat{\tau}_{\adj,2}$ and $\hat{\tau}_{\adjcal}$, between $\hat{\tau}_{\adj,2}^{\dag}$ and $\hat{\tau}_{\adjcal}^{\dag}$, and between $\hat{\tau}_{\JASA}$ and $\hat{\tau}_{\JASACal}$. 
    We also observe that, for the standard LOO estimator, the calibrated version $\hat{\tau}_{\loocalout}$, which follows the same calibration approach as $\hat{\tau}_{\JASACal}$, outperforms $\hat{\tau}_{\loocal}$ proposed in \citet{cohen2024no} in terms of variance. We recall that Remark~\ref{rem:calibration diff} discussed the difference in these two calibration approaches.
  
    \item The SE/SD ratios of $\hat{\tau}_{\gob},\hat{\tau}_{\gbcal}$ and $\hat{\tau}_{\OLS}$ are in general less than 1 for larger values of $\alpha$, meaning that their variance estimators underestimate their variances, which in turn leads to under-coverage by the resulting Wald CIs. The SE/SD ratios of both the estimators based on $U$-statistics and our newly proposed estimators stay around 1 and attain stable CP close to the nominal $95\%$ level.
\end{itemize}

\begin{figure}[htbp]
    \centering
    \begin{subfigure}[t]{0.95\textwidth}
        \centering
        \includegraphics[width=\linewidth, page=1]{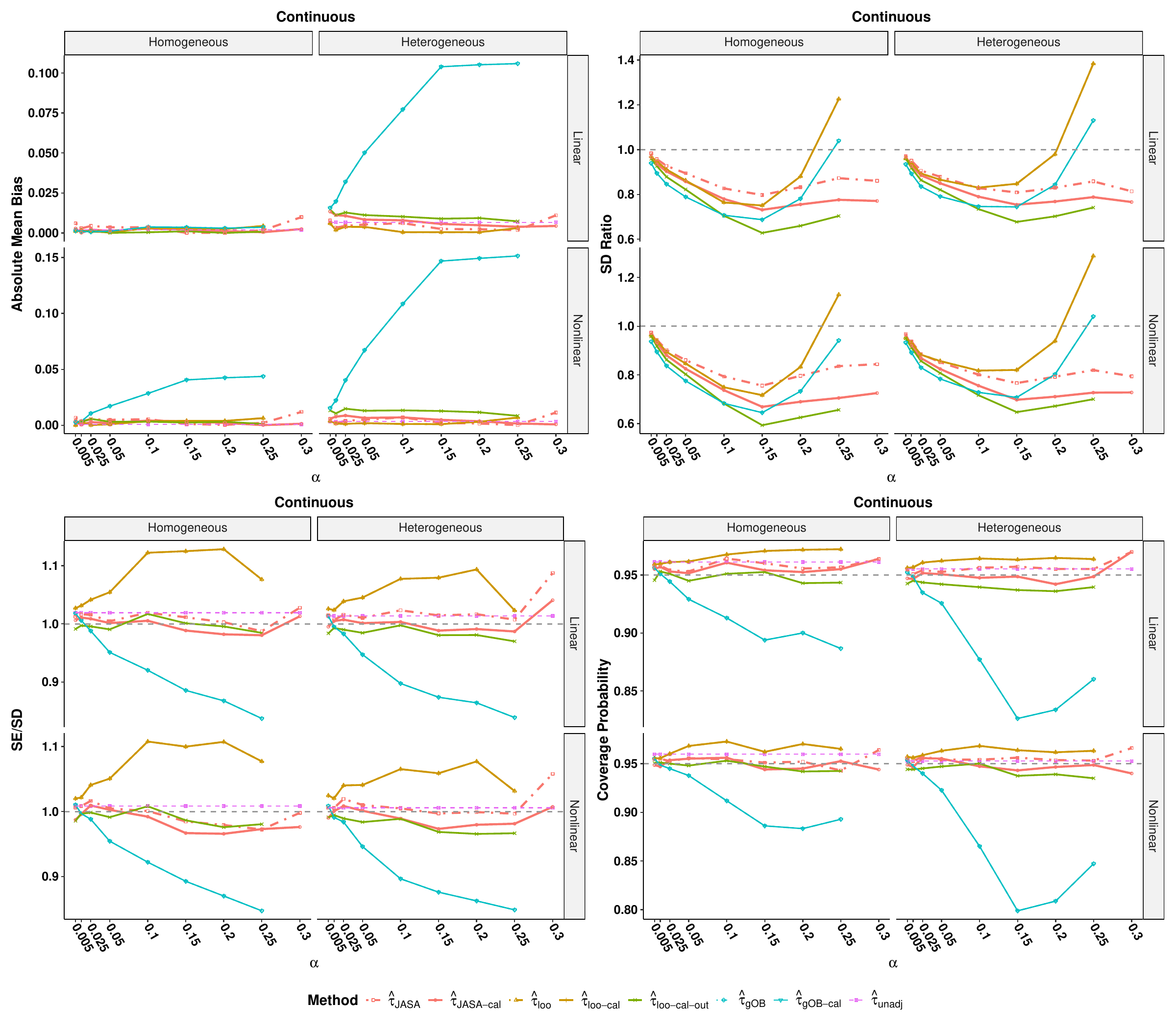}
        \caption{Comparison for continuous outcomes (Group 1)}
        \label{fig:gaussian_g1}
    \end{subfigure}

    \caption{Performance comparison for \JASA{}, \JASACal{} and other estimators with linear or nonlinear adjustment under continuous outcomes. Each panel shows varied OR in the control group, CATE and the condition number $\alpha\coloneqq p/n$. Methods with calibration are shown in solid lines. }
    \label{fig:gaussian}
\end{figure}

\begin{figure}[htbp]
    \ContinuedFloat 
    \centering 
    \begin{subfigure}[t]{0.95\textwidth}
        \centering
        \includegraphics[width=\linewidth, page=1]{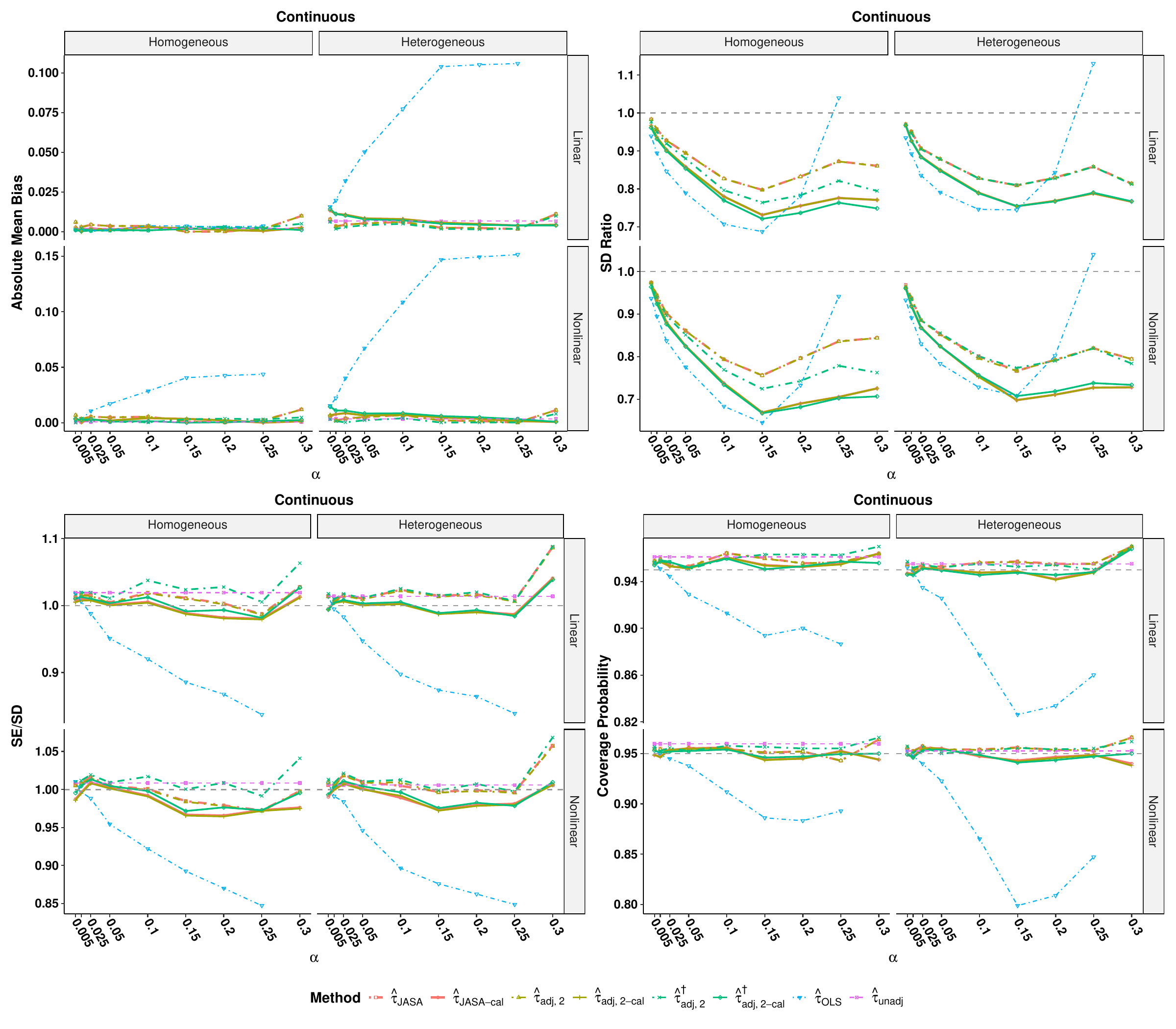}
        \caption{Comparison for continuous outcomes (Group 2)}
        \label{fig:gaussian_g2}
    \end{subfigure}
    
    \caption{Performance comparison for \JASA{}, \JASACal{} and other estimators with linear or nonlinear adjustment under continuous outcomes.  Each panel  shows varied OR in the control group, CATE and the condition number $\alpha\coloneqq p/n$. Methods with calibration are shown in solid lines. (Continued)} 
\end{figure}

\subsection{Binary outcomes}
\label{sec:sim binary}

Next, we examine the finite-sample performance of $\hat{\tau}_{\JASA}$ and $\hat{\tau}_{\JASACal}$ with various benchmarks in the case of binary outcomes. The data-generating process mirrors that of the continuous-outcome case, except that the potential outcomes are drawn from Bernoulli distributions: for $i \in [n]$ and $t\in\{0,1\}$, $Y_i (t) \mid \bm{X}_{\star, i} = \bm{x}_{\star} \sim \mathrm{Bernoulli} (g (2 \mu_{t} (\bm{x}_{\star})))$, where $g(u)=1/(1+e^{-u})$.

As illustrated in Figure~\ref{fig:binomial_g0}, solving the optimization problem \eqref{opt:general beta} (\formbeta{}) yields better performance than solving \eqref{opt:general mu} (\formmu{}) for binary outcomes, so we only compute $\hat{\tau}_{\JASA}$ and $\hat{\tau}_{\JASACal}$ in \formbeta{}. We present the median runtime in Figure~\ref{fig:binomial_g0_time}, which shows that computing \formbeta{} is indeed quite demanding.
We summarize our simulation results below (Figures~\ref{fig:binomial_g1}--\ref{fig:binomial_g2}).
\begin{itemize}
\item Even under correct model specification, $\hat{\tau}_{\gob}$, $\hat{\tau}_{\gbcal}$, and $\hat{\tau}_{\OLS}$ exhibit substantial bias even when $\alpha = 0.05$ and the bias becomes more pronounced as $\alpha$ increases. In contrast, standard LOO estimators, estimators based on $U$-statistics ($\hat{\tau}_{\adj,2}$, $\hat{\tau}_{\adj,2}^{\dag}$, $\hat{\tau}_{\adjcal}$, and $\hat{\tau}_{\adjcal}^{\dag}$), and our newly proposed $\hat{\tau}_{\JASA},\hat{\tau}_{\JASACal}$ maintain negligible bias comparable to $\hat{\tau}_{\unadj}$ across all $\alpha$ values and even under model misspecification.

\item Among estimators exhibiting negligible bias, those without calibration ($\hat{\tau}_{\loo}$, $\hat{\tau}_{\adj,2}$, and $\hat{\tau}_{\JASA}$) often have larger variance than $\hat{\tau}_{\unadj}$ even when $\alpha < 0.15$. Calibration reduces SD ratios below 1 ($\hat{\tau}_{\loocalout}$, $\hat{\tau}_{\adjcal}$, $\hat{\tau}_{\adjcal}^{\dag}$, and $\hat{\tau}_{\JASACal}$). When $\alpha>0.15$, all outcome-relevant covariates have already been included and the additional covariates are non-prognostic. $\hat{\tau}_{\JASA}$ and $\hat{\tau}_{\JASACal}$ remain nearly unbiased in this regime, while their SD ratios increase only mildly; in particular, the SD ratio of $\hat{\tau}_{\JASACal}$ remains below 1.
    
\item Among the standard LOO estimators, calibration following the approach of $\hat{\tau}_{\JASACal}$ ($\hat{\tau}_{\loocalout}$) again outperforms $\hat{\tau}_{\loocal}$ proposed in \citet{cohen2024no} in terms of variance. This further supports our recommendation that $\hat{\tau}_{\loocalout}$ should be used instead of $\hat{\tau}_{\loocal}$.
    
\item Similar to the case of continuous outcomes, for $\hat{\tau}_{\gob},\hat{\tau}_{\gbcal}$ and $\hat{\tau}_{\OLS}$, their SE/SD ratios are much smaller than 1 as $\alpha$ increases, also resulting in a decrease in CPs by their associated Wald CIs.
\end{itemize}

\begin{figure}[htbp]
    \centering    
    \begin{subfigure}[t]{0.95\textwidth}
        \centering
        \includegraphics[width=\linewidth, page=1]{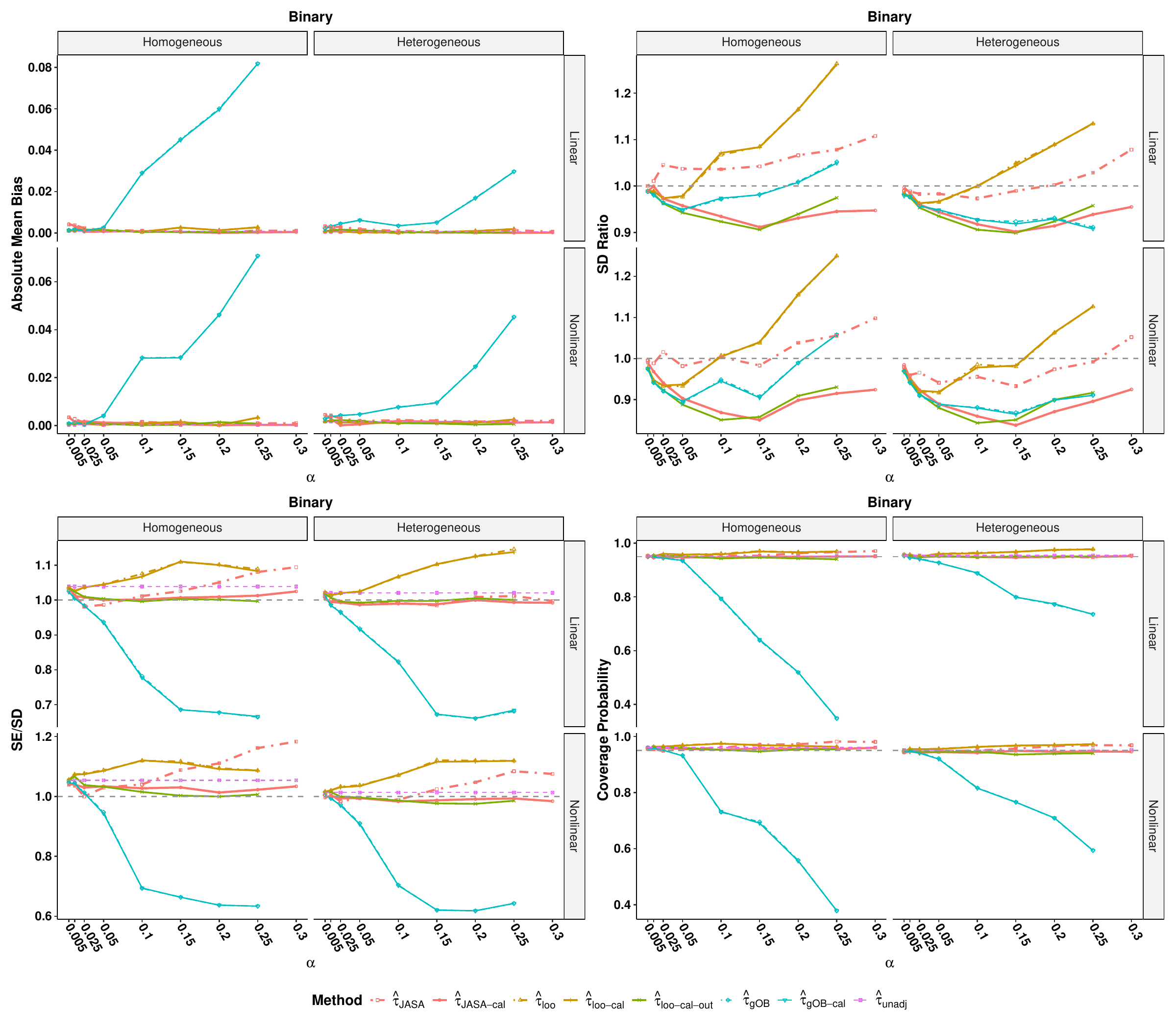}
        \caption{Comparison for binary outcomes (Group 1)}
        \label{fig:binomial_g1}
    \end{subfigure}    
    \caption{Performance comparison for \JASA{}, \JASACal{} and other estimators with linear or nonlinear adjustment under binary outcomes. Each panel  shows varied OR in the control group, CATE and the condition number $\alpha\coloneqq p/n$. Methods with calibration are shown in solid lines.} 
\end{figure}
\begin{figure}[htbp]
    \ContinuedFloat 
    \centering
    \begin{subfigure}[t]{0.95\textwidth}
        \centering
        \includegraphics[width=\linewidth, page=1]{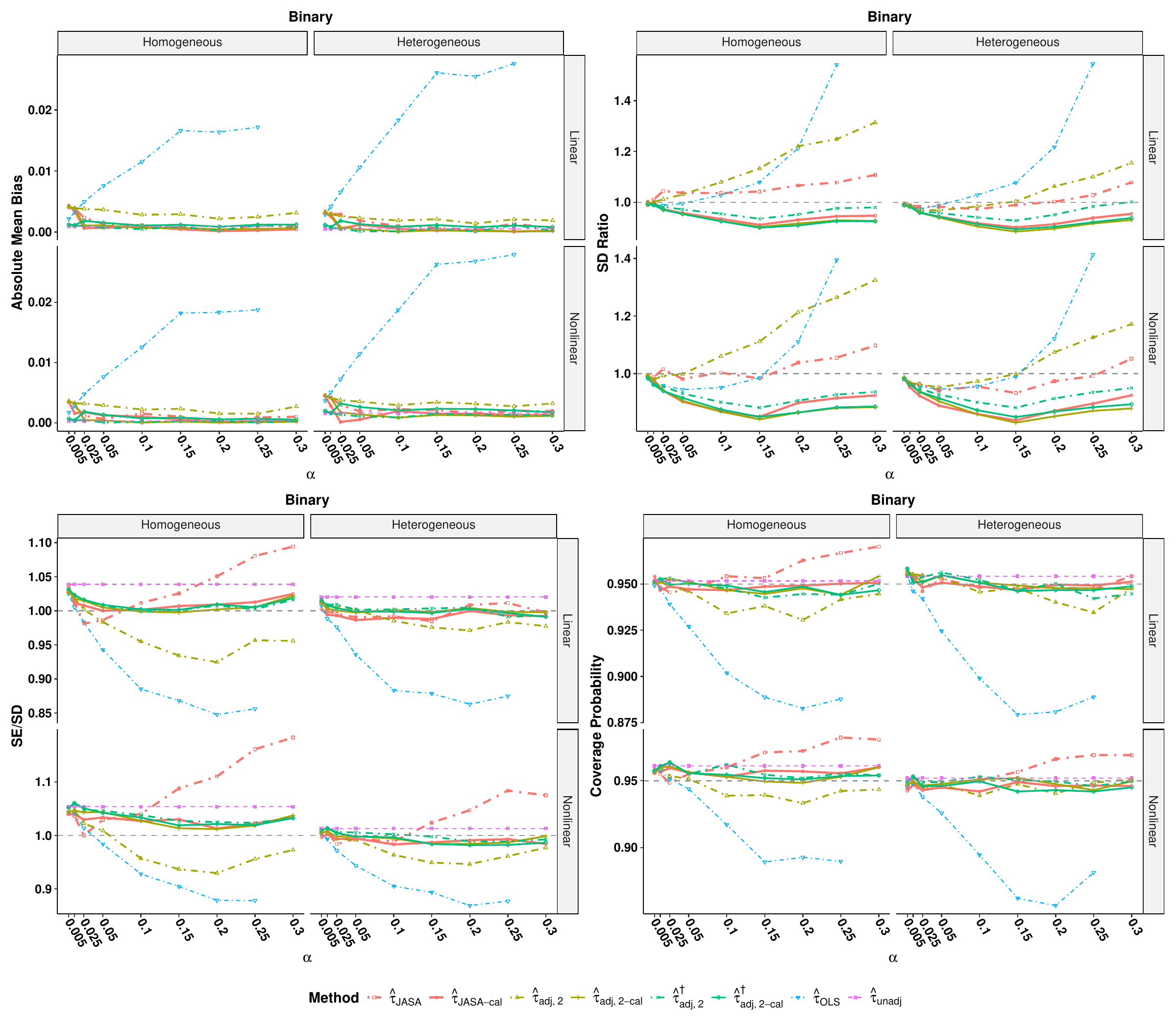}
        \caption{Comparison for binary outcomes (Group 2)}
        \label{fig:binomial_g2}
    \end{subfigure}    
    \caption{Performance comparison for \JASA{}, \JASACal{} and other estimators with linear or nonlinear adjustment under binary outcomes. Each panel  shows varied OR in the control group, CATE and the condition number $\alpha\coloneqq p/n$. Methods with calibration are shown in solid lines. (Continued)} 
\end{figure}


\subsection{Count outcomes}
\label{sec:sim count}

Finally, we examine the finite-sample performance of $\hat{\tau}_{\JASA}$ and $\hat{\tau}_{\JASACal}$ with various benchmarks in the case of count outcomes. The data-generating process follows a  structure similar to the case of continuous outcomes. We first generate the design matrix $\bbX \in \bbR^{n\times p}$. For each row $\bmX_{i}$, we generate $ \tilde{\bmX}_{i} \sim \mathcal{N}(0,\bSigma)$, where $\bSigma_{k,l}=0.1^{|k-l|},k,l \in [p]$, and then the covariates are truncated such that $\bmX_{i}=\max\{\min\{\tilde{\bmX}_{i}, 3\}, -3\}$.  We fix the covariate dimension $p = 20$. We set the true coefficient vector to $\beta_j = (-1)^j/j^{1/4},j\in[p]$, and then rescale it so that $\|\bbeta\|_2 = 1$. Then, we similarly vary the following settings in the simulation:
\begin{itemize}
    \item \textbf{The true OR in the control group} ($t=0$):
         \begin{itemize}
                    \item Linear: $\mu_0(\bm{x}) = \bm{x}^\top \bbeta$,
                    \item Nonlinear: $\mu_0(\bm{x}) = \frac{1}{2} \mathrm{sign} \left(\bm{x}^{\top} \bbeta\right)|\bm{x}^{\top} \bbeta|^{\frac{1}{3}} + \cos(\bm{x}^{\top} \bbeta) + \bm{x}^{\top} \bbeta$.
          \end{itemize}
    \item \textbf{The true CATE}:
        \begin{itemize}
            \item Homogeneous: $\mu_1(\bm{x}) - \mu_0(\bm{x}) \equiv 1$,
            \item Heterogeneous: $\mu_1(\bm{x}) - \mu_0(\bm{x}) =1 - \frac{1}{2}\min\{x_{1}^{2}, 5\} + \frac{1}{4}\bm{x}^{\top} \bbeta$.
        \end{itemize}
       We then generate the potential outcomes $Y_i(t)|\bm{X}_{i}=\bm{x} \sim \mathrm{Poisson}(g({\mu}_{t}(\bm{x})))$, where $g(u)=\exp\left(\min\{u, 4\}\right)$.
     \item \textbf{Sample size} $n$: We vary the sample size $n\in \{200, 252, 317, 399, 502, 631, 795, 1000\}.$ 
\end{itemize}

As illustrated in Figure~\ref{fig:poisson_g0}, solving the optimization problem \eqref{opt:general mu} (\formmu{}) exhibits better performance than solving \eqref{opt:general beta} (\formbeta{}) for count outcomes, so we only present $\hat{\tau}_{\JASA}$ and $\hat{\tau}_{\JASACal}$ in \formmu{} in Figures~\ref{fig:poisson_g1}--\ref{fig:poisson_g2}. We also present the median runtime of both approaches in Figure~\ref{fig:poisson_g0_time}, and for the \formmu{} approach it generally takes less than 1 minute to obtain the results.
We summarize our simulation results below.
\begin{itemize}
\item $\hat{\tau}_{\gob}$ and $\hat{\tau}_{\gbcal}$ again exhibit substantial bias as $n$ decreases, while $\hat{\tau}_{\OLS}$ keeps the bias well controlled across different sample sizes; but in Appendix~\ref{simu: count suppl infla}, we give an example where $\hat{\tau}_{\OLS}$ also exhibits substantial bias when  $p$ is large compared to $n$.

\item For small $n$, $\hat{\tau}_{\loo}$ and $\hat{\tau}_{\loocal}$ are less efficient than $\hat{\tau}_{\unadj}$, while $\hat{\tau}_{\loocalout}$ improves efficiency. Meanwhile, calibration consistently reduces the variance below that of $\hat{\tau}_{\unadj}$. In particular, $\hat{\tau}_{\JASACal}$ exhibits superior efficiency over both $\hat{\tau}_{\adjcal}$ and $\hat{\tau}_{\adjcal}^{\dag}$ in all scenarios.

\item $\hat{\tau}_{\JASACal}$ maintains stable SE/SD ratios near 1 for all $n$, ensuring nominal CPs. In contrast, $\hat{\tau}_{\loocalout}$ shows persistently low SE/SD ratios, resulting in poor coverage. Similarly, $\hat{\tau}_{\OLS}$ only achieves appropriate SE/SD ratios at larger $n$.
\end{itemize}

In summary, in the simulation studies, we demonstrate the competitive finite-sample performance of $\hat{\tau}_{\JASA}$ and $\hat{\tau}_{\JASACal}$, in particular $\hat{\tau}_{\JASACal}$, compared to existing adjusted estimators and standard LOO estimators, and even the estimators based on $U$-statistics. For standard calibrated LOO estimators $\hat{\tau}_{\loocal}$ and $\hat{\tau}_{\loocalout}$, our analysis finds that $\hat{\tau}_{\loocalout}$, using the same calibration strategy as $\hat{\tau}_{\JASACal}$, consistently outperforms $\hat{\tau}_{\loocal}$, which was proposed in \citet{cohen2024no}.

\begin{figure}[htbp]
    \centering
    \begin{subfigure}[t]{0.95\textwidth}
        \centering
        \includegraphics[width=\linewidth, page=1]{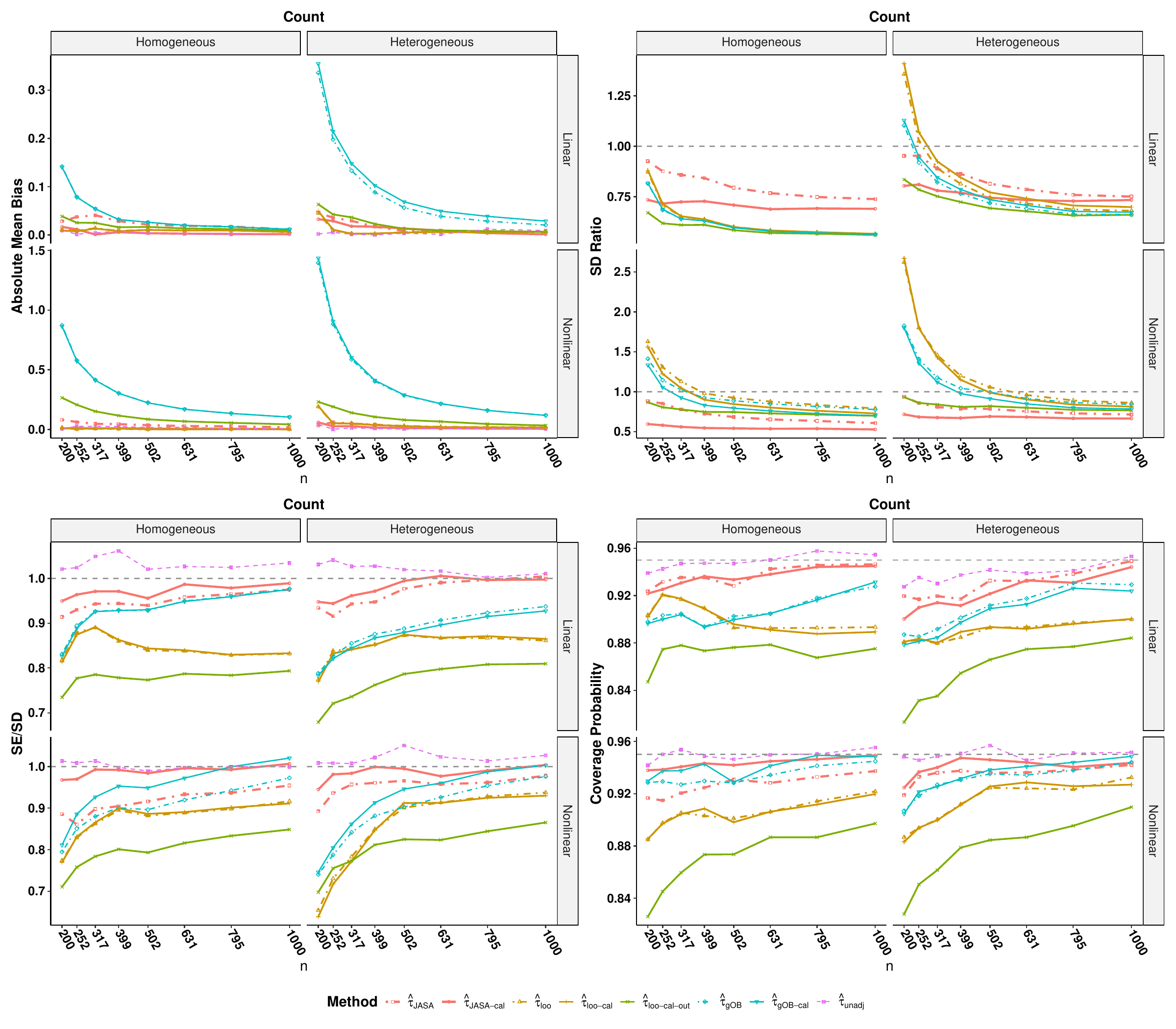}
        \caption{Comparison for count outcomes (Group 1)}
        \label{fig:poisson_g1}
    \end{subfigure}

    \caption{Performance comparison for \JASA{}, \JASACal{} and other estimators with linear or nonlinear adjustment under count outcomes. Each panel shows varied OR in the control group, CATE and the sample size $n$. Methods with calibration are shown in solid lines. }
\end{figure}

\begin{figure}[htbp]
    \ContinuedFloat 
    \centering 
    \begin{subfigure}[t]{0.95\textwidth}
        \centering
        \includegraphics[width=\linewidth, page=1]{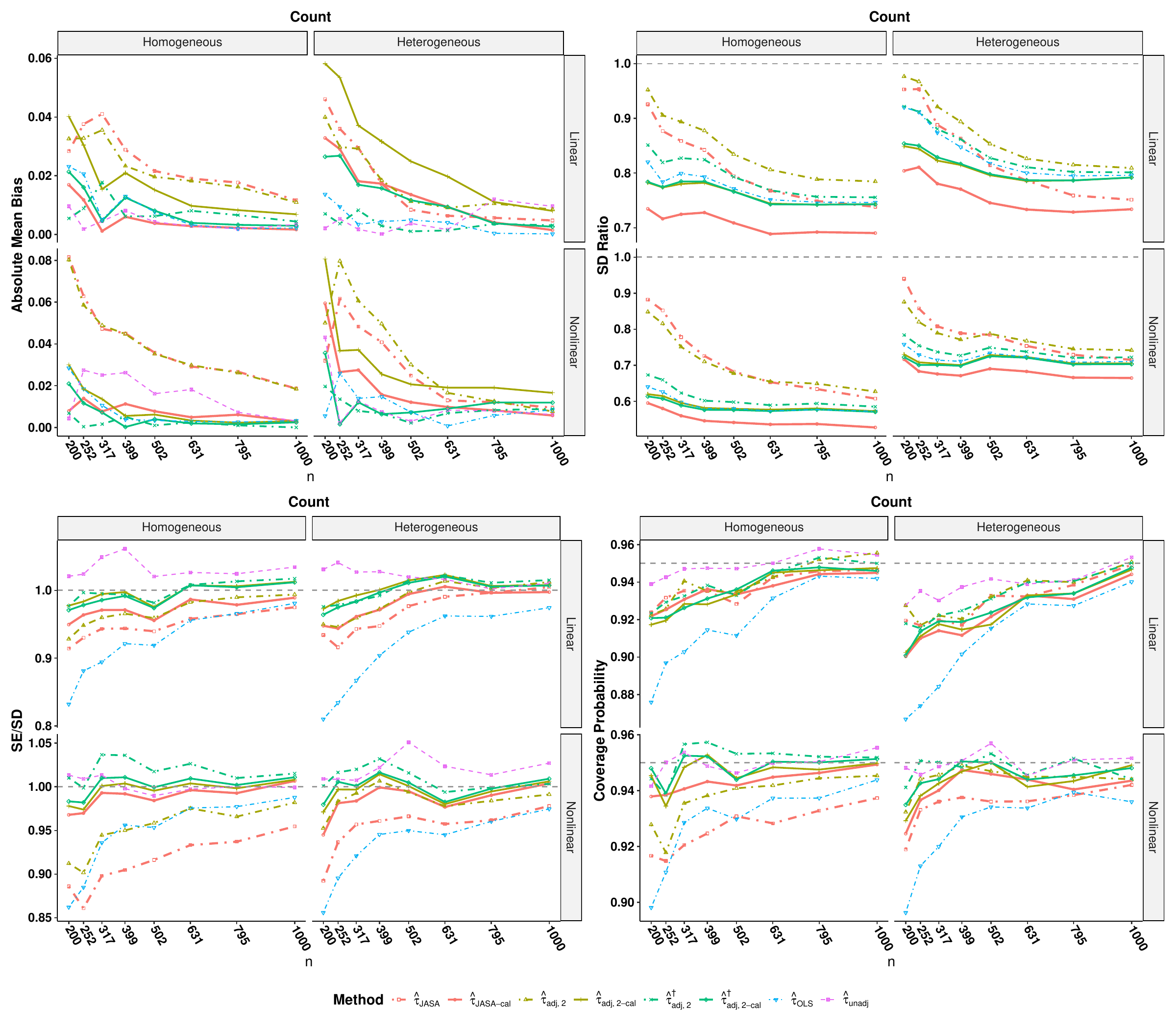}
        \caption{Comparison for count outcomes (Group 2)}
        \label{fig:poisson_g2}
    \end{subfigure}
    
    \caption{Performance comparison for \JASA{}, \JASACal{} and other estimators with linear or nonlinear adjustment under count outcomes. Each panel  shows varied OR in the control group, CATE and the sample size $n$. Methods with calibration are shown in solid lines. (Continued)} 
\end{figure}

\section{Real Data Application}
\label{sec:real}
We demonstrate the benefits of covariate adjustment by comparing our methods with other approaches using an Alzheimer's disease (AD)-related RCT. This multicenter, randomized, double-blind trial (2003--2006) investigated whether 18-month high-dose B-vitamin supplement slows cognitive decline in AD patients \cite{ADC_016} (ClinicalTrials.gov ID: NCT00056225). Individuals were randomly assigned to two groups: 60\% were treated with daily high-dose supplements and 40\% were treated with placebo.  Finally, a total of 340 participants completed the trial, of whom 202 were in the treatment  group and 138 in the placebo group. The primary outcome was the 18-month change rate in Alzheimer's Disease Assessment Scale-cognitive subscale (ADAS-cog) score~\cite{rosen1984new}. This performance-based measure assesses multiple cognitive domains, with scores ranging from 0--70 (higher scores indicate greater impairment). 
Secondary outcomes, including the Mini-Mental State Examination (MMSE) and Alzheimer's Disease Cooperative Study Activities of Daily Living (ADCS-ADL) among others, were assessed at each visit. Baseline data include demographics (e.g., age, sex, education) and laboratory measures (e.g., homocysteine, ApoE4 status, hematological parameters).

In the original analysis\citep{ADC_016}, only the baseline ADAS-cog and age were included as covariates. 
Here, we retain these two prespecified covariates and sequentially incorporate 48 additional covariates that are selected using a stepwise selection procedure based on statistical significance. Specifically, covariates were added sequentially in ascending order based on p-values, which were obtained from multiple linear regression models that always included the baseline ADAS-cog and age. It should be emphasized that this outcome-based covariate selection procedure is not covered by our theoretical framework and is not standard practice for confirmatory RCT analyses. Accordingly, this exercise is intended only as a proof of concept to illustrate the empirical behavior of the adjustment methods as the covariate dimension increases, rather than as a recommended variable-selection-and-adjustment strategy for practice.

\begin{figure}[htbp]
    \centering
    \includegraphics[width=0.95\linewidth]{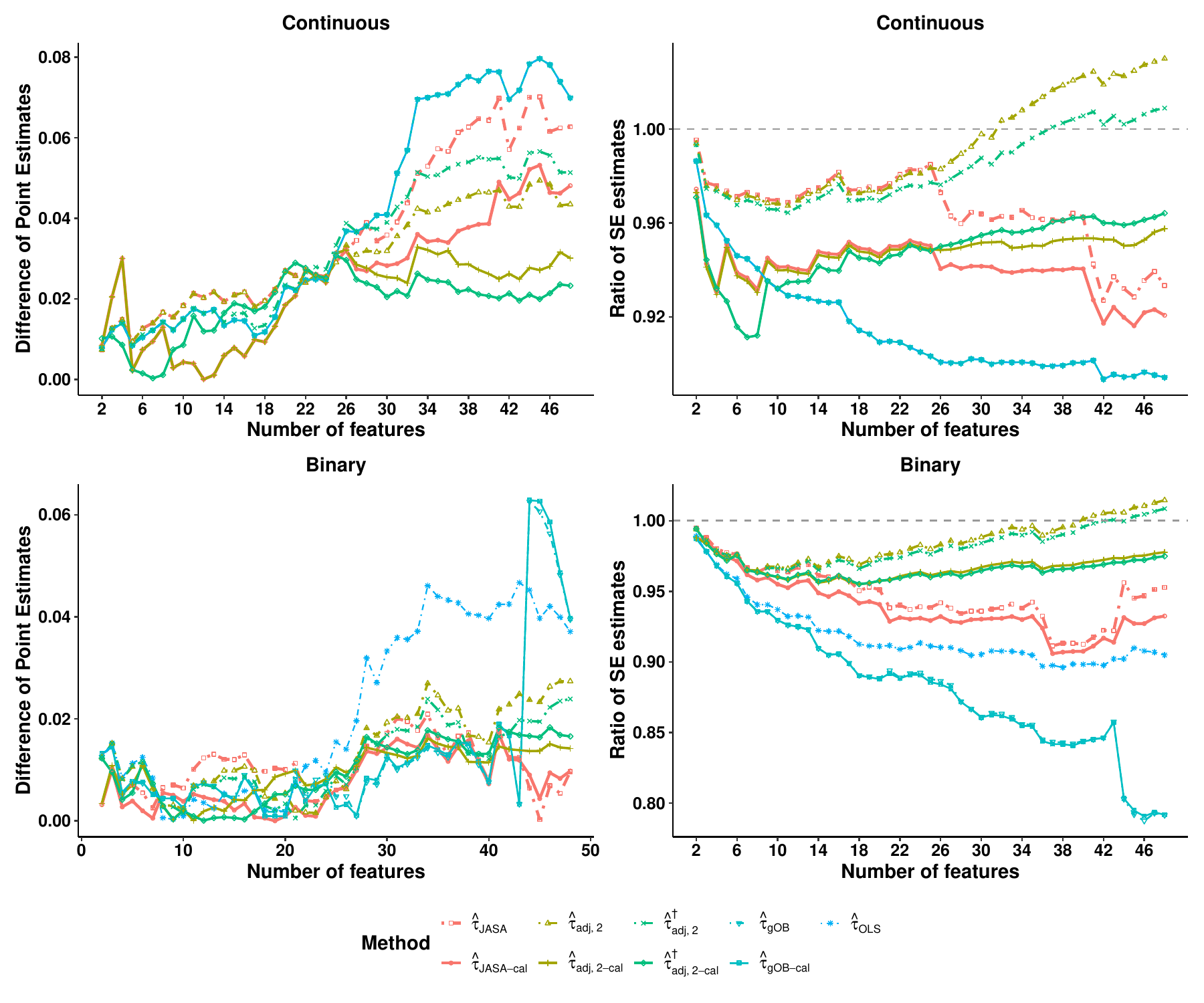}
    \caption{Performance metrics for \JASA{}, \JASACal{} and other adjusted estimators under continuous (top row) and binary (bottom row) outcomes. The first column displays the absolute difference in ATE estimate between each estimator and the unadjusted estimator. The second column shows the ratio of standard error (SE) estimate for each estimator relative to the unadjusted estimator.}
    \label{fig:appl_ADC016}
\end{figure}

In the first row of Figure~\ref{fig:appl_ADC016}, we report the absolute difference between various adjusted estimators and $\hat{\tau}_{\unadj}$, and the ratio of the estimated standard errors of the adjusted estimators to that of $\hat{\tau}_{\unadj}$. It should be noted that it is not standard practice in RCT data analysis to perform variable selection first and then adjust for the selected variables using adjusted estimators. It only serves as a proof-of-concept to illustrate the performance of our newly proposed estimators on real data.

As the number of covariates increases, the differences between the adjusted estimators and $\hat{\tau}_{\unadj}$ become greater. Among them, $\hat{\tau}_{\OLS}$ (which is equivalent to $\hat{\tau}_{\gob}$ and $\hat{\tau}_{\gbcal}$ under identity link) and $\hat{\tau}_{\JASA}$ exhibit the largest difference. 
We believe that the discrepancy between $\hat{\tau}_{\JASA}$ and $\hat{\tau}_{\unadj}$ stems from the variance of $\hat{\tau}_{\JASA}$, whereas  $\hat{\tau}_{\JASACal}$  shows a smaller discrepancy.
Meanwhile, the ratios between the SEs of most estimators and that of $\hat{\tau}_{\unadj}$ remain stable across a wide range of $p$, whereas the corresponding ratio for $\hat{\tau}_{\OLS}$ continues to decline. This reduction in the estimated variance of $\hat{\tau}_{\OLS}$ can be partially attributed to the underestimation of variances, as observed in our simulation study. In general, calibrated adjusted estimators, $\hat{\tau}_{\JASACal}$, $\hat{\tau}_{\adjcal}$ and $\hat{\tau}_{\adjcal}^{\dag}$, have the lowest standard errors.

Furthermore, we also consider the case of binary outcomes, defined as 1 if the ADAS-cog score decreases from the baseline and 0 otherwise. Covariates were chosen in a manner analogous to the case of continuous outcomes, with logistic regression replacing linear regression. The results are shown in the second row of Figure \ref{fig:appl_ADC016}. The message is largely similar, so we do not repeat ourselves. However, it is worth noting that, as more covariates are incorporated, the absolute differences between $\hat{\tau}_{\gob}$ and $\hat{\tau}_{\OLS}$, $\hat{\tau}_{\gbcal}$ and $\hat{\tau}_{\OLS}$ become more pronounced. In contrast, $\hat{\tau}_{\JASACal}$ maintains a stable absolute difference and a smaller variance than $\hat{\tau}_{\unadj}$, $\hat{\tau}_{\adjcal}$, and $\hat{\tau}_{\adjcal}^{\dag}$.

\section{Conclusions}
\label{sec:conclusion}

With rapid advancements in data collection technologies, future RCTs are expected to collect higher dimensional covariates. However, as we mentioned, there is little direction from regulatory agencies on how to adjust for those large-dimensional covariates. When ORs are fit by linear working models, promising progress has been made in the past few years, rendering the estimators based on $U$-statistics a more robust alternative to the OLS adjusted estimator $\hat{\tau}_{\OLS}$ when $p$ is large \citep{lei2021regression, chang2024exact, zhao2024hoif, lu2025debiased, gu2025assumption, abadie2025unbiased}. To the best of our knowledge, parallel results when ORs are fit by GLMs do not exist yet. In this paper, we make progress along this line by proposing novel procedures to solve jackknife score equations or jackknife constrained optimization problems derived from the score equations, inspired by the $U$-statistic construction using linear working models \citep{chang2024exact, zhao2024hoif, lu2025debiased}. Our new jackknife-based procedures, coined as \JASA{} and \JASACal{}, also place the aforementioned $U$-statistic estimators into a more unified framework. Extensive simulation studies and an analysis of a completed RCT demonstrate the competitive finite-sample performance of our new procedures. Based on our simulation studies, we summarize in Table~\ref{tab:my-table} several empirical findings regarding the relative performance of different adjusted estimators. In our simulation settings, 
when the ratio $\alpha = p/n$ is relatively small (say less than 0.025), standard OLS adjusted estimators or calibrated gOB estimators generally perform well; but as soon as $\alpha > 0.025$, $\hat{\tau}_{\JASACal}$ or $U$-statistic-based estimators such as $\hat{\tau}_{\adj, 2\mbox{-}\cal}^{\dag}$ tend to exhibit more robust performance in practice. We emphasize that these numerical cutoffs are specific to our simulation designs and may not be directly interpreted as universal thresholds. In particular, we view \JASA{} and \JASACal{} as a helpful alternative for practitioners if they decide to adjust for a large number of covariates using GLMs in RCT data analysis.

We conclude the paper by listing a few future directions. First, when the working model is a canonical GLM, it is important to characterize the precise conditions under which $\hat{\tau}_{\JASA}$ or $\hat{\tau}_{\JASACal}$ is $\sqrt{n}$-CAN, in particular when $p$ increases with $n$. Second, it is also important for practice to develop parallel approaches for covariate adaptive designs \citep{ma2024new}, time-to-event outcomes \citep{van2021principled, ye2024covariate}, and other natural parameterizations of treatment effects for binary outcomes \citep{vansteelandt2022assumption, richardson2017modeling}. Third, to the best of our knowledge, when general machine learning methods are used to fit the OR, sample splitting is still the only viable approach to building adjusted estimators that avoid excessive bias. In the context of RCT, however, it is more desirable to avoid sample splitting for reproducibility purposes, a philosophy espoused by $\hat{\tau}_{\JASA}$, $\hat{\tau}_{\adj, 2}$, $\hat{\tau}_{\adj, 2}^{\dag}$, and their calibrated versions. 
It is still unclear how to apply the jackknife score equations for GLMs developed here to replace the sample splitting strategy when fitting more general machine learning methods. An exception is when the empirical loss function can be decomposed into two components, one involving only covariates $\bm{X}$, and the other also involving the treatment and outcome variables. Finally, it is also important to investigate how to improve the runtime performance of \JASA{} or \JASACal{} especially when \formbeta{} is employed, as emphasized in Remark~\ref{rem:computation}.

\begin{table}[htbp]
  \centering  
  \begin{threeparttable}
  \caption{Empirical performance patterns observed in the simulation settings under different outcome types and condition number $\alpha \coloneqq p / n$. }
  \label{tab:my-table}
    \begin{tabular}{@{}c|ccc@{}}
      \toprule
      \textbf{Outcome} & \textbf{Continuous} & \textbf{Binary} & \textbf{Count} \\ \midrule
      $\alpha$ very small \tnote{*}& $\hat{\tau}_{\OLS}$ & $\hat{\tau}_{\gbcal}$ & $\hat{\tau}_{\gbcal}$ \\
      $\alpha$ not very small & $\hat{\tau}_{\JASACal}$\tnote{a} \, or $\hat{\tau}_{\adjcal}^{\dag}$ & $\hat{\tau}_{\JASACal}$\tnote{b} & $\hat{\tau}_{\JASACal}$\tnote{c} \\ \bottomrule
    \end{tabular}
    \begin{tablenotes}
      \item[*] Inflated bias can be observed when $\alpha\geq 0.025$ for $\hat{\tau}_{\OLS}$ under continuous outcome (Figure \ref{fig:gaussian_g2}), $\alpha\geq0.05$ for $\hat{\tau}_{\gbcal}$ under binary/count outcome (Figure \ref{fig:binomial_g1} and \ref{fig:poisson_g1}).
      \item[a] Similar performance between \formbeta{} and \formmu{}.
      \item[b] Prefer \formbeta{}.
      \item[c] Prefer \formmu{}.
    \end{tablenotes}
  \end{threeparttable}
\end{table}


\begin{algorithm}[htbp]
\caption{Pseudocode of \JASA{}}
\label{alg:jasa}
\begin{algorithmic}[0]
\normalfont
\renewcommand{\algorithmicrequire}{\textbf{Input:}}
\renewcommand{\algorithmicensure}{\textbf{Output:}}

\Require Dataset  $\mathcal{D} = \{(\bm{X}_i,Y_i,T_i)\}_{i=1}^n$ where $\bm{X}_i \in \bbR^{p}$ and  $T_{i} \overset{\rm i.i.d.}{\sim} \text{Bernoulli} (\pi_{1})$. Let $\bm{Z}_i=(1,\bm{X}_i^\top)^\top$.

\Ensure ATE estimates $\hat{\tau}_{\JASA}, \hat{\tau}_{\JASACal}$ with variance estimates.

\State
\textbf{Step 1: Initial Working Model Estimation} 
\State Estimate the initial coefficient $\hat{\bbeta}_{t}$ of the outcome regression model for arm $t\in\{0, 1\}$, via GLM using implicit and explicit bias reduction methods \citep{kosmidis2020mean, zhang2026bias}. 

\State
\textbf{Step 2: Jackknife Score-based Adjustment} 
\State For each sample $i\in [n]$, and  $t\in \{0, 1\}$,

\State \hspace{2em} \textbf{2a.} Check the existence of solution: Solve the optimization using \texttt{optim} package:
\State   
    \begin{align*}
        \hat{\bm \mu}_t^{(-i)}  \coloneqq &\; \underset{\bm{\mu}_t^{(-i)}:\bm{\mu}_t^{(-i)}\in [m,M]^{n}}{\argmin} f(\bm{\mu}_t^{(-i)}), 
    \end{align*}
\State \hspace{4em} where $f(\bm{\mu}_t^{(-i)})\coloneqq \|\frac{1}{n-1}\sum_{j\neq i}\frac{\mathbbm{1}\{T_j=t\}}{\hat{\pi}_t}Y_j\bm{Z}_j-\frac{1}{n}\sum_{l=1}^{n} \mu_{t,l} \bm{Z}_{l}\|_2^{2}$ and $m, M$ are determined by the 
\State \hspace{4em} outcome type.
\State \hspace{4em} If $\hat{\tau}_{\JASA}$ is estimated through \formmu{}, steps 2b-2d can be omitted.
    
\State \hspace{2em} \textbf{2b.} If $f(\hat{\bm{\mu}}_{t}^{(-i)})\leq \epsilon$
, then solve the nonlinear score equation via \texttt{BBsolve} package to get $\hat{\bbeta}_{t}^{(-i)}$:
\State
    \begin{align*}
        \frac{1}{n - 1} \sum_{j \neq i} \frac{\mathbbm{1} \{T_{j} = t \}}{\hat{\pi}_{t}} Y_{j} \bm{Z}_{j}- \frac{1}{n} \sum_{l = 1}^{n} g (\bm{Z}_{l}^\top \bbeta_{t}^{(-i)}) \bm{Z}_{l} = \mathbf{0}.
    \end{align*}
    
\State \hspace{2em} \textbf{2c.} If $f(\hat{\bm{\mu}}_{t}^{(-i)}) > \epsilon$ or \texttt{BBsolve} does not converge, then solve the constrained optimization: 
\State
    \begin{align*}
    & \hat{\bbeta}_{t}^{(-i)}  \coloneqq \underset{\bbeta_{t}^{(-i)}: m_t \leq \bbeta_t^{(-i)} \leq  M_t}{\argmin} \Big\| \frac{1}{n - 1} \sum_{j \neq i} \frac{\mathbbm{1} \{T_{j} = t \}}{\hat{\pi}_{t}} Y_{j} \bm{Z}_{j}- \frac{1}{n} \sum_{l = 1}^{n} g (\bm{Z}_{l}^\top \bbeta_{t}^{(-i)}) \bm{Z}_{l} \Big\|_2^{2},
    \end{align*} 
\State \hspace{4em} where $[m_t,M_t] = [\min(\hat{\bbeta}_t),\max(\hat{\bbeta}_t)]$. Bounds $[m_t,M_t]$ may alternatively be initialized using $\hat{\bbeta}_{t,\rm init}^{(-i)}$ estimated without
\State \hspace{4em} $(T_{i},Y_{i})$ for each $i$; initialization with $\hat{\bbeta}_t$ does not lead to bias inflation, but is computationally more efficient.

\State \hspace{2em} \textbf{2d.} Compute $\hat{\mu}_{t,i}^{(-i)} = g(\bm{Z}_i^\top \hat{\bbeta}_t^{(-i)})$.

\State \textbf{Step 3: Linear Calibration}
\State \hspace{2em} \textbf{3a.} Solve the linear regression within each arm $t\in\{0,1\}$:
    \begin{align*}
    (\hat{\gamma}_{t}, \hat{\gamma}_{t,0}, \hat{\gamma}_{t,1}) = \underset{\gamma_{t}, \gamma_{t,0}, \gamma_{t,1}}{\argmin} \sum_{i: T_i = t}\left(Y_i - \gamma_{t} - \hat{\mu}_{0,i}^{(-i)}\gamma_{t,0} - \hat{\mu}_{1,i}^{(-i)}\gamma_{t,1}\right)^2. 
    \end{align*}
\State \hspace{4em} Set $\tilde{\mu}_{t,i}=\hat{\gamma}_t + \hat{\mu}_{0,i}^{(-i)}\hat{\gamma}_{t,0} + \hat{\mu}_{1,i}^{(-i)}\hat{\gamma}_{t,1}$. 
    
\State \hspace{2em} \textbf{3b.} If $p \leq c_0$, apply the conditional calibration strategy (Appendix \ref{sec:cond calib}).
 
\State \hspace{2em} \textbf{3c.} Compute $\hat{\tau}_{\JASA},\hat{\tau}_{\JASACal}$ using formulae (\ref{equ:JASA}) and  (\ref{equ:JASA-cal}).
\State \hspace{3.5em} Compute $\hat{\sigma}^2_{\JASA}, \hat{\sigma}^2_{\JASACal}$ using formulae (\ref{equ: var_jasa}) and (\ref{equ: var_jasa-cal}). 
\end{algorithmic}  
\end{algorithm}

\clearpage

\bmsection*{Acknowledgments}
The authors acknowledge the Alzheimer's Disease Cooperative Study (ADCS) and the University of California, San Diego (UCSD) as the source of the data used in this study. This work was supported by the National Institute on Aging (NIA) of the National Institutes of Health under award number U19 AG010483. Data used in preparation of this manuscript/publication/article were obtained from the University of California, San Diego Alzheimer's Disease Cooperative Study. Consequently, ADCS Core Directors contributed to the original ADCS studies and/or provided data, but did not participate in the analyses or the writing of this manuscript/report. The authors acknowledge Xiangfu Laboratory and the jCloud platform of Shanghai Jiao Tong University for providing computational resources.

\bmsection*{Financial disclosure}
This work was partially supported by the National Science Foundations of China Grants No.12471274 (XW, SZ, LL), National Key R \& D Program (2025YFA1016700) (XW, SZ, LL), the Neil Shen's SJTU Medical Research Fund (XW, HL, LL), the Science and Technology Commission of Shanghai Municipality (STCSM) (Grant No. 23JS1400700; 24JS2840200; 25JS2850100) (HL), and the Shanghai Municipal Health Commission (2025ZHYL022) (HL).

\bmsection*{Conflict of interest}
The authors declare no potential conflict of interests.

\bmsection*{Data Availability Statement}
The data that support the findings of this study are available from the Alzheimer's Disease Cooperative Study (ADCS). Restrictions apply to the availability of these data, which were used under license for this study. Data are available from \href{https://www.adcs.org/data-sharing/}{https://www.adcs.org/data-sharing/} with the ADCS. The data analyzed in this study corresponds to the protocol number ADC-016.

\bibliography{wileyNJD-AMA}

\bmsection*{Supporting information}
Additional supporting information can be found in the Appendix. The replication code for our simulation studies can be found in the accompanying \href{https://github.com/Cinbo-Wang/Simu_JASA/tree/main}{GitHub page}.

\newpage
\appendix

\bmsection{The Calibration Procedure in the Actual Implementation} \label{sec:cond calib}
\vspace*{12pt}

To stabilize the calibrated estimators, we employ a conditional calibration approach when $p$ is small. Specifically, if $p \leq c_0$ (in the simulation, we set $c_0=10$), we define the range $w\coloneqq \max\{Y_i, i\in [n]\} - \min\{Y_i, i\in [n]\}$, and set the bounds $l=\min\{Y_i, i\in [n]\} - 0.1w$ and $u=\max\{Y_i, i\in [n]\} + 0.1w$. If there exists an index $i_0\in[n]$, such that $\min\{\tilde{\mu}_{0,i_0}, \tilde{\mu}_{1,i_0}\} < l$ or $\max\{\tilde{\mu}_{0,i_0}, \tilde{\mu}_{1,i_0}\} > u$, we omit the calibration step and set $\tilde{\mu}_{t,i}=\hat{\mu}_{t,i}^{(-i)}$ for all $i\in[n]$ and $t\in\{0,1\}$.
 
To elaborate on the effect of ``conditional'' calibration, we define $\hat{\tau}_{\JASACal}^{\dagger}$ as the \JASACal{} estimator that directly incorporates linear calibration and compare it with $\hat{\tau}_{\JASACal}$ (both based on \formbeta{}). We consider the simulation setup for continuous (Section~\ref{sec:sim continuous}), binary (Section~\ref{sec:sim binary}) and count (Appendix~\ref{simu: count suppl infla}) outcomes.  To simplify the presentation, we consider only a subset of scenarios, specifically those with $\mu_0(\bx_{\star})=\frac{1}{2} \mathrm{sign} (\bm{x}_\star^{\top} \bbeta) |\bm{x}_\star^{\top} \bbeta|^{\frac{1}{3}} + \cos (\bm{x}_\star^{\top} \bbeta) + \bm{x}_\star^{\top} \bbeta$ and $\mu_1(\bx_\star)-\mu_0(\bx_\star)\equiv 1$, and $\alpha \in \{0.005, 0.0125, 0.025, 0.05\}$. As illustrated in Figure \ref{fig:comp_jasacal_direct}, $\hat{\tau}_{\JASACal}^{\dag}$ exhibits a SD ratio higher than 1  for all outcome types when $\alpha=0.005$ (or $ p=2$), along with higher bias relative to $\hat{\tau}_{\JASACal}$. This difference diminishes as $\alpha$ increases.

\begin{figure}[htpb]
    \centering
    \includegraphics[width=0.95\linewidth]{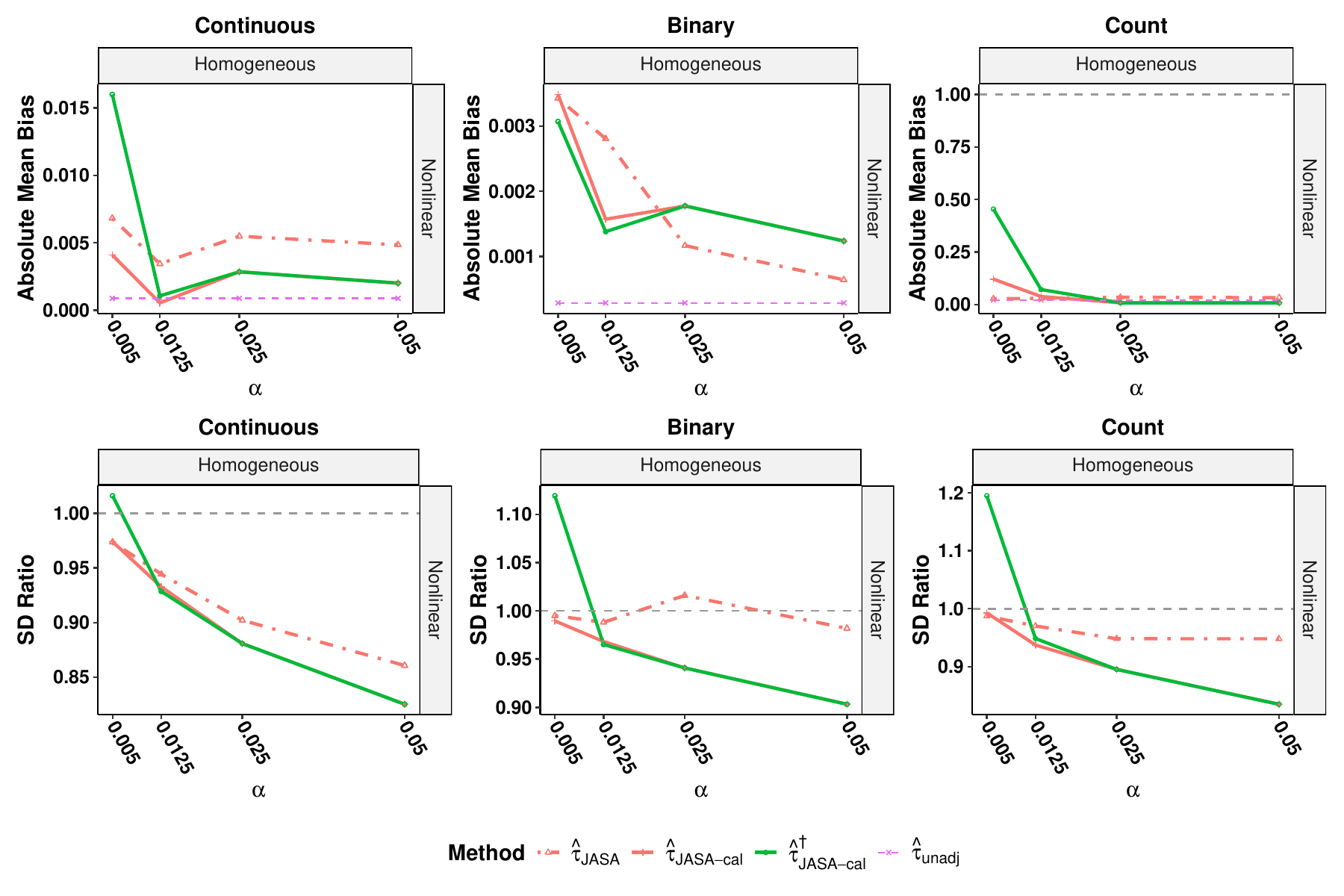}
    \caption{Bias and SD ratio comparison between the $\hat{\tau}_{\JASACal}$ estimator and its direct counterpart $\hat{\tau}_{\JASACal}^{\dagger}$.}
    \label{fig:comp_jasacal_direct}
\end{figure}

To investigate the problem of SD inflation after calibration with $\alpha=0.005$, we present an example from the  simulation data for each outcome type. We denote $\{\hat{\mu}_{t,i}^{(-i)}\}_{i=1}^{n}$ as the OR estimates by solving the jackknife score equation and $\{\tilde{\mu}_{t,i}\}_{i=1}^{n}$ as the  prediction after linear calibration. As shown in Figure \ref{fig:example_extrem_ols}, there are extreme predictions in $\{\tilde{\mu}_{t,i}\}_{i=1}^{n}$ for $t\in\{0,1\}$,  compared to  the observed outcome $Y_i$. These extreme values further lead to $\hat{\tau}_{\JASACal}^{\dag}$ exhibiting increased bias and SD.

\begin{figure}[htpb]
    \centering
    \includegraphics[width=0.95\linewidth]{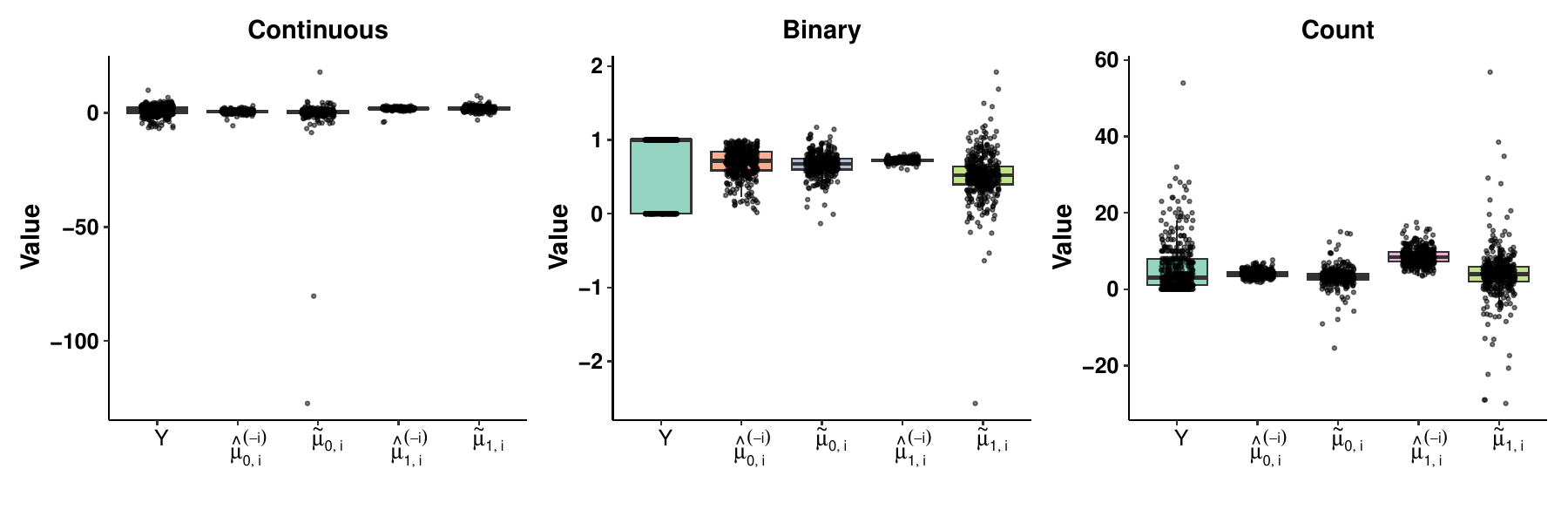}
    \caption{Comparison of predictions across two arms. $Y$ represents the observed outcome. $\{\hat{\mu}_{t,i}^{(-i)}\}_{i=1}^{n}, t\in\{0,1\}$ represents the  OR estimates by solving the jackknife score equation and $\{\tilde{\mu}_{t,i}\}_{i=1}^{n}$ represents the  predictions after linear calibration.} 
    \label{fig:example_extrem_ols}
\end{figure}

\bmsection{Proof of Main Theoretical Results}\label{app:var} 
\vspace*{12pt}

We first prove Proposition~\ref{prop:bias}. 
\begin{proof}[Proof of Proposition~\ref{prop:bias}]
The bias part of Proposition~\ref{prop:bias} is straightforward and is therefore omitted. Theorem~1 of \citet{lin2024worthwhile} can be used to show that when $p = o (n^{2 / 3})$ the asymptotic variance of the gOB estimator $\hat{\tau}_{\gob}$ equals $\sigma^{2}$ after being standardized by $\sqrt{n}$ (see Lemma~\ref{lem:MLE} for a rephrasing of this statement). Therefore, we are left to prove that the asymptotic variance of $\hat{\tau}_{\JASA}$ is the same as that of $\hat{\tau}_{\gob}$.

Without loss of generality, we only prove the results for $t = 1$, so we omit the dependence on $t$ in all relevant notation. Let $\bdelta \coloneqq \hat{\bbeta} - \bbeta$ and $\hat{\bdelta}_{i} \coloneqq \hat{\bbeta}^{(-i)} - \hat{\bbeta}$, where $\hat{\bbeta}$ is the solution to the score equation without the jackknife operation. By construction and Newton-Leibniz formula, the following identity holds for every $i \in [n]$:
\begin{align*}
& \frac{1}{n} \sum_{l = 1}^{n} \bm{Z}_{l} \int_{0}^{1} \{\nabla_{\bbeta} g (\bm{Z}_{l}^{\top} (\hat{\bbeta} + s \cdot \hat{\bdelta}_{i})) \diff s\}^{\top} \hat{\bdelta}_{i} = \frac{1}{n} \bm{Z}_{i} Y_{i}^{\dag} \\
& \Rightarrow \hat{\bdelta}_{i} = \Big\{ \underbrace{\int_{0}^{1} \frac{1}{n} \sum_{l = 1}^{n} \bm{Z}_{l} \{\nabla_{\bbeta} g (\bm{Z}_{l}^{\top} (\hat{\bbeta} + s \cdot \hat{\bdelta}_{i})) \diff s\}^{\top}}_{\eqqcolon \, \hat{\bm{\Xi}}} \Big\}^{-1} \frac{1}{n} \bm{Z}_{i} Y_{i}^{\dag}.
\end{align*}
In particular, we can simplify $\hat{\bm{\Xi}}$ as follows.
\begin{align*}
\hat{\bm{\Xi}} = \int_{0}^{1} \frac{1}{n} \sum_{l = 1}^{n} \bm{Z}_{l} \bm{Z}_{l}^{\top} g' (\bm{Z}_{l}^{\top} (\hat{\bbeta} + s \cdot \hat{\bdelta}_{i})) \diff s.
\end{align*}
By Lemma~\ref{lem:concentration},
\begin{align*}
\Vert \hat{\bdelta}_{i} \Vert_{2} \leq \Vert \hat{\bm{\Xi}}^{-1} \Vert_{\rm op} \frac{1}{n} \Vert \bm{Z}_{i} Y_{i}^{\dag} \Vert_{2} \lesssim \frac{1}{n} \Vert \bm{Z}_{i} Y_{i}^{\dag} \Vert_{2} = O_{\bbP} \left( \frac{\sqrt{p}}{n} \right),
\end{align*}
hence $\Vert \sqrt{n} \hat{\bdelta}_{i} \Vert_{2} = O_{\bbP} (\sqrt{p / n}) = o_{\bbP} (1)$. Then, in view of the decomposition,
\begin{align*}
\sqrt{n} (\hat{\tau}_{\JASA} - \tau) = \sqrt{n} (\hat{\tau}_{\JASA} - \hat{\tau}_{\gob}) + \sqrt{n} (\hat{\tau}_{\gob} - \tau),
\end{align*}
and Lemma~\ref{lem:MLE} on the asymptotic variance of $\sqrt{n} (\hat{\tau}_{\gob} - \tau)$, the desired conclusion holds if we can show that
\begin{align*}
\sqrt{n} (\hat{\tau}_{\JASA} - \hat{\tau}_{\gob}) = o_{\bbP} (1).
\end{align*}
To this end,
\begin{align*}
& \ \sqrt{n} (\hat{\tau}_{\JASA} - \hat{\tau}_{\gob}) \\
= & \ \frac{1}{n} \sum_{i = 1}^{n} \Big( 1 - \frac{T_{i}}{\pi_{1}} \Big) \sqrt{n} \{g (\bm{Z}_{i}^{\top} \hat{\bbeta}) - g (\bm{Z}_{i}^{\top} \hat{\bbeta}^{(-i)})\} \\
= & \ \frac{1}{n} \sum_{i = 1}^{n} \Big( 1 - \frac{T_{i}}{\pi_{1}} \Big) \sqrt{n} g' (\bm{Z}_{i}^{\top} \tilde{\bbeta}) \bm{Z}_{i} (\hat{\bbeta} - \hat{\bbeta}^{(-i)}),
\end{align*}
for some $\tilde{\bbeta}$ between $\hat{\bbeta}$ and $\hat{\bbeta}^{(-i)}$. Finally, by Cauchy-Schwarz inequality, we have $\sqrt{n} (\hat{\tau}_{\JASA} - \hat{\tau}_{\gob}) = o_{\bbP} (1)$ as desired.

\begin{lemma}
\label{lem:concentration}
Under Assumptions~\ref{as:regularity}--\ref{as:link}, the following hold: As $n \rightarrow \infty$ and $p = o (n)$,
\begin{enumerate}[label = (\roman*)]
\item $\Vert \bm{Z} Y^{\dag} \Vert_{2} = O_{\bbP} (\sqrt{p})$;

\item With probability converging to 1, the smallest eigenvalue of $\hat{\bm{\Xi}}$ is bounded below by some fixed constant $c > 0$.
\end{enumerate}
\end{lemma}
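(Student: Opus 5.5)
Part (i) reduces to a deterministic bound plus a scalar moment bound. Since $\bm{Z} = (1, \bm{X}^{\top})^{\top}$ and each coordinate of $\bm{X}$ is bounded by Assumption~\ref{as:regularity}(a), we have $\Vert \bm{Z} \Vert_{2}^{2} \leq 1 + p \max_{k} \Vert X_{k} \Vert_{\infty}^{2} \lesssim p$ almost surely. Hence
\[
\Vert \bm{Z} Y^{\dag} \Vert_{2} \leq \Vert \bm{Z} \Vert_{2} \, |Y| / \pi_{t} \lesssim \sqrt{p} \, |Y| .
\]
Here $\pi_{t}$ is bounded away from zero, and $Y$ has a finite second (indeed fourth) moment by Assumption~\ref{as:regularity}(b). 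Markov's inequality then gives $|Y| = O_{\bbP}(1)$, so $\Vert \bm{Z} Y^{\dag} \Vert_{2} = O_{\bbP}(\sqrt{p})$.

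For part (ii), I will compare $\hat{\bm{\Xi}}$ with the population Jacobian. The first step is to lower bound the integrand. By Assumption~\ref{as:link}(a), $g' \geq c_{g}$ everywhere, so for every $s \in [0, 1]$ and every $i$,
\[
\frac{1}{n} \sum_{l = 1}^{n} \bm{Z}_{l} \bm{Z}_{l}^{\top} g' \big( \bm{Z}_{l}^{\top} (\hat{\bbeta} + s \hat{\bdelta}_{i}) \big) \succeq c_{g} \, \hat{\bSigma} .
\]
Integrating over $s$ gives $\lambda_{\min}(\hat{\bm{\Xi}}) \geq c_{g} \lambda_{\min}(\hat{\bSigma})$, uniformly in $i$ and regardless of where $\hat{\bbeta}^{(-i)}$ lies. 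This removes any need to localize $\hat{\bbeta}^{(-i)}$ near $\bbeta$, which would otherwise be circular.

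It then remains to show that $\lambda_{\min}(\hat{\bSigma}) \geq c' > 0$ with probability tending to one. First, $\bSigma = \pi_{1} \bSigma_{1} + \pi_{0} \bSigma_{0}$ (for Bernoulli randomization, in fact $\bSigma_{t} = \bSigma$), so $\lambda_{\min}(\bSigma) \geq b_{1}$ by Assumption~\ref{as:regularity}(c). Next, I would apply a matrix Bernstein inequality to $\hat{\bSigma} - \bSigma = n^{-1} \sum_{l} (\bm{Z}_{l} \bm{Z}_{l}^{\top} - \bSigma)$. The summands are bounded in operator norm by $\lesssim p$, and the variance proxy satisfies $\Vert \bbE[(\bm{Z}\bm{Z}^{\top})^{2}] \Vert_{\rm op} \lesssim p \Vert \bSigma \Vert_{\rm op} \lesssim p$. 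This yields
\[
\Vert \hat{\bSigma} - \bSigma \Vert_{\rm op} = O_{\bbP} \Big( \sqrt{p \log p / n} + p \log p / n \Big).
\]
By Weyl's inequality, $\lambda_{\min}(\hat{\bSigma}) \geq b_{1}/2$ with probability tending to one, and we may take $c = c_{g} b_{1} / 2$.

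The main obstacle is this concentration step. The Bernstein bound carries a $\log p$ factor, so as written it needs $p \log p = o(n)$ rather than merely $p = o(n)$. If the $\log$ factor must be removed, I would instead use a lower-tail bound for the smallest eigenvalue of a sum of independent positive semidefinite matrices with bounded rows; Koltchinskii–Mendelson or Oliveira-type small-ball arguments give $\lambda_{\min}(\hat{\bSigma}) \geq b_{1}/2$ whenever $p/n \to 0$. Under the standing condition $p = o(n^{2/3})$ of Proposition~\ref{prop:bias}, however, the simple Bernstein route already suffices.
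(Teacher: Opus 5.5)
Your argument is correct under the growth condition $p\log p = o(n)$ that you flag, and for part (ii) it takes a genuinely different route from the paper. Part (i) is essentially the paper's argument. The paper bounds $\bbE\Vert \bm{Z}Y^{\dag}\Vert_2^2 = \bbE[Y^{\dag 2}\bm{Z}^\top\bm{Z}] = O(p)$ and applies Markov; you use $\Vert\bm{Z}\Vert_2\lesssim\sqrt{p}$ pointwise together with $|Y| = O_{\bbP}(1)$.

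For part (ii), the paper localizes. It bounds $\Vert\hat{\bm{\Xi}} - \hat{\bm{\Xi}}(\bbeta)\Vert_{\rm op}$ by a $g''$-Taylor remainder driven by $\Vert\hat{\bdelta}_i\Vert_2 + \Vert\hat{\bbeta} - \bbeta\Vert_2$. It then implicitly uses concentration of $\hat{\bm{\Xi}}(\bbeta)$ around $\bm{\Xi}(\bbeta)$, whose spectrum is controlled by Assumption~\ref{as:link}. That route has two weaknesses that your sandwich $\hat{\bm{\Xi}}\succeq c_g\hat{\bSigma}$ avoids:
\begin{itemize}
\item The smallness of $\hat{\bdelta}_i$ is derived, in the proof of Proposition~\ref{prop:bias}, from part (ii) itself. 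This is the circularity you noticed.
\item The remainder is governed by $\max_l|\bm{Z}_l^\top(\hat{\bbeta} + s\hat{\bdelta}_i - \bbeta)|$. With the available bounds this is only $O_{\bbP}(\sqrt{p}\cdot\sqrt{p\log n/n})$, so it needs $p = o(\sqrt{n/\log n})$.
\end{itemize}
Your bound uses only the global lower bound $g'\geq c_g$. It holds simultaneously for all $i$, wherever $\hat{\bbeta}^{(-i)}$ lies, and reduces everything to a single Gram-matrix concentration. What the paper's localization would buy is an argument that survives for links whose derivative is bounded below only on compacts (logistic, Poisson). There the eigenvalue condition on $\bm{\Xi}_t(\bbeta_t)$, rather than the global $c_g$, is the operative hypothesis. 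Under the assumptions as stated, your route is cleaner and needs less.

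One correction: your closing claim that Koltchinskii--Mendelson or Oliveira-type small-ball arguments remove the $\log p$ is false here. Those results need a small-ball (or bounded-kurtosis) condition on $\langle v,\bm{Z}\rangle$, uniformly over directions $v$ with absolute constants. Bounded coordinates plus $b_1\le\lambda(\bSigma)\le b_2$ do not provide this.

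Here is a counterexample. For $p$ a power of $2$, let $h_1,\ldots,h_p\in\{\pm1\}^p$ be the columns of a Sylvester--Hadamard matrix. Set $\bm{X} = \epsilon h_J$, with $J$ uniform on $[p]$ and $\epsilon$ an independent Rademacher sign. The coordinates are bounded by $1$ and $\bSigma$ is the identity. Yet the Rayleigh quotient of $\hat{\bSigma}$ at $(0,h_K^\top)^\top$ equals $pN_K/n$, where $N_K = \#\{l: J_l = K\}$. By the coupon-collector bound, some $N_K = 0$ with probability tending to one when, say, $n = p\sqrt{\log p}$, even though $p/n\to 0$. With the identity link, $\hat{\bm{\Xi}} = \hat{\bSigma}$ identically. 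Taking $Y\equiv 0$ and minimum-norm solutions makes the remaining assumptions hold trivially. So part (ii) fails under $p = o(n)$ alone.

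The $\log p$ from matrix Bernstein is therefore genuinely needed; the paper's own proof also needs $p\log n = o(n)$ through its MLE-rate assumption. Drop the fallback and state (ii) under $p\log p = o(n)$. That condition is implied by the $p = o(n^{2/3})$ regime in which the lemma is actually used.
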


\begin{proof}
We first show part (a). By Assumption~\ref{as:regularity}
\begin{align*}
\bbE [\Vert \bm{Z} Y^{\dag} \Vert_{2}^{2}] = \bbE [Y^{\dag 2} \bm{Z}^{\top} \bm{Z}] = O (p).
\end{align*}
Therefore, by Markov's inequality, the desired result holds.

Next, we show part (b). By Assumption~\ref{as:link} on the link function $g(\cdot)$, $\hat{\bm{\Xi}}$ is semidefinite non-negative. Let $\hat{\bm{\Xi}} (\bbeta) =  \frac{1}{n} \sum_{l = 1}^{n} \bm{Z}_{l} \bm{Z}_{l}^{\top} g' (\bm{Z}_{l}^{\top}\bbeta)$ be the sample version of $\bm{\Xi} (\bbeta) = \bbE [\bm{Z}\bm{Z}^{\top} g' (\bm{Z}^{\top}\bbeta)]$. 
We also have by standard matrix concentration bound 
\begin{align*}
\Vert \hat{\bm{\Xi}} - \hat{\bm{\Xi}} (\bbeta) \Vert_{\rm op} & = \Big\Vert \int_{0}^{1} \frac{1}{n} \sum_{l = 1}^{n} \bm{Z}_{l} \bm{Z}_{l}^{\top} \{g' (\bm{Z}_{l}^{\top} (\hat{\bbeta} + s \cdot \hat{\bdelta}_{i})) - g' (\bm{Z}_{l}^{\top}\bbeta)\} \diff s \Big\Vert_{\rm op} \\
& \leq \sup_{\bar{\bbeta}} \left\Vert \frac{1}{n} \sum_{l = 1}^{n} \bm{Z}_{l} \bm{Z}_{l}^{\top} g'' (\bm{Z}_{l}^{\top} \bar{\bbeta}) \bm{Z}_{l}^{\top} \{\Vert \bdelta_{i} \Vert_{2} + \Vert \bdelta \Vert_{2}\} \right\Vert_{\rm op} \\
& = o_{\bbP} (1).
\end{align*}
\end{proof}

The following lemma, a direct corollary of Theorem~1 of \citet{lin2024worthwhile}, characterizes the asymptotic properties of $\hat{\tau}_{\gob}$, which is the same as $\hat{\tau}_{\JASA}$ except that jackknife is not employed.
\begin{lemma}
\label{lem:MLE}
Under the same assumptions as in Proposition~\ref{prop:bias}, when $p = o (n^{2 / 3})$,
\begin{align*}
\var \{\sqrt{n} (\hat{\tau}_{\gob} - \tau)\} = \sigma^{2} + o (1).
\end{align*}
\end{lemma}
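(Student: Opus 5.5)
The plan is to reduce Lemma~\ref{lem:MLE} to an asymptotic linear expansion of $\hat{\tau}_{\gob}$ around its oracle counterpart that uses the population coefficients $\bbeta_{t}$. The oracle estimator is an i.i.d.\ average whose variance is exactly $\sigma^{2}/n$. As in the proof of Proposition~\ref{prop:bias}, I work with one arm $t=1$, drop $t$ from the notation, and use the true $\pi_{1}$. The difference $\hat{\tau}_{\gob}-\tau$ then reduces to the sum of three pieces:
\begin{align*}
\frac{1}{n}\sum_{i=1}^{n}\phi_{i} \;+\; \frac{1}{n}\sum_{i=1}^{n}\Big(1-\frac{T_{i}}{\pi_{1}}\Big)\{g(\bm{Z}_{i}^{\top}\hat{\bbeta})-g(\bm{Z}_{i}^{\top}\bbeta)\} \;=:\; A_{n}+R_{n},
\end{align*}
where $\phi_{i}$ is the centered summand appearing inside $\sigma^{2}$. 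Clearly $\var(\sqrt{n}A_{n})=\sigma^{2}$. It therefore suffices to show $\bbE[nR_{n}^{2}]=o(1)$, since Cauchy--Schwarz then gives $\var\{\sqrt{n}(A_{n}+R_{n})\}=\sigma^{2}+o(1)$. Equivalently, at the level of convergence in probability, it suffices to show $\sqrt{n}R_{n}=o_{\bbP}(1)$ together with a uniform integrability argument based on Assumption~\ref{as:regularity}(b) and the boundedness of $g$ on the bounded range of $\bm{Z}^{\top}\bbeta$.

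To control $R_{n}$, I Taylor expand $g$ to second order around $\bm{Z}_{i}^{\top}\bbeta$, writing $\bdelta=\hat{\bbeta}-\bbeta$. This splits $R_{n}$ into a linear term and a quadratic term:
\begin{align*}
R_{n}^{(1)} &= \Big[\frac{1}{n}\sum_{i}\Big(1-\frac{T_{i}}{\pi_{1}}\Big)g'(\bm{Z}_{i}^{\top}\bbeta)\bm{Z}_{i}\Big]^{\top}\bdelta, \\
R_{n}^{(2)} &\lesssim \frac{1}{n}\sum_{i}(\bm{Z}_{i}^{\top}\bdelta)^{2},
\end{align*}
where the bound on $R_{n}^{(2)}$ uses the bounded $g''$ near the population index.

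\textbf{Quadratic term.} By Assumption~\ref{as:regularity}(c), Assumption~\ref{as:link}(c) and a matrix concentration bound for $\hat{\bSigma}$ (valid since $p=o(n)$ and $\bm{Z}$ is bounded), this term is $O_{\bbP}(\Vert\bdelta\Vert_{2}^{2})=O_{\bbP}(p\log n/n)$. After multiplying by $\sqrt{n}$ it becomes $p\log n/\sqrt{n}$, which is only $o(1)$ for $p\ll\sqrt{n}$.

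\textbf{Linear term.} The bracket in $R_{n}^{(1)}$ is a mean-zero average of $p$-vectors, so its norm is $O_{\bbP}(\sqrt{p/n})$. The naive bound is then $\sqrt{n}\,|R_{n}^{(1)}|\lesssim\sqrt{p/n}\cdot\sqrt{p\log n}$, which again is not enough beyond $\sqrt{n}$.

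This is where Theorem~1 of \citet{lin2024worthwhile} enters, and I would invoke it rather than reprove it. Their argument replaces $\bdelta$ by its first-order Bahjoint expansion $\bm{\Xi}^{-1}\frac{1}{n}\sum_{j}\bm{Z}_{j}\frac{T_{j}}{\pi_{1}}\{Y_{j}-g(\bm{Z}_{j}^{\top}\bbeta)\}$. With this substitution, $R_{n}^{(1)}$ becomes a $V$-statistic whose off-diagonal ($U$-statistic) part has variance $O(p/n^{2})$. Its diagonal part is a mean of terms of size $p/n$, and that diagonal part cancels in the canonical-link gOB form by the prediction unbiasedness of the MLE. The remaining higher-order remainder of the Bahadur expansion is $O_{\bbP}(p^{3/2}/n)$ in $\ell_{2}$ after the appropriate projection, and this is exactly where the threshold comes from: $\sqrt{n}\cdot p^{3/2}/n=o(1)$ if and only if $p=o(n^{2/3})$.

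So the concrete steps are: (i) verify that Assumptions~\ref{as:regularity}--\ref{as:link} imply the hypotheses of \citet[Theorem~1]{lin2024worthwhile}, namely bounded design, eigenvalue bounds on $\bm{\Xi}$, smooth $g$ with bounded derivatives, and the rate in Assumption~\ref{as:link}(c); (ii) identify $\hat{\tau}_{\gob}$ with their plug-in functional, whose $\sqrt{n}$-remainder they show is $o_{\bbP}(1)$ when $p=o(n^{2/3})$; and (iii) upgrade convergence in probability to convergence of variances via uniform integrability, using the fourth-moment bound on $Y$ and boundedness of $g(\bm{Z}^{\top}\hat{\bbeta})$ on the high-probability event where $\hat{\bbeta}$ is close to $\bbeta$.

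The main obstacle is step (iii), together with matching the exact form of the remainder. The variance statement needs control of $\bbE[nR_{n}^{2}]$ rather than just $o_{\bbP}(1)$, and $\hat{\bbeta}$ may be ill-behaved, or even fail to exist, on an event of small probability. I would handle this by truncation: define the estimator on the complement of that event by its oracle value, and bound the contribution of the bad event by its probability times the fourth-moment bound.
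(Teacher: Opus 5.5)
Your reduction aims at a statement that is false in exactly the regime the lemma covers. Proving $\bbE[nR_n^{2}]=o(1)$, or $\sqrt{n}R_n=o_{\bbP}(1)$, would make $\hat{\tau}_{\gob}$ $\sqrt{n}$-unbiased. But for $\sqrt{n}\lesssim p=o(n^{2/3})$ the plug-in gOB estimator carries a bias of exact order $p/n$. This bias is the reason \JASA{} is introduced at all, and the reason the lemma is stated for $\var\{\sqrt{n}(\hat{\tau}_{\gob}-\tau)\}$ rather than for the mean squared error.

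The cancellation you invoke does not occur. Prediction unbiasedness only gives $\hat{\tau}_{1,\gob}=n^{-1}\sum_{i}\hat{\mu}_{1}(\bm{Z}_{i})$, whether you use $\pi_1$ or $\hat{\pi}_1$; it says nothing about the diagonal of your $V$-statistic. Already for the identity link, with $\epsilon\coloneqq Y(1)-\bm{Z}^{\top}\bbeta_{1}$, that diagonal is $-\frac{\hat{\pi}_{0}}{\hat{\pi}_{1}}\frac{1}{n}\sum_{i:T_i=1}H_{1,ii}\epsilon_{i}$. Its mean is approximately $-\frac{\pi_{0}}{\pi_{1}n}\bbE[\epsilon\,\bm{Z}^{\top}\bSigma^{-1}\bm{Z}]$, and this is the kind of leverage term that the correction in $\hat{\tau}_{\db}$, \eqref{debiased estimator}, is designed to remove.

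This mean is of order $p/n$ under Assumptions~\ref{as:regularity}--\ref{as:link}. Take $\bm{X}=S\bm{U}$ with $S\in\{1,2\}$ equiprobable, $\bm{U}$ a vector of i.i.d.\ Rademacher signs independent of $S$, and $Y(1)=S^{2}+\xi$ with independent noise $\xi$. Then $\bbE[\epsilon\,\bm{Z}^{\top}\bSigma^{-1}\bm{Z}]=p\,\var(S^{2})/\bbE[S^{2}]$. Hence $n\bbE[R_n^{2}]\geq n(\bbE R_n)^{2}\asymp p^{2}/n\to\infty$, for example when $p=n^{0.6}$.

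For the same reason, step (ii) misreads \citet{lin2024worthwhile}. Their plug-in result is exactly the quadratic breakdown: a vanishing $\sqrt{n}$-remainder for the plug-in requires $p^{2}\ll n$. The $n^{2/3}$ threshold belongs to the jackknife-corrected estimator, or to the fluctuation part only.

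The fix is to prove what the lemma actually asserts, namely $\var(\sqrt{n}R_n)=o(1)$. Combined with Cauchy--Schwarz on $\mathrm{cov}(\sqrt{n}A_n,\sqrt{n}R_n)$, this gives the claim. Your Bahadur substitution then works once you center each piece:
\begin{itemize}
\item The diagonal part $D_n=n^{-2}\sum_i c_i$ is an average of i.i.d.\ terms with $\bbE[c_i^{2}]\lesssim p^{2}$. So $n\var(D_n)\lesssim p^{2}/n^{2}\to0$, even though $\sqrt{n}\,\bbE D_n\asymp p/\sqrt{n}$.
\item The off-diagonal $U$-statistic contributes $O(p/n)$ to the variance, as you say.
\item The higher-order remainder must be $O_{\bbP}(p^{3/2}/n^{3/2})$ on the $\tau$ scale, not $p^{3/2}/n$ as you wrote. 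With your scaling, $\sqrt{n}\cdot p^{3/2}/n=o(1)$ would force $p=o(n^{1/3})$. With the corrected scaling, this is where $p=o(n^{2/3})$ enters.
\item Your quadratic Taylor term needs the same centered treatment, since your crude bound $p\log n/\sqrt{n}$ is never improved.
\item The uniform-integrability and truncation step (iii) must be applied to $n(R_n-\bbE R_n)^{2}$, not to $nR_n^{2}$.
\end{itemize}
The paper gives no argument beyond citing Theorem~1 of \citet{lin2024worthwhile}; the lemma needs only the structure ``bias of order $p/n$ plus negligible fluctuation'', which is consistent with the quadratic-breakdown analysis. Once corrected in this way, your route is a more explicit version of that same citation.
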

\end{proof}

With Proposition~\ref{prop:bias}, the proof of Proposition~\ref{prop:cal} is straightforward because the calibration approach is equivalent to OLS adjustment as in \citet{ye2023toward} and \citet{ma2022regression}, except that the covariates are estimated OR. Since the dimension of the new covariates is at most 3 (including the intercept), the desired claims are direct consequences of well-established results of OLS adjusted estimators.

\bmsection{A Property of \texorpdfstring{$\hat{\tau}_{\adj, 2}^{\dag}$}{}}\label{sec:proof propositoin adj2c db}
\vspace*{12pt}

Note that the debiased estimator $\hat{\tau}_{\db}$ proposed by \citet{lu2025debiased} is designed under CRE and implicitly does not contain a constant term, which can be expressed as:
\begin{equation}
\label{debiased estimator}
\hat{\tau}_{\db} \coloneqq \left(\hat{\tau}_{1, \adj} + \frac{\hat{\pi}_{0}}{\hat{\pi}_{1}} \frac{1}{n} \sum_{i = 1}^{n} \frac{T_{i}}{\hat{\pi}_{1}} H_{i,i}^{\star} \left( Y_{i} - \bar{Y}_{1} \right)\right) - \left(\hat{\tau}_{0, \adj} + \frac{\hat{\pi}_{1}}{\hat{\pi}_{0}} \frac{1}{n} \sum_{i = 1}^{n} \frac{1-T_{i}}{\hat{\pi}_{0}} H_{i,i}^{\star} \left( Y_{i} - \bar{Y}_{0} \right)\right)
\end{equation}
where $\hat{\pi}_{t} = n_t / n$, $\hat{\tau}_{t,\adj} \coloneqq \frac{1}{n}\sum_{i=1}^{n}\frac{\mathbbm{1}\{T_i=t\}}{\hat{\pi}_t}\left(Y_i - \bX_i^\top \tilde{\bm{\beta}}_t \right)$ and $\tilde{\bm \beta}_{t}=\left(\bbX^\top \bbX\right)^{-1}\sum_{j=1}^{n}\bX_{j}\frac{\mathbbm{1}\{T_j=t\}\left(Y_j - \bar{Y}_{t}\right)}{\hat{\pi}_t}$. In the following, we show that $\hat{\tau}_{\db}$ (also $\hat{\tau}_{\adj, 2}^{\dag}$) is the same regardless of whether we use $\bbZ$ or $\bbX$. 

\begin{proposition}
Whether the design matrix  contains the intercept term ($\bbZ$) or not ($\bbX$), we have
\begin{equation}
  \hat{\tau}_{\adj, 2}^{\dag} \equiv \hat{\tau}_{\db}.
\end{equation}
\end{proposition}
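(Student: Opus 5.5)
The plan is a direct algebraic computation. I will write both $\hat{\tau}_{\adj, 2}^{\dag}$ and $\hat{\tau}_{\db}$ arm by arm in terms of hat-matrix entries. Then I will use two facts: the centered residuals sum to zero, and $\bbX$ is centered. Fix $t \in \{0,1\}$ and define
\begin{align*}
R_{j} \coloneqq \frac{\mathbbm{1}\{T_j = t\}(Y_j - \bar{Y}_t)}{\hat{\pi}_t}.
\end{align*}
By definition of $\bar{Y}_t$ we have $\sum_{j} R_j = 0$. We also have $\mathbbm{1}\{T_i=t\} R_i/\hat{\pi}_t = R_i/\hat{\pi}_t$, since $R_i = 0$ outside arm $t$. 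From \eqref{loo-beta}, the leave-one-out prediction is
\begin{align*}
\bm{Z}_i^{\top}\hat{\bbeta}_t^{(-i)} = \sum_{j\neq i} \bm{Z}_i^{\top}(\bbZ^{\top}\bbZ)^{-1}\bm{Z}_j R_j = \sum_{j \neq i} H_{ij} R_j = (\bbH \mathbf{R})_i - H_{ii} R_i .
\end{align*}
With $\bbX$ in place of $\bbZ$, the same formula holds with $\bbH^{\star}$ in place of $\bbH$.

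Step 1 expands the AIPW form. The arm-$t$ component of $\hat{\tau}_{\adj,2}^{\dag}$ is
\begin{align*}
\frac{1}{n}\sum_{i}\frac{\mathbbm{1}\{T_i=t\}Y_i}{\hat{\pi}_t} + \frac{1}{n}\sum_i\Big(1-\frac{\mathbbm{1}\{T_i=t\}}{\hat{\pi}_t}\Big)\sum_{j\neq i}H_{ij}R_j .
\end{align*}
The unweighted double sum is $\frac1n\sum_j R_j(\sum_i H_{ij} - H_{jj})$. When the intercept is included, $\mathbf{1}_n$ lies in the column space of $\bbZ$, so the column sums of $\bbH$ equal $1$. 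Together with $\sum_j R_j=0$, this reduces the double sum to $-\frac1n\sum_j H_{jj}R_j$. With $\bbX$ instead, centering gives $\mathbf{1}_n^{\top}\bbX = \mathbf{0}$, so the column sums of $\bbH^{\star}$ are $0$ and the term is $-\frac1n\sum_j H^{\star}_{jj}R_j$.

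Step 2 reconciles the two hat matrices. Because $\bbX$ is centered, $\bbH = \mathbf{1}_n\mathbf{1}_n^{\top}/n + \bbH^{\star}$, hence $H_{ii} = 1/n + H^{\star}_{ii}$. Also $(\bbH\mathbf{R})_i = (\bbH^{\star}\mathbf{R})_i$, since the mean of $\mathbf{R}$ is zero. Consequently every term involving $\bbH$ differs from its $\bbH^{\star}$ counterpart only by multiples of $\frac1n\sum_j R_j = 0$. This shows the $\bbZ$- and $\bbX$-versions of $\hat{\tau}_{\adj,2}^{\dag}$ coincide, so it suffices to work with $\bbH^{\star}$.

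Step 3 matches the result to \eqref{debiased estimator}. Note that $\bX_i^{\top}\tilde{\bbeta}_t = (\bbH^{\star}\mathbf{R})_i$. The weighted part $-\frac1n\sum_i \frac{\mathbbm{1}\{T_i=t\}}{\hat{\pi}_t}\{(\bbH^{\star}\mathbf{R})_i - H^{\star}_{ii}R_i\}$ combines with the first sum to give exactly $\hat{\tau}_{t,\adj} + \frac{1}{n}\sum_i \frac{H^{\star}_{ii}R_i}{\hat{\pi}_t}$. Adding the $-\frac1n\sum_i H^{\star}_{ii}R_i$ from Step 1 yields
\begin{align*}
\hat{\tau}_{t,\adj} + \frac{1-\hat{\pi}_t}{\hat{\pi}_t}\,\frac1n\sum_i H^{\star}_{ii}R_i .
\end{align*}
Here $(1-\hat{\pi}_t)/\hat{\pi}_t$ equals $\hat{\pi}_0/\hat{\pi}_1$ for $t=1$ and $\hat{\pi}_1/\hat{\pi}_0$ for $t=0$. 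Since $H^{\star}_{ii}R_i = \frac{\mathbbm{1}\{T_i=t\}}{\hat{\pi}_t}H^{\star}_{ii}(Y_i-\bar{Y}_t)$, this is precisely the arm-$t$ bracket of $\hat{\tau}_{\db}$. Taking the difference of the two arms gives the claim.

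The main obstacle is bookkeeping rather than any deep step. In particular, the argument requires $\hat{\pi}_t$ (not $\pi_t$) both in $R_j$ and in the AIPW weights, and it uses the column-sum identities for $\bbH$ and $\bbH^{\star}$, which rely on the intercept and on the centering of $\bbX$ respectively. I would verify those identities first, and assume $\bbZ^{\top}\bbZ$ is invertible; in the singular case, the Moore--Penrose pseudoinverse gives the same projection facts.
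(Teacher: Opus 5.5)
Your proposal is correct and is essentially the paper's argument run in the opposite direction. The paper starts from $\hat{\tau}_{\db}$, rewrites it with $\bbH$ via $\bbH = \bbH^{\star} + \frac{1}{n}\mathbf{1}_n\mathbf{1}_n^{\top}$, inserts the vanishing term $\frac{1}{n}\sum_i \bm{Z}_i^{\top}\hat{\bbeta}_1 = 0$, and splits the double sum into the off-diagonal part ($\hat{\tau}_{1,\adj,2}^{\dag}$) and a diagonal remainder; these are exactly your column-sum and ``full minus diagonal'' steps, while your Step 2 also spells out the no-intercept ($\bbX$) case that the paper only defers to earlier work.
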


\begin{proof}
Here, we focus on the treatment arm for conciseness; the control arm follows analogously. We first consider the design matrix with an intercept term.
Note that $\bbH = \bbH^\star + \frac{1}{n}\mathbf{1}_n\mathbf{1}_{n}^{\top}$, so we have
\begin{gather*}
    \bm{Z}_i^{\top}\hat{\bbeta}_{1} = \bm{Z}_i^{\top}\left(\bbZ^\top \bbZ\right)^{-1}\sum_{j=1}^{n}\bm{Z}_j \frac{T_j\left(Y_j - \bar{Y}_1\right)}{\hat{\pi}_1} = \sum_{j=1}^{n} H_{i,j} \frac{T_j\left(Y_j - \bar{Y}_1\right)}{\hat{\pi}_1} = \sum_{j=1}^{n}H_{i,j}^{\star}\frac{T_j\left(Y_j - \bar{Y}_1\right)}{\hat{\pi}_1} = \bm{X}_{i}^{\top}\tilde{\bbeta}_1, \\
    \sum_{i=1}^{n}H_{i,i}^{\star}\frac{T_i}{\hat{\pi}_1} \left(Y_i-\bar{Y}_{1}\right) = \sum_{i=1}^{n}H_{i,i}\frac{T_i}{\hat{\pi}_1} \left(Y_i-\bar{Y}_{1}\right).
\end{gather*}
Then
\begin{align*}
    \hat{\tau}_{1,\db} \coloneqq &\ \frac{1}{n}\sum_{i=1}^{n}\frac{T_i}{\hat{\pi}_1}\left(Y_i - \bm{X}_i^{\top}\tilde{\bbeta}_1\right) + \frac{\hat{\pi}_{0}}{\hat{\pi}_{1}} \frac{1}{n} \sum_{i = 1}^{n} \frac{T_{i}}{\hat{\pi}_{1}} H_{i,i}^{\star} \left( Y_{i} - \bar{Y}_{1} \right) \\
    = &\ \frac{1}{n}\sum_{i=1}^{n}\frac{T_i}{\hat{\pi}_1}\left(Y_i - \bm{Z}_i^{\top}\hat{\bbeta}_1\right) + \frac{\hat{\pi}_{0}}{\hat{\pi}_{1}} \frac{1}{n} \sum_{i = 1}^{n} \frac{T_{i}}{\hat{\pi}_{1}} H_{i,i} \left( Y_{i} - \bar{Y}_{1} \right). 
\end{align*}

For the first term, we have:
\begin{align*}
    &\ \frac{1}{n}\sum_{i=1}^{n}\frac{T_i}{\hat{\pi}_1}\left(Y_i-\bm{Z}_i^\top \hat{\bbeta}_1\right)  \\
    &\ = \frac{1}{n}\sum_{i=1}^{n}\frac{T_iY_i}{\hat{\pi}_1} + \frac{1}{n}\sum_{i=1}^{n}\left(1 - \frac{T_i}{\hat{\pi}_1}\right)\bm{Z}_i^{\top}\hat{\bbeta}_1 \\
    &\ = \frac{1}{n}\sum_{i=1}^{n}\frac{T_iY_i}{\hat{\pi}_1} + \frac{1}{n}\sum_{i=1}^{n}\left(1-\frac{T_i}{\hat{\pi}_1}\right)\sum_{j=1}H_{i, j}\frac{T_j\left(Y_j-\bar{Y}_1\right)}{\hat{\pi}_1} \\
    &\ = \frac{1}{n}\sum_{i=1}^{n}\frac{T_iY_i}{\hat{\pi}_1} + \frac{1}{n}\sum_{i\neq j}\left(1-\frac{T_i}{\hat{\pi}_1}\right)H_{i, j}\frac{T_j\left(Y_j - \bar{Y}_1\right)}{\hat{\pi}_1} + \frac{1}{n}\sum_{i=1}^{n}\left(1 - \frac{T_i}{\hat{\pi}_1} \right) H_{i, i} \frac{T_i\left(Y_i - \bar{Y}_1\right)}{\hat{\pi}_1} \\
    &\ = \hat{\tau}_{1,\adj, 2}^{\dag} + \frac{1}{n}\sum_{i=1}^{n}\left(\hat{\pi}_1 - 1\right)H_{i,i}\frac{T_i\left(Y_i - \bar{Y}_1\right)}{\hat{\pi}_1^{2}} \\
    &\ = \hat{\tau}_{1,\adj, 2}^{\dag} - \frac{\hat{\pi}_{0}}{\hat{\pi}_{1}} \frac{1}{n} \sum_{i = 1}^{n} \frac{T_{i}}{\hat{\pi}_{1}} H_{i,i} \left(Y_{i} - \bar{Y}_1 \right).
\end{align*}
Here, the second line follows from the property of linear regression that $\frac{1}{n}\sum_{i=1}^{n}\bm{Z}_i^\top \hat{\bbeta}_1 = \frac{1}{n}\sum_{i=1}^{n}\frac{T_i \left(Y_i - \bar{Y}_1\right)}{\hat{\pi}_1} = 0$. So, we have the following equation 
\begin{align*}
   \hat{\tau}_{1,\adj, 2}^{\dag} = \frac{1}{n}\sum_{i=1}^{n}\frac{T_i}{\hat{\pi}_1}\left(Y_i-\bm{Z}_i^\top \hat{\bbeta}_1\right) + \frac{\hat{\pi}_{0}}{\hat{\pi}_{1}} \frac{1}{n} \sum_{i = 1}^{n} \frac{T_{i}}{\hat{\pi}_{1}} H_{i,i} \left(Y_{i} - \bar{Y}_1 \right) = \hat{\tau}_{1,\db}.
\end{align*}
When the design matrix contains no intercept, one can derive the identity in a similar way from the above proof; see \citet{zhao2024hoif}.
\end{proof}

\bmsection{Variance estimators of \texorpdfstring{$\hat{\tau}_{\adj,2}$}{} and \texorpdfstring{$\hat{\tau}_{\adj,2}^{\dag}$}{}}
\label{var: adj2 adj2c}
\vspace*{12pt}

To estimate the variance of $\hat{\tau}_{\adj, 2}$ and $\hat{\tau}_{\adj, 2}^{\dag}$, we derive their
influence functions using Theorem 1 from \citet{bannick2025general}. For each $i\in[n]$ and $t\in\{0,1\}$, we have:
\begin{gather*}
    \bbE\left[\hat{\mu}_{t,i,\adj,2}^{(-i)}\right] 
    = \bbE\left[\sum_{j\neq i}H_{i,j}^{\star}\frac{\mathbbm{1}\{T_j=t\}Y_j}{\hat{\pi}_t}\right] + \bbE\left[\frac{1}{n}\sum_{j\neq i}\frac{\mathbbm{1}\{T_j=t\}Y_j}{\hat{\pi}_t}\right] = \bbE[Y(t)], \\
    \bbE\left[\hat{\mu}_{t,i,\adj,2}^{\dag(-i)}\right] 
    = \bbE\left[\sum_{j\neq i}H_{i,j}^{\star}\frac{\mathbbm{1}\{T_j=t\}\left(Y_j - \bar{Y}_{t}\right)}{\hat{\pi}_t}\right] + \bbE\left[\frac{1}{n}\sum_{j\neq i}\frac{\mathbbm{1}\{T_j=t\}\left(Y_j - \bar{Y}_{t}\right)}{\hat{\pi}_t}\right] = 0.
\end{gather*}
So their influence functions can be formulated as:
\begin{align*}
    \phi_{t,i,\adj, 2} =&\ \frac{\mathbbm{1}\{T_i=t\}}{\pi_t}Y_i(t) + \left(1 - \frac{\mathbbm{1}\{T_i=t\}}{\pi_t}\right){\mu}_{t,i,\adj,2}^{(-i)} - \tau_t, \\
    \phi_{t,i,\adj, 2}^{\dag} =&\ \frac{\mathbbm{1}\{T_i=t\}}{\pi_t}\left(Y_i(t) - \tau_t\right) + \left(1 - \frac{\mathbbm{1}\{T_i=t\}}{\pi_t}\right){\mu}_{t,i,\adj,2}^{\dag(-i)}. 
\end{align*}
Using these, we estimate the variances as follows:
\begin{align*}
    \hat{\sigma}_{\adj,2}^{2} &= \frac{1}{n^2}\sum_{i=1}^{n}\left\{
    \left( \frac{T_i}{\hat{\pi}_1} Y_i + \left( 1 - \frac{T_i}{\hat{\pi}_1} \right) \hat{\mu}_{1,i,\adj,2}^{(-i)}\right) -
    \left( \frac{1 - T_i}{1 - \hat{\pi}_1} Y_i + \left( 1 - \frac{1 - T_i}{1 - \hat{\pi}_1} \right) \hat{\mu}_{0,i,\adj, 2}^{(-i)} \right) - \hat{\tau}_{\adj,2} \right\}^2, \\
     \hat{\sigma}_{\adj,2}^{\dag2} &= \frac{1}{n^2} \sum_{i=1}^{n} \left\{ \left( \frac{T_i}{\hat{\pi}_1} \left(Y_i - \hat{\tau}_{1,\adj, 2}^{\dag}\right) + \left( 1 - \frac{T_i}{\hat{\pi}_1} \right) \hat{\mu}_{1,i,\adj, 2}^{\dag(-i)} \right) \right. \\
      &\quad - \left.
    \left( \frac{1 - T_i}{1 - \hat{\pi}_1} \left(Y_i - \hat{\tau}_{0,\adj,2}^{\dag}\right) + \left( 1 - \frac{1 - T_i}{1 - \hat{\pi}_1} \right) \hat{\mu}_{0,i,\adj,2}^{\dag(-i)} \right) \right\}^2.
\end{align*}
For the variance estimators of their calibrated counterparts, we use analogous expressions as given in \eqref{equ: var_jasa-cal}.

\begin{remark}
    When the design matrix does not include a constant term, $\bbE\left[\hat{\mu}_{t,i,\adj,2}^{(-i)}\right] = 0$. The corresponding influence function is therefore $\phi_{t,i,\adj, 2} = \frac{\mathbbm{1}\{T_i=t\}}{\pi_t}\left(Y_i(t) - \tau_t\right) + \left(1 - \frac{\mathbbm{1}\{T_i=t\}}{\pi_t}\right){\mu}_{t,i,\adj,2}^{(-i)}$, which differs from the scenario where the design matrix  includes a constant term. In contrast, the influence function  $\phi_{t,i,\adj, 2}^{\dag}$ remains the same regardless of whether an intercept is included in the design matrix.
\end{remark}

\bmsection{More Simulation Experiments}
\vspace*{12pt}

\bmsubsection{The empirical performance of optimizing the anti-derivative of the jackknife score}
\label{app:MLE}
We consider an alternative approach by optimizing the anti-derivative of the jackknife score for binary outcomes. The optimization is formulated as follows:
\begin{align*}
    \hat{\bbeta}_{t}^{(-i)} = \argmin_{\bbeta_{t}^{(-i)} \in \mathcal{B}_{t}} f\left(\bbeta_{t}^{(-i)}\right),
\end{align*}
where $f\left(\bbeta_{t}^{(-i)}\right) = \frac{1}{n-1}\sum_{j\neq i}\frac{\mathbbm{1}\{T_j=t\}}{\hat{\pi}_{t}}Y_j\bZ_j^{\top}\bbeta_{t}^{(-i)} - \frac{1}{n}\sum_{l=1}^{n}\log\left(1+\exp\left(\bZ_l^\top \bbeta_{t}^{(-i)}\right)\right)$.
We denote the resulting \JASA{} and \JASACal{} estimator as $\hat{\tau}_{\JASA}^{\prime}$ and $\hat{\tau}_{\JASACal}^{\prime}$, respectively.

Using the simulation setup for binary outcomes described in Section~\ref{sec:sim binary}, we compare $\hat{\tau}_{\JASA}^{\prime},\hat{\tau}_{\JASACal}^{\prime}$ with $\hat{\tau}_{\JASA}, \hat{\tau}_{\JASACal}$ estimated by optimizing \eqref{opt:general beta} (\formbeta{}). 
Here, we  present  results only for the scenario in which $\mu_{0}(\bx_\star)=\bx_{\star}^{\top}\bbeta$ and $\mu_{1}(\bx_{\star}) - \mu_{0}(\bx_{\star}) \equiv 1$. As shown in Figure \ref{fig:binomial_jasa_anti_deri}, both ways perform similarly, but $\hat{\tau}_{\JASA}$ and $\hat{\tau}_{\JASACal}$ yield smaller SD ratios than $\hat{\tau}_{\JASA}^{\prime}$ and $\hat{\tau}_{\JASACal}^{\prime}$, respectively. 
Similar trends are observed in other scenarios, though we omit those results for brevity.

\begin{figure}[htbp]
    \centering
    \includegraphics[width=0.95\linewidth]{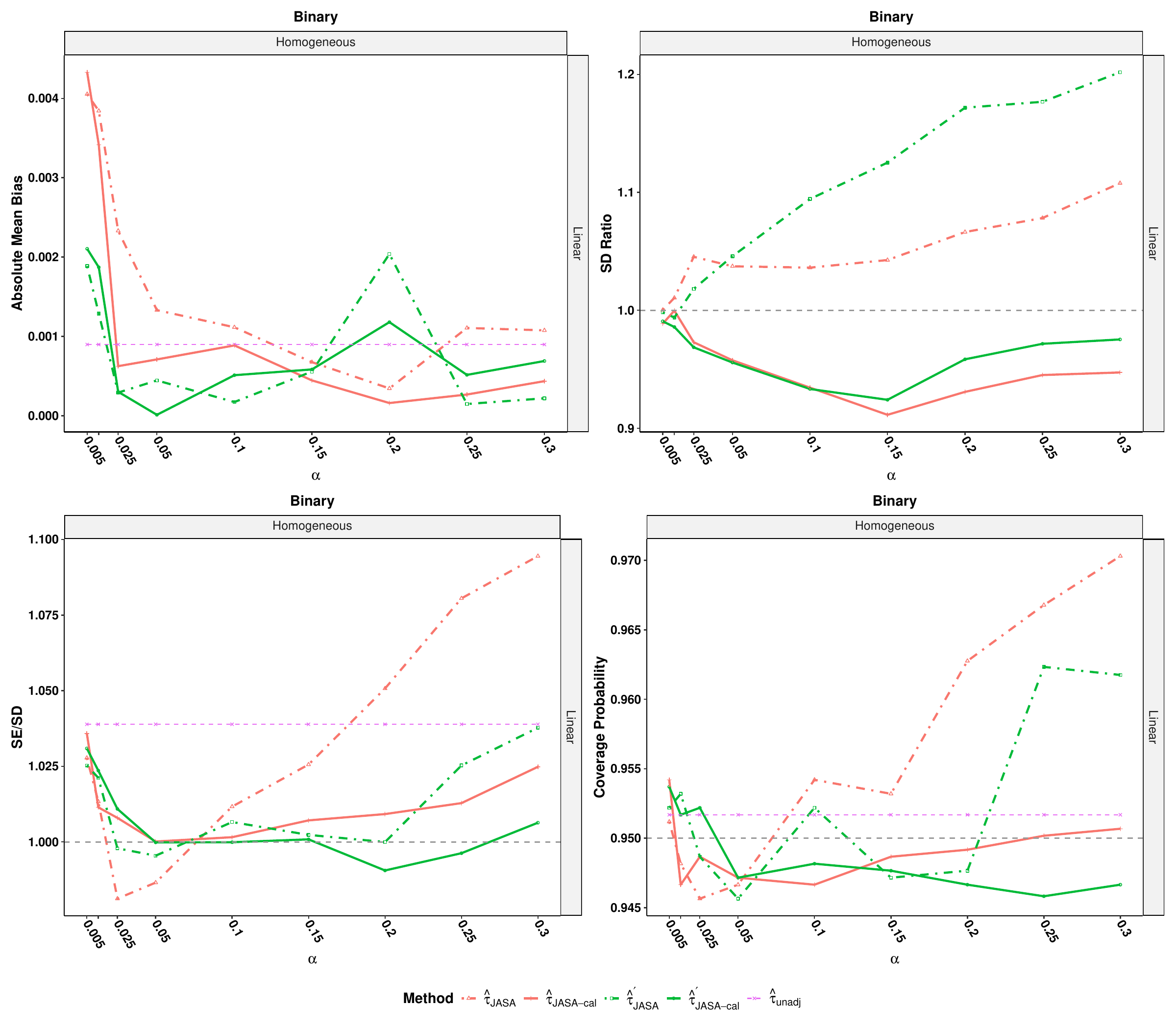}
    \caption{Performance comparison for \JASA{} and \JASACal{} by optimizing jackknife score ($\hat{\tau}_{\JASA}, \hat{\tau}_{\JASACal}$) or its anti-derivative ($\hat{\tau}_{\JASA}^{\prime},\hat{\tau}_{\JASACal}^{\prime}$) under binary outcomes.}
    \label{fig:binomial_jasa_anti_deri}
\end{figure}

\bmsubsection{Comparison of \JASA{} and \JASACal{} between \textbf{formulation (\texorpdfstring{$\bbeta$)}{}} and \textbf{formulation (\texorpdfstring{$\bm{\mu}$)}{}} under different outcome types}
In this section, we show the performance comparison for  \JASA{} and \JASACal{} by optimizing (\ref{opt:general beta}) (\formbeta{}) and (\ref{opt:general mu}) (\formmu{}) under three outcome types described in Section \ref{sec:sim}: continuous (Figure \ref{fig:gaussian_g0}), binary (Figure \ref{fig:binomial_g0}) and count (Figure \ref{fig:poisson_g0}). 

We also present the median computation time (minutes) of both approaches under binary (Figure \ref{fig:binomial_g0_time}) and count (Figure \ref{fig:poisson_g0_time}) outcome types. Computation time through optimizing \formbeta{} is much longer than that optimizing \formmu{}, particularly for binary outcomes when $\alpha > 0.1$.  
These computation-time estimates are based on parallelizing 2000 Monte Carlo replications, with leave-one-out optimization performed sequentially within each replication. This represents a conservative computational cost; in practice, users can parallelize the $n$  independent LOO subproblems with a single dataset, which has been implemented in our \texttt{HOIFCar} package.

\begin{figure}[!htbp]
    \centering    
    \includegraphics[width=0.95\linewidth, page=1]{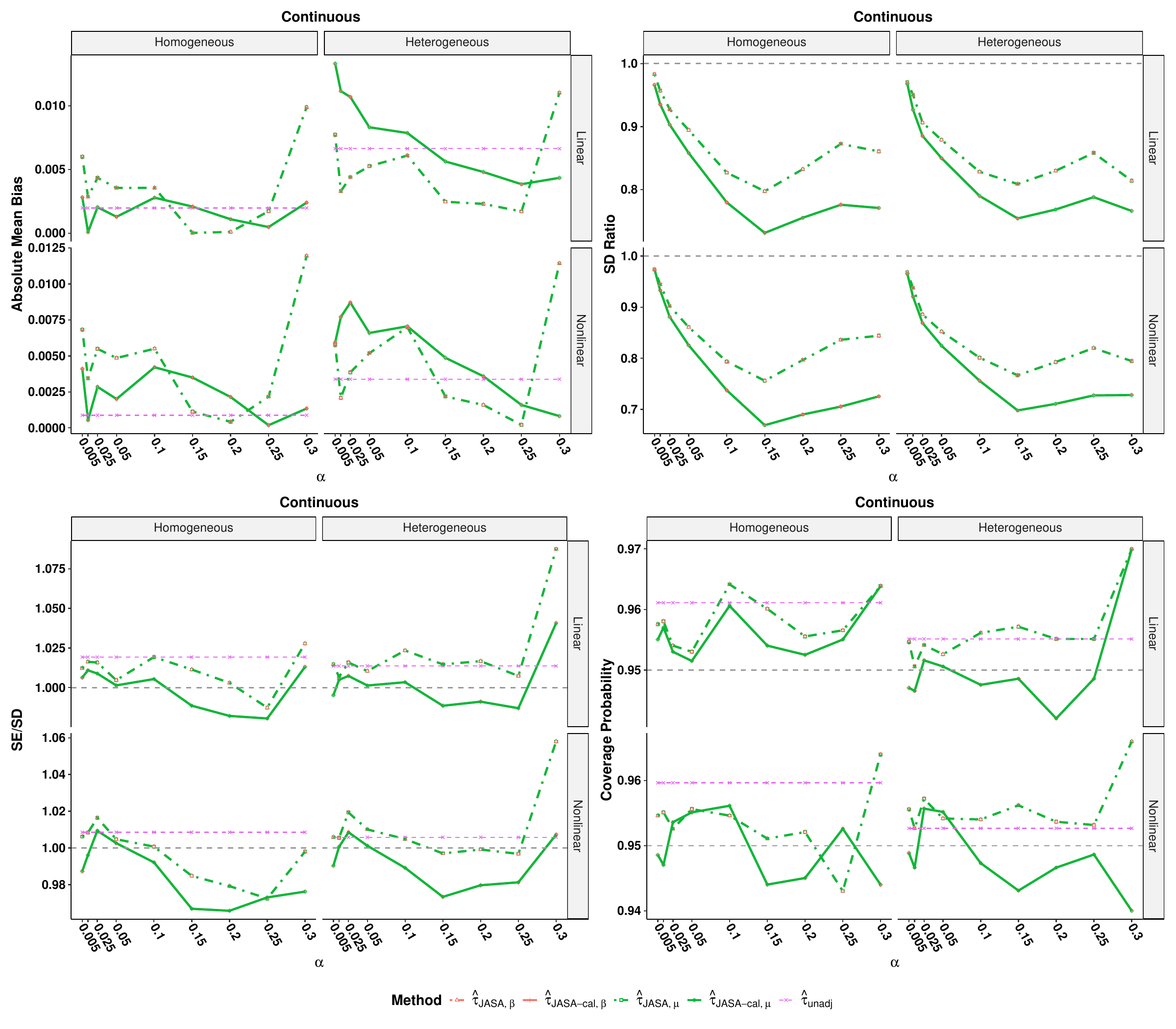}
    \caption{Performance comparison for \JASA{} and \JASACal{} by estimating $\beta$ or $\mu$ under continuous outcomes. Each panel  shows varied OR in the control group, CATE and the condition number $\alpha\coloneqq p/n$. Methods with calibration are shown in solid lines. }
    \label{fig:gaussian_g0}
\end{figure}

\begin{figure}[!htbp]
    \centering
    \includegraphics[width=0.95\linewidth, page=1]{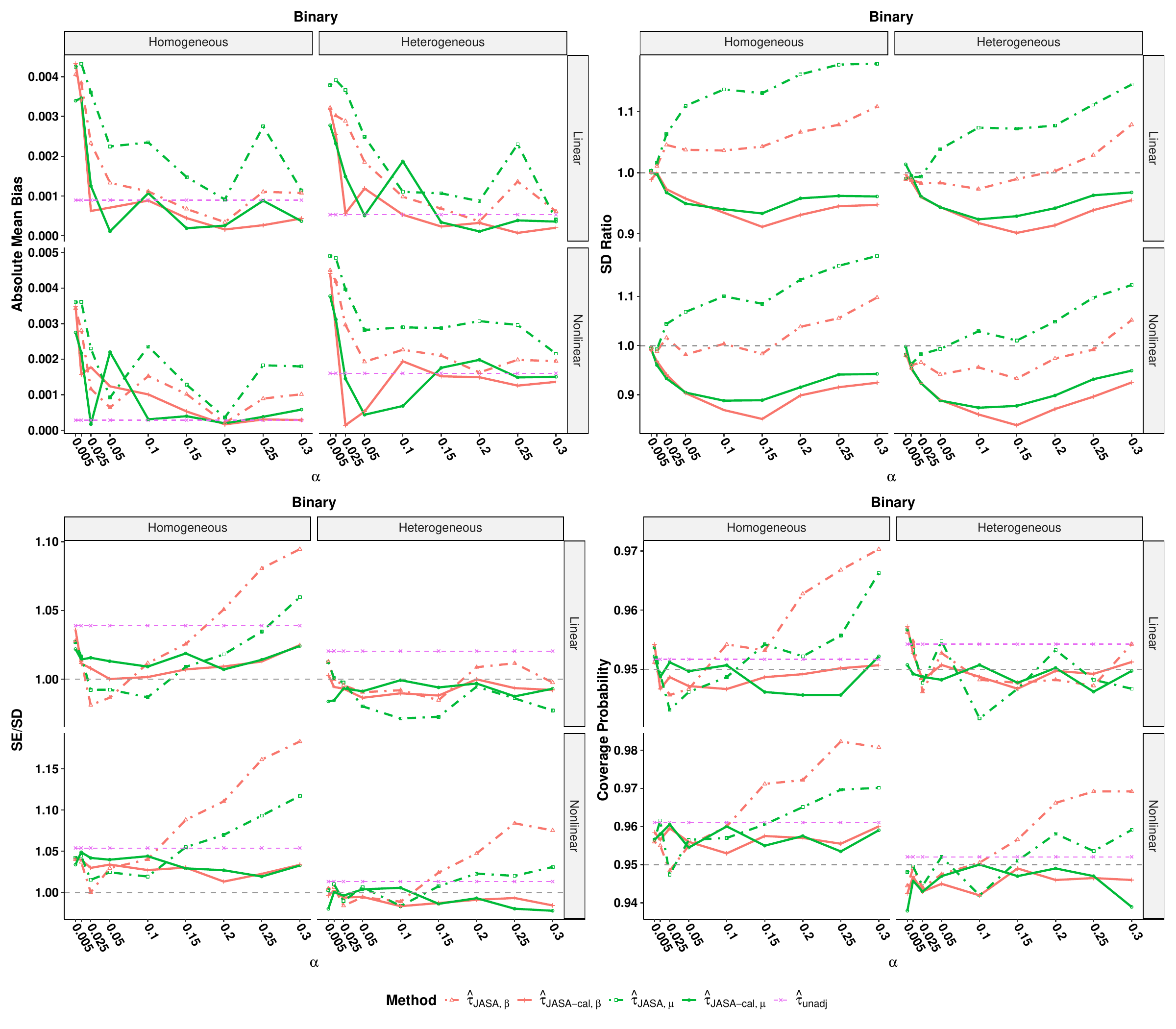}  
    \caption{Performance comparison for \JASA{} and \JASACal{} by estimating $\beta$ or $\mu$ under binary outcomes. Each panel  shows varied OR in the control group, CATE and the condition number $\alpha\coloneqq p/n$. Methods with calibration are shown in solid lines. }
    \label{fig:binomial_g0}
\end{figure}

\begin{figure}[!htbp]
    \centering   
    \includegraphics[width=0.95\linewidth, page=1]{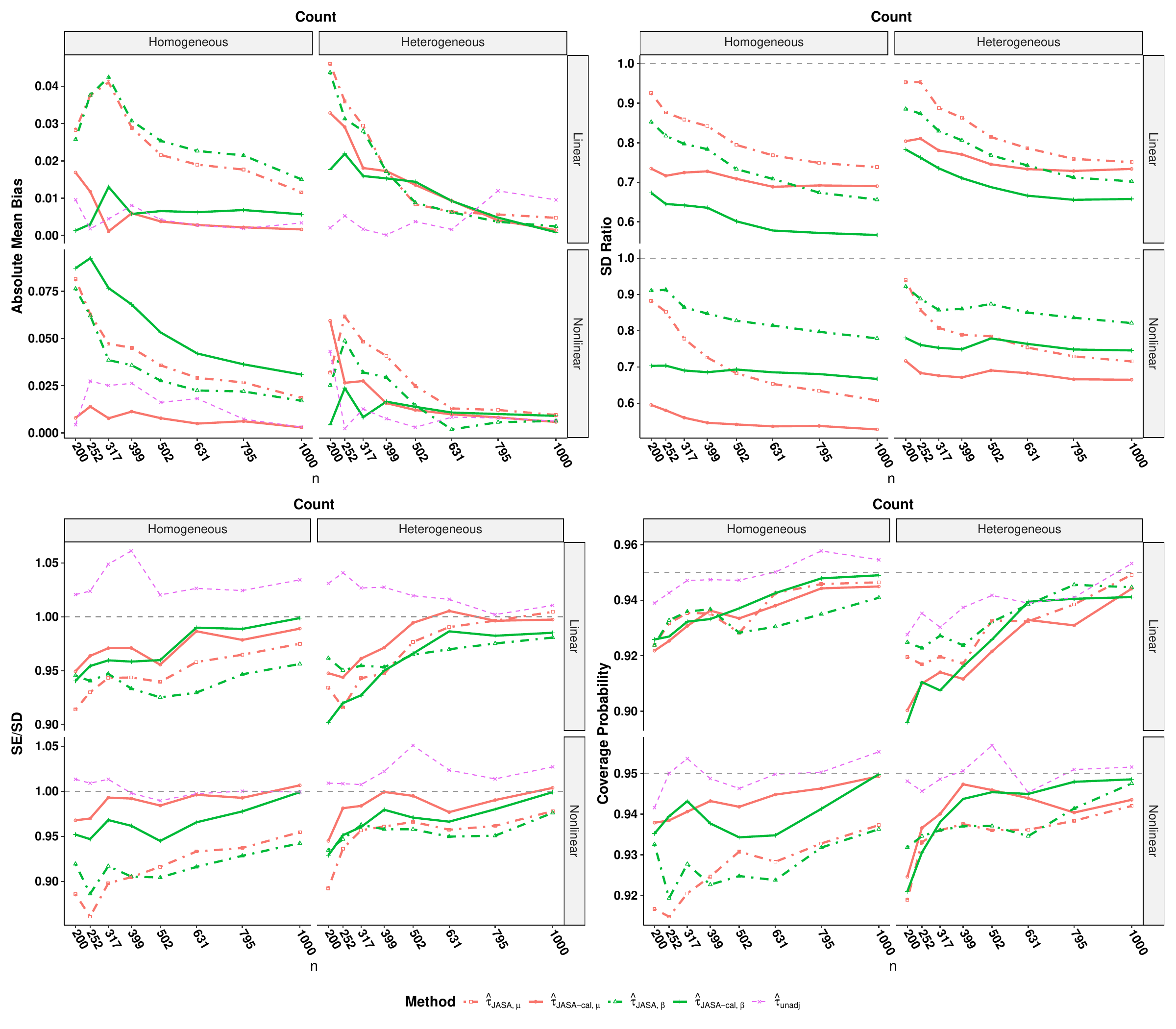}
    \caption{Performance comparison for \JASA{} and \JASACal{} by estimating $\beta$ or $\mu$ under count outcomes. Each panel  shows varied OR in the control group, CATE and the sample size $n$. Methods with calibration are shown in solid lines. }
    \label{fig:poisson_g0}
\end{figure}



\begin{figure}[!htbp]
    \centering
    \begin{subfigure}[b]{0.48\textwidth}
        \centering
        \includegraphics[width=\linewidth, page=2]{./Figures_SIM/Figure_E5_Time.pdf}
        \caption{Binary outcomes}
        \label{fig:binomial_g0_time}
    \end{subfigure}
    \hfill 
    \begin{subfigure}[b]{0.48\textwidth}
        \centering
        \includegraphics[width=\linewidth, page=2]{./Figures_SIM/Figure_E6_Time.pdf}
        \caption{Count outcomes}
        \label{fig:poisson_g0_time}
    \end{subfigure}
    \caption{Median computation time (minutes) for \JASACal{} by estimating $\beta$ or $\mu$ under different outcome types. (a) binary outcomes; (b) count outcomes. Panels show varied OR in the control group, CATE, and the condition number $\alpha\coloneqq p/n$ (for binary) or sample size $n$ (for count).}
    \label{fig:computation_time_comparison}
\end{figure}

\bmsubsection{Simulations for count outcomes}
\label{simu: count suppl infla}
We first consider one simulation scenario with count outcomes, which is similar to the setup for the continuous outcome described in the main text.
In each replication, we first generate the data matrix $\bbX_\star \in \bbR^{n\times p_\star}$ where $n=400, p_\star=60$. For  each row $\bmX_{\star,i}$, we first generate $\tilde{\bmX}_{\star,i} \sim t_3(0,\bSigma)$ with $\bSigma_{k,l}=0.1^{|k-l|},k,l \in [p_{\star}]$. The covariates are then truncated such that $\bmX_{\star,i}=\max\{\min\{\tilde{\bmX}_{\star,i},3\},-3\}$. 
The initial coefficients vector is defined as  $\bbeta = (-1)^j/j^{1/4},j\in[p_\star]$, and subsequently rescaled so that  $\|\bbeta\|_2 = 1$. 
The outcome model is nonlinear with $\mu_0(\bm{x}_\star)=\frac{1}{2}\mathrm{sign}\left(\bm{x}_\star^{\top} \bbeta\right)|\bm{x}_\star^{\top} \bbeta|^{\frac{1}{3}} + \cos(\bm{x}_\star^{\top} \bbeta) + \bm{x}_\star^{\top} \bbeta$ and the treatment effect is homogeneous with $\mu_1(\bm{x}_\star)=\mu_0(\bm{x}_\star) + 1$. We let the potential outcomes $Y_i(t)|\bX_{\star, i}=\bm{x}_\star \sim \mathrm{Poisson}(\exp(\tilde{\mu}_{t,i}))$, where $\tilde{\mu}_{t,i}=\min\{\mathbf{\mu}_{t,i}, 4\},t=0,1$. The observed outcome is $Y_i = T_iY_i(1) + (1-T_i)Y_i(0)$ with $\bbP(T_i=1)=1/3$ for $i\in [n]$.
The dimension of observed covariates $\bm{X}$ is set to $p\coloneqq \lceil n\cdot \alpha \rceil$, where $\alpha\in \{0.005,0.0125, 0.025, 0.05, 0.1, \mathbf{0.15}, 0.2, 0.25, 0.3\}$.  For $\alpha > 0.15$, additional noise variables are included, sampled independently from a $t(3)$ distribution and truncated to $[-3, 3]$.

We begin by comparing the performances of \JASA{} and \JASACal{} through estimation of  $\mu$ or $\beta$, as illustrated in Figure \ref{fig:appen_poisson_g0}. For very small values of $\alpha$, \JASACal{} fails to control bias when estimated using either $\mu$ or $\beta$, although the $\beta$-based estimator performs slightly better. As $\alpha$ increases, both estimators stabilize and their biases and other metrics become nearly identical.
In Figure \ref{fig:appen_poisson_anhecova}, we compare \JASA{}, \JASACal{} (estimated via $\mu$), and $\hat{\tau}_{\OLS}$. 
When $\alpha$ is small, $\hat{\tau}_{\OLS}$ maintains a small bias. However, it fails to control  bias as $\alpha$ continues to increase and the SD is underestimated.  
On the other hand, our proposed estimator $\hat{\tau}_{\JASACal}$ exhibits a higher bias than $\hat{\tau}_{\JASA}$ when $\alpha$ is very small, with bias inflation occurring after calibration. However, the problem diminishes as $\alpha$ increases. Moreover, the SD ratio of $\hat{\tau}_{\JASACal}$ decreases as $\alpha$ increases to 0.15, remaining substantially smaller than that of $\hat{\tau}_{\JASA}$. This further demonstrates the benefits of calibration when $\alpha$ is not very small.

\begin{figure}[htbp]
    \centering
    \includegraphics[width=0.95\linewidth]{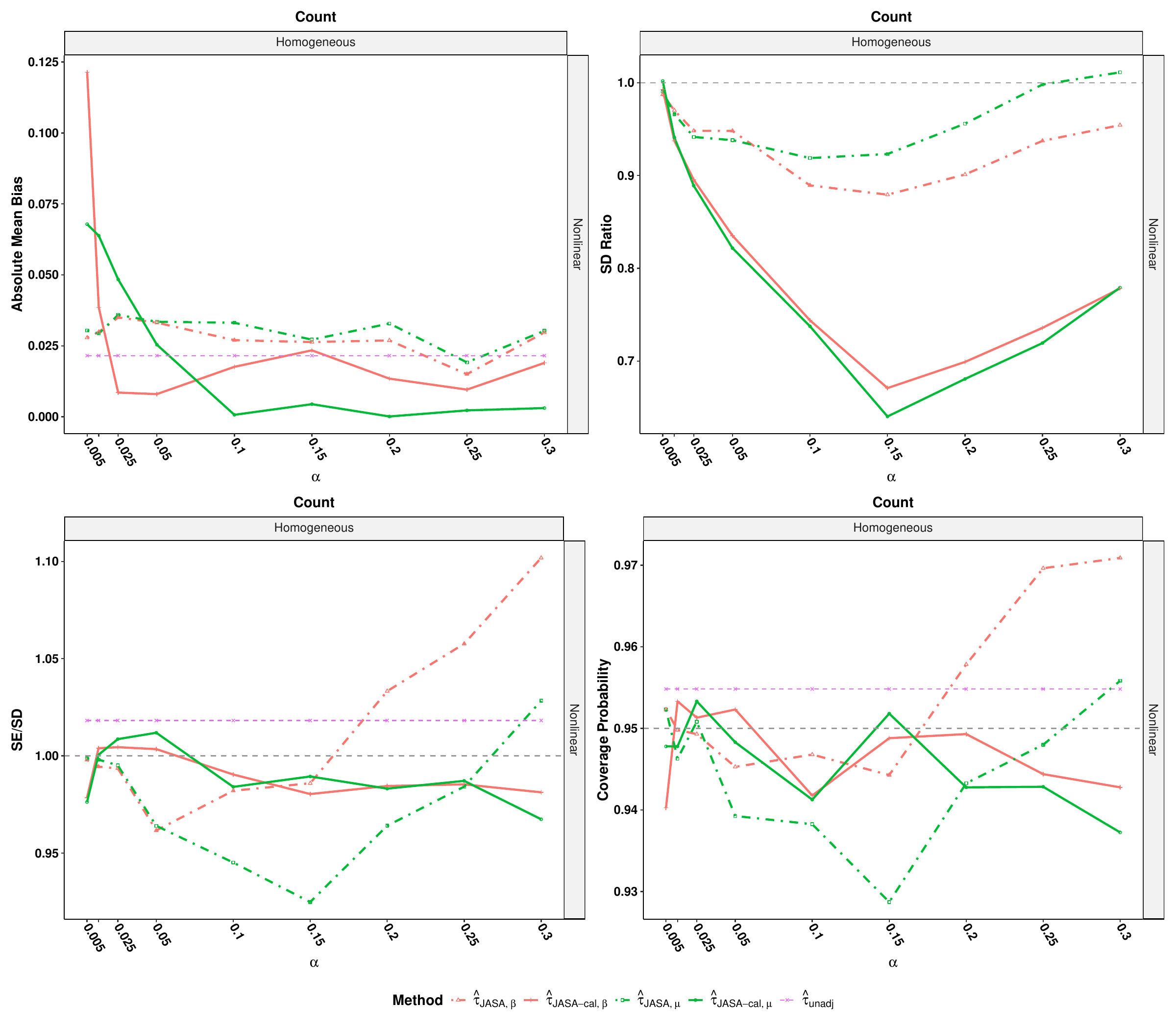}
    \caption{Performance comparison for \JASA{}, \JASACal{} by estimating $\bbeta$ and $\bm \mu$ under count outcomes. The condition number $\alpha\coloneqq p/n$. Methods with calibration are shown in solid lines.}
    \label{fig:appen_poisson_g0}
\end{figure}

\begin{figure}[htbp]
    \centering
    \includegraphics[width=0.95\linewidth]{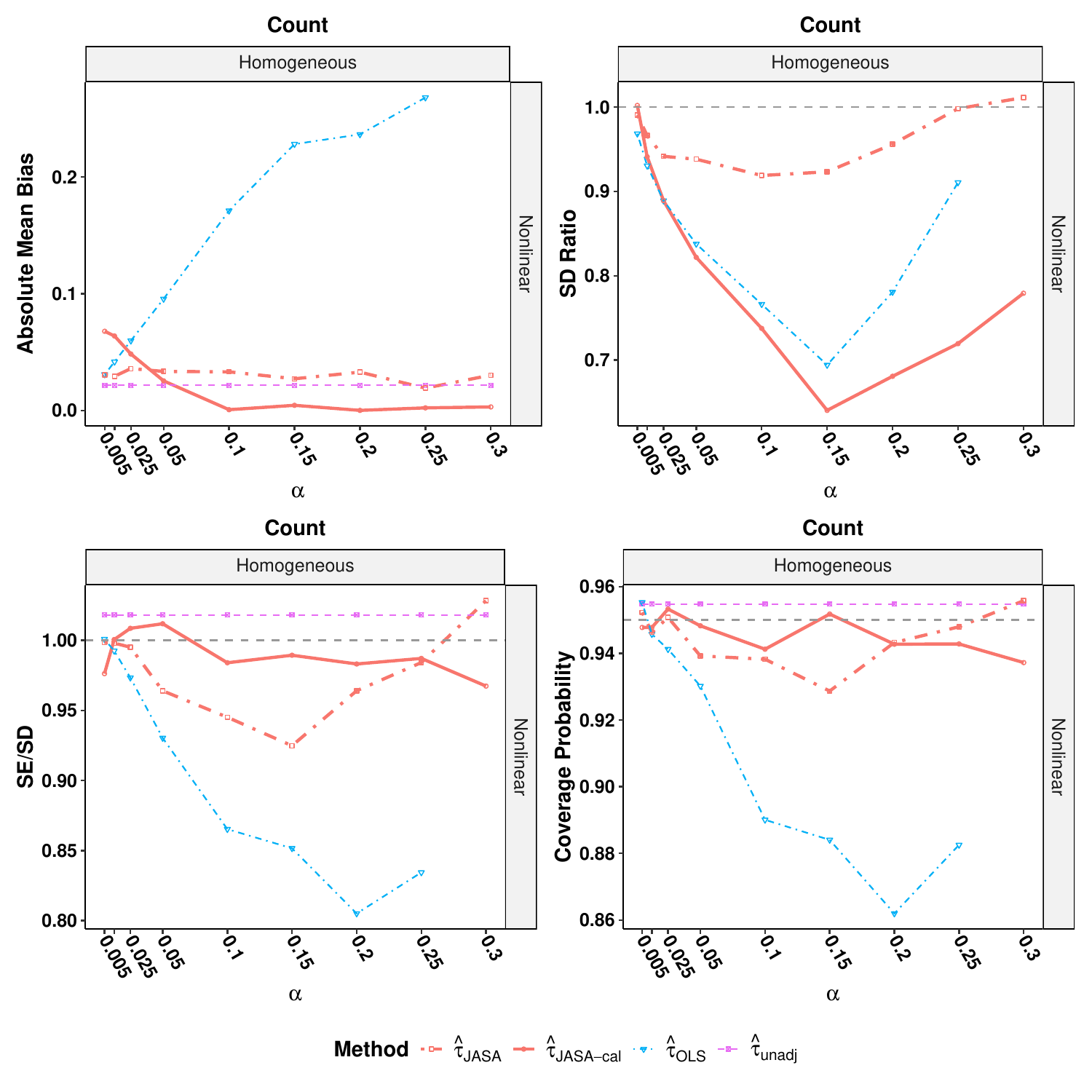}
    \caption{Performance comparison for  \JASA{}, \JASACal{}, $\hat{\tau}_{\OLS}$ and unadjusted estimator under count outcomes. The condition number $\alpha\coloneqq p/n$. Methods with calibration are shown in solid lines. Here \JASA{} and \JASACal{} are estimated by optimizing $\mu$.}
    \label{fig:appen_poisson_anhecova}
\end{figure}

Furthermore, we examine an alternative simulation setup by \citet{cohen2024no}. Specifically, we first generate a data matrix $\bX \in \bbR^{n\times 1}$ with each entry following a uniform distribution on $[-5, 5]$. Then we generate the potential outcomes as follows
\begin{align*}
    Y_i(0) \sim \text{Poisson}\left(72 - 0.45\exp(x_i)\right), \quad Y_i(1) \sim \text{Poisson}\left(\exp(x_i)\right).
\end{align*}
The treatment assignment follows a \textrm{Bernoulli} distribution with $\bbP(T_i=1)=3/5$ for $i\in [n]$. We vary the sample size $n \in \{300,500,\ldots,1500\}$ and perform $K=2000$ Monte Carlo replications. The true ATE $\tau$ is  approximated via sample averaging with an increased sample size of $n=10^7$. We compare the performance of \JASA{}, \JASACal{} (estimating through $\mu$ or $\beta$, estimating using $\hat{\pi}$ or $\pi$) with  LOO estimator $\hat{\tau}_{\loo}$, its calibrated version $\hat{\tau}_{\loocal}$ \cite{cohen2024no} and the alternative version $\hat{\tau}_{\loocalout}$ defined in Section \ref{sec:sim}. 

Results are shown in Figure \ref{fig:no harm vary n}. As expected, the calibrated estimators $\hat{\tau}_{\gbcal},\hat{\tau}_{\loocal}$ exhibit improved SD ratio below 1 after calibration, compared to their uncalibrated counterparts $\hat{\tau}_{\gob}, \hat{\tau}_{\loo}$. All of these estimators maintain negligible bias compared to $\hat{\tau}_{\unadj}$, even when $n$ is relatively small. However, $\hat{\tau}_{\JASA}$ displays higher bias than the others when  $\hat{\pi}$ is used for estimation, and its bias converges to zero at a  slower rate. In contrast, the bias of $\hat{\tau}_{\JASA}$ estimated with the true $\pi$ decreases to zero rapidly. Notably, for $\hat{\tau}_{\JASACal}$, not only does the bias converge to zero quickly, but the SD ratio is also smaller than 1 when $n \geq 500$, regardless of how the estimator is computed.

\begin{figure}[htbp] 
    \centering
    \includegraphics[width=0.95\linewidth]{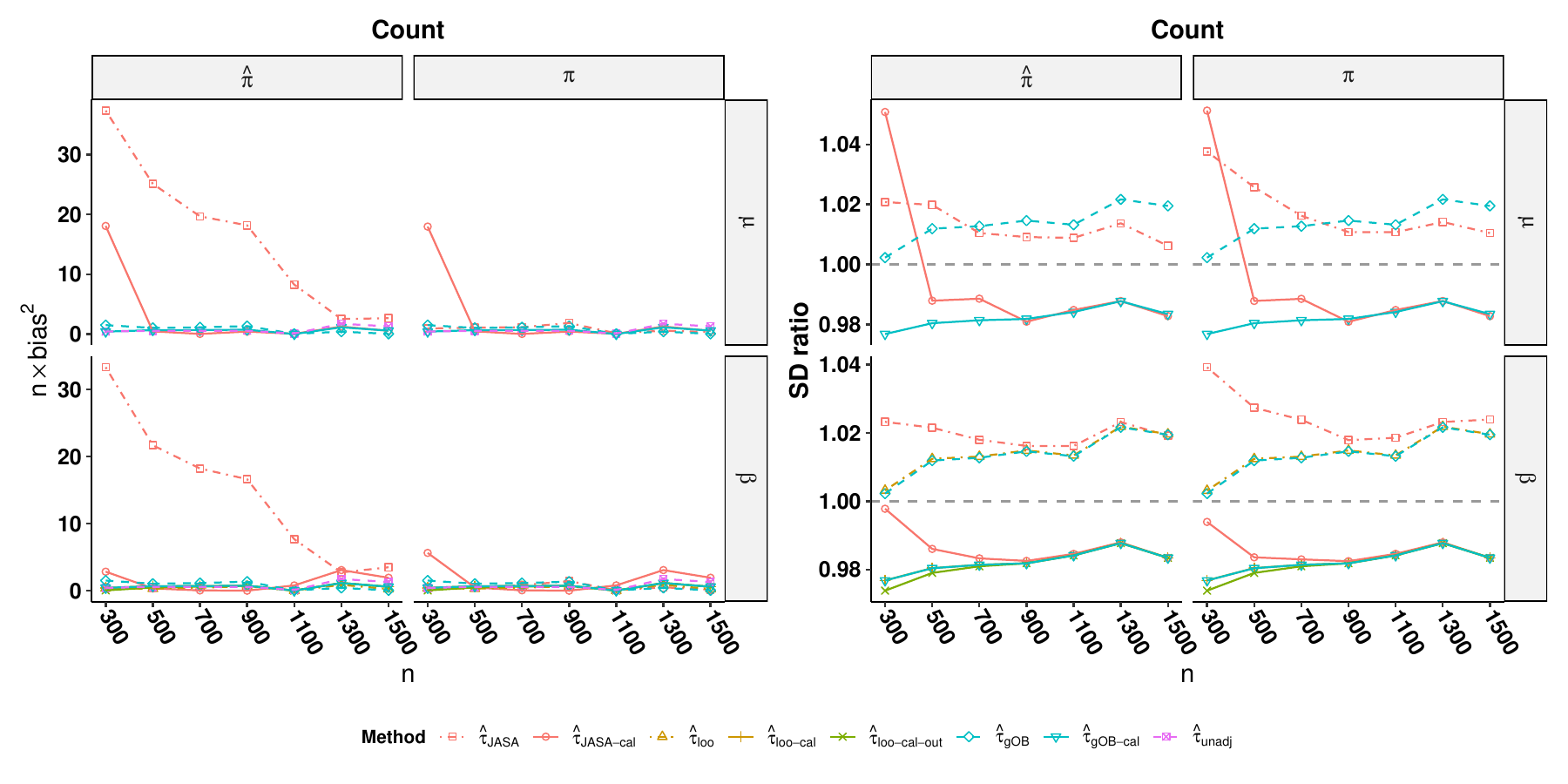}
    \caption{Bias and standard deviation (SD) ratio for the \JASA{} and \JASACal{} estimators, alongside other estimators in a count outcomes scenario. The panels labeled $\hat{\pi}$ and $\pi$ indicate whether the estimators were computed using the empirical assignment $\hat{\pi}$ or the true value $\pi$, respectively. Similarly, the panels labeled $\mu$ and $\beta$ denote estimation via the outcome model parameter $\mu$ or the calibration parameter $\beta$.}
    \label{fig:no harm vary n}
\end{figure}

\bmsubsection{The distribution of \texorpdfstring{$\hat{\tau}_{\JASA}$}{} and \texorpdfstring{$\hat{\tau}_{\JASACal}$}{}}
\label{app:qqplots}

To assess the asymptotic normality of $\hat{\tau}_{\JASA}$ and $\hat{\tau}_{\JASACal}$, we choose the simulation settings in the cases of continuous and binary outcomes and $\alpha = 0.15$, as described in Sections~\ref{sec:sim continuous} and~\ref{sec:sim binary}. We present QQ plots of these estimators in Figure~\ref{fig:qqplot}.
\begin{figure}[htbp]
    \centering
    \includegraphics[width=0.95\linewidth]{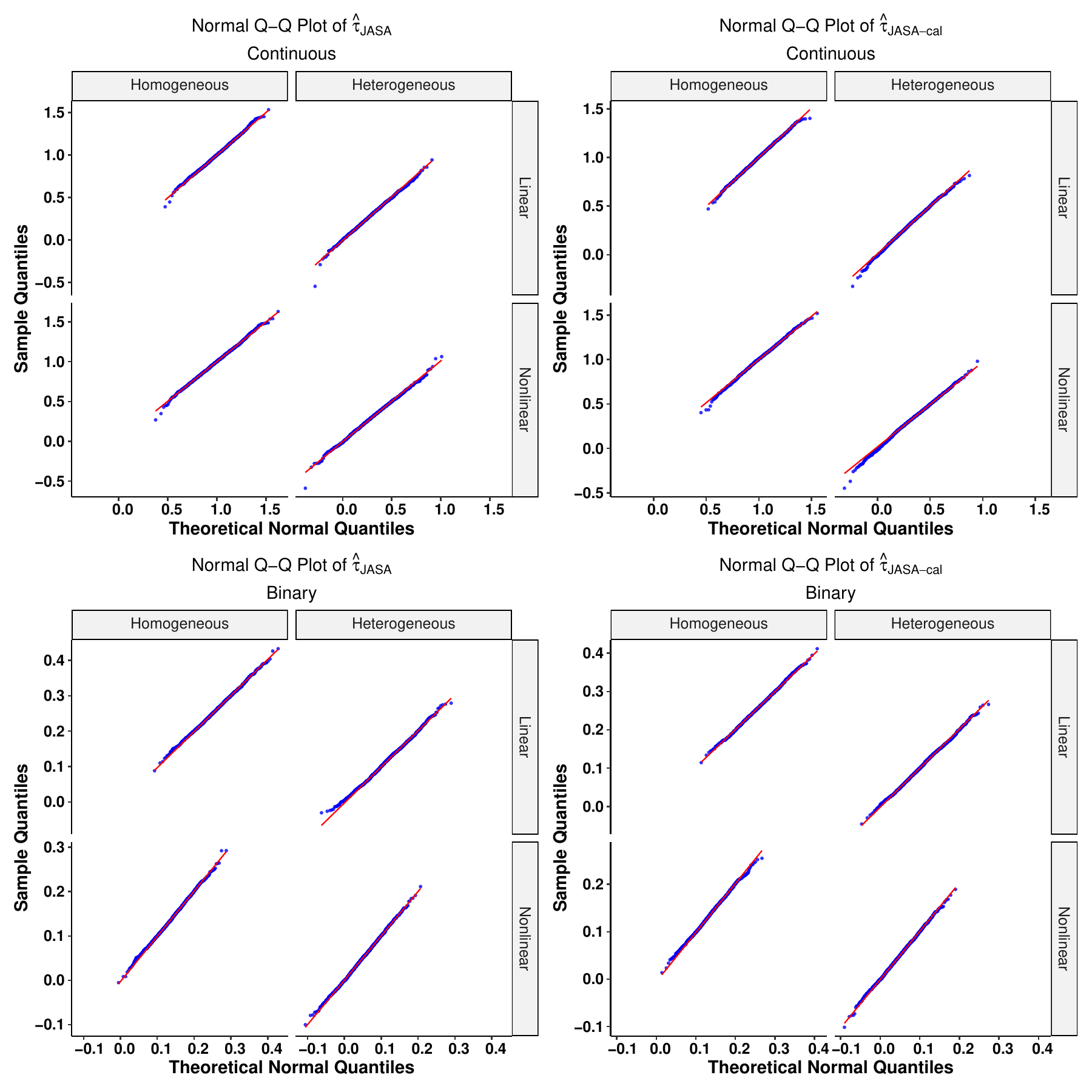}
    \caption{QQ plots of $\hat{\tau}_{\JASA}$ and $\hat{\tau}_{\JASACal}$ in the cases of continuous and binary outcomes.}
    \label{fig:qqplot}
\end{figure}




\end{document}